% THIS IS AN EXAMPLE DOCUMENT FOR VLDB 2012
% based on ACM SIGPROC-SP.TEX VERSION 2.7
% Modified by  Gerald Weber <gerald@cs.auckland.ac.nz>
% Removed the requirement to include *bbl file in here. (AhmetSacan, Sep2012)
% Fixed the equation on page 3 to prevent line overflow. (AhmetSacan, Sep2012)

\documentclass{vldb}
\pdfoutput=1
\usepackage{graphicx}
\usepackage{graphics}
%%\usepackage{array}
%%\usepackage{enumerate}
%%\usepackage{url}
%%\usepackage{eufrak}
%%\usepackage[colorlinks=true
%%,urlcolor=black
%%,anchorcolor=black
%%,citecolor=black
%%,filecolor=black
%%,linkcolor=black
%%,menucolor=black
%%,pagecolor=black]{hyperref}
\usepackage{algorithm}
\usepackage{algorithmic}
\usepackage{amsmath,wrapfig, multirow,float,caption}
\usepackage{subcaption}
\newcommand{\eat}[1]{}
% allows the coloring of the background of cells and rows in tables.
% Use: \rowcolor[gray]{.80} or \cellcolor{grayL}
% also you can have colorfull text. For instance:
% \textcolor{color}{words to be in color}
%%\usepackage{color}
\usepackage{colortbl}
\definecolor{Brown}{cmyk}{0, 0.8, 1, 0.6}

\DeclareCaptionType{copyrightbox}
\newdef{theorem}{Theorem}
\newdef{problem}{Problem definition}
\newtheorem{lemma}{Lemma}
\newtheorem{definition}{Definition}
\newtheorem{example}{Running Example}
\newcommand{\new}[1]{{\color{blue}#1}}

\begin{document}
%\title{Collaborative Crowdsourcing : Opportunities and Optimization}
\title{Optimization in Knowledge-Intensive Crowdsourcing}

\newcommand{\sys}{{\sc{SmartCrowd}}}
\newcommand{\userindex}{{\sc{VirtualWorker}}}
\newcommand{\Index}{{\sc{C-dex}}}
\newcommand{\vindex}{{\sc{C-dex}$^{+}$}}

%\numberofauthors{5}
\author{
\alignauthor Senjuti Basu Roy$^{\dag}$, Ioanna Lykourentzou$^{\dag\dag}$,
 Saravanan Thirumuruganathan$^{\ddag,\triangle}$ \\
 Sihem Amer-Yahia$^{\diamond}$, Gautam Das$^{\ddag,\triangle}$.  \\
\affaddr{
$^{\dag}$UW Tacoma,
$^{\dag\dag}$CRP Henri Tudor/INRIA Nancy Grand-Est,
$^{\ddag}$UT Arlington,
$^{\triangle}$QCRI,
$^{\diamond}$ CNRS, LIG
}
{\email{
	senjutib@uw.edu, 
	ioanna.lykourentzou@\{tudor.lu,inria.fr\},
	saravanan.thirumuruganathan@mavs.uta.edu, 
    sihem.amer-yahia@imag.fr,	
	gdas@uta.edu}
}
}

\maketitle

\pagestyle{plain}
\pagenumbering{arabic}
\maketitle
%Abstract
\begin{abstract}
We present \sys, a framework for optimizing collaborative
knowledge-intensive crowdsourcing. \sys\ distinguishes itself by
accounting for human factors in the process of assigning tasks to
workers. Human factors designate workers' expertise in different
skills, their expected minimum wage, and their availability. In \sys,
we formulate task assignment as an optimization problem, and rely on
pre-indexing workers and maintaining the indexes adaptively, in such a
way that the task assignment process gets optimized both
qualitatively, and computation time-wise. We present rigorous theoretical analyses of the  optimization problem and propose optimal and approximation algorithms. We finally perform extensive performance and quality experiments using real and synthetic data to 
demonstrate that adaptive indexing in \sys\ is necessary to achieve
efficient high quality task assignment.
\end{abstract}

%\numberofauthors{1}
%\author{
%\alignauthor
% Senjuti Basu Roy$^{\dag}$, Ioanna Lykourentzou$^{\dag\dag}$,
% Saravanan Thirumuruganathan$^{\ddag,\triangle}$ \\
% Sihem Amer-Yahia$^{\diamond}$, Gautam Das$^{\ddag,\triangle}$.  \\
%\affaddr{
%$^{\dag}$UW Tacoma,
%$^{\dag\dag}$CRP Henri Tudor/INRIA Nancy Grand-Est,
%$^{\ddag}$UT Arlington,
%$^{\triangle}$QCRI,
%$^{\diamond}$ CNRS, LIG
%}
%{\email{
%	senjutib@uw.edu, 
%	ioanna.lykourentzou@\{tudor.lu,inria.fr\},
%	saravanan.thirumuruganathan@mavs.uta.edu, 
%    sihem.amer-yahia@imag.fr,	
%	gdas@uta.edu}
%}
%}

\vspace{-0.15in}
\section{Introduction}
Knowledge-intensive crowdsourcing (KI-C) is acknowledged as one of the most promising  areas of next-generation crowdsourcing \cite{Kittur:2013:FCW:2441776.2441923}, mostly for the critical role it can play in todays knowledge-savvy economy. KI-C refers to the collaborative creation of knowledge content (for example Wikipedia articles, or news articles) through crowdsourcing. Crowd workers, each having a certain degree of expertise, collaborate and ``build'' on each other's contributions to gradually increase the quality of each knowledge piece (hereby referred to as ``task''). Despite its importance, no work or platform so far has tried to optimize KI-C, a fact which often results in poor task quality and undermines the reliability of crowds for knowledge intensive applications.

In this paper we propose \emph{\sys}, an optimization framework for knowledge-intensive collaborative crowdsourcing. \sys\ aims at improving KI-C by optimizing one of its fundamental processes, i.e., worker-to-task assignment, while taking into account the dynamic and uncertain nature of a real crowdsourcing environment \cite{DBLP:conf/dbcrowd/RoyLTAD13}. 

%We distinguish ourselves from the existing team formation problems in social networks~\cite{cikm10,www12} in several ways: Crowdsourcing simply entails a huge scale of diverse worker pool unknown to each other, who do not necessarily need the {\em synergy} of a ``team'' to work together, precluding the adaptation of team formation models applicable to KI-C tasks.  Second and more importantly, crowdsourcing deals with the unique challenge of {\em human factors} in a dynamic environment that are rarely necessary to be considered in a team formation problem in social networks.  
 
Consider the example of a KI-C application offering news articles on demand as a service to interested stakeholders, such as publication houses, blogs, individuals, etc. 
Several thousands of workers are potentially available to compose thousands of news articles collaboratively. It is easy to imagine that such an application needs to judiciously assign workers to tasks, so as to ensure high quality article delivery while being cost-effective. 
%The desired outcome of such a KI-C platform is therefore to produce as many qualitative news articles as possible, by using its available worker pool in the most cost-effective manner. 
Two main challenges need to be investigated: 1) How to formalize the KI-C worker-to-task assignment problem?  2) How to solve the problem efficiently so as to warrant the desired quality/cost outcome of the KI-C platform, while taking into account the unpredictability of human behavior and the volatility of workers in a realistic crowdsourcing environment?

\sys\ has been envisioned as follows: 
 First, we formalize the {\bf KI-C worker-to-task assignment as an optimization problem} (Section \ref{dm}). In our formulation, the resources are the worker profiles (knowledge skill per domain, requested wage) and the tasks are the news articles (assumed to have a minimum quality, maximum cost and skills needed).\footnote{\small With the availability of historical information, worker profiles (knowledge skills and expected wage) can be learned by the platform. This complex profile learning problem is an independent research problem in its own merit, orthogonal to this work.} %The interested reader is referenced to [REFERENCES] for further information on this topic.} 
The objective function is formalized so as to guarantee that each task surpasses a certain quality threshold, stays below a cost limit, and that workers are not over or under utilized. Given the innate uncertainty induced by human involvement, we also use \emph{probabilistic modeling} to include one of the human factors (formalized as the workers' acceptance ratio\footnote{\small Acceptance ratio of a worker is the probability that she accepts a recommended task.}) in the problem formulation.

Then, we argue that it may be prohibitively expensive to assign workers to the tasks optimally
% to maximize the objective function //NOTE: It is to maximize the objective function respecting the given constraints (sentence too long)
 in real time and reason about the necessity of \emph{pre-computation} for efficiency reasons. We propose {\bf index design (\Index) as a means to efficiently address the KI-C optimization problem} (Section \ref{smartcrowd}). One of the novel contributions of this work is in proposing how the \Index\ solution can be used to pre-compute {\em crowd indexes} for KI-C tasks, which can be used efficiently afterwards during the actual worker-to-task assignment process. 
 %Like traditional query processing benefits from index design, our index design solution adapts similar spirit to 
We show how KI-C tasks %(i.e., queries) 
could benefit from crowd-indexes to efficiently maximize the objective function.
%(i.e., efficiently executing an optimal query execution plan).

Third, we examine {\bf the problem under dynamic conditions of the crowdsourcing environment}, where new workers may subscribe, existing ones may leave, worker profiles may change over time,
%\footnote{\small How to adaptively learn the profile is orthogonal to our problem, but \sys\ makes use of the updated worker profile, whenever available.} //NOTE: We already mention that adaptive worker profile learning is orthogonal to our problem in footnote # 1.
 and workers may accept or decline recommended tasks. To tackle such unforeseen scenarios, \sys\ proposes optimal
%principled solutions by offering 
\emph{adaptive maintenance of the pre-computed indexes}, while enforcing the non-preemption of workers.\footnote{\small Non-preemption ensures that a worker cannot be interrupted after she is assigned to a task.}

Fourth, we prove {\bf several theoretical properties of the \Index\ design problem}, such as \emph{NP-Completeness} (using a reduction from the Multiple Knapsack Problem \cite{DBLP:books/fm/GareyJ79}), as well as \emph{sub-modularity} and \emph{monotonicity} under certain conditions. This in-depth theoretical analysis is critical to understand the problem complexity, as well as to design efficient principled solutions with theoretical guarantees. 

Finally, we propose {\bf novel optimal and approximate solutions for the index design and maintenance problem, depending on the exact problem conditions.} Our optimal solution uses an integer linear programming (ILP) approach (Section \ref{alg1}). For the case where optimal index building or maintenance is too expensive, we propose \emph{two efficient approximate strategies}: 1) a greedy computation and maintenance of \Index\ that needs polynomial computation time and admits a constant time approximation factor under certain conditions, and 2) \vindex, a strategy that is an optimized version of \Index, which leverages the clustering of similar workers (based on the notion of ``virtual worker'') to warrant further efficiency (Section \ref{approx}).

%[NOTE: Senjuti make a pass in case the ILP usage changes due to sub-modularity]
 
%Finally, for the cases where the optimal index building or maintenance is expensive, \sys\ proposes {\bf two efficient approximate strategies}: (1) Strategy 1 - It is a greedy index building and maintenance strategy (Section~\ref{greedycdex}). These algorithms take polynomial computation time and admits constant time approximation factor under certain conditions. (2)  Strategy 2 - \vindex\ (Section~\ref{alg2}) is an optimized version of \Index\ that leverages clustering of the workers to enable further efficiency. It uses the notion of ``virtual worker'' to warrant faster pre-computation and  adaptive maintenance of the indexes.
% for worker unavailability and replacement, worker addition/deletion, or when an updated worker profile is available to use. 

We design comprehensive experimental studies (Section~\ref{exp}) both with real-users and simulations to validate \sys, qualitatively and efficiency wise. With an appropriate and intelligent adaptation of Amazon Mechanical Turk (AMT), we conduct extensive quality experiment involving real workers to compose news articles. Such an adaptation is non-trivial and needs a careful design of the validation strategies, since AMT (or any other platforms) does not yet support KI-C tasks. %to use the system for s since current paid crowdsourcing platforms (like Amazon Mechanical Turk - AMT) are not designed to handle knowledge-intensive tasks or task assignment optimization. 
%In this, we ask crowd workers to develop knowledge pieces on a series of current topics, ranging from political unrest in Egypt, NSA document leakage, to electric cars and global warming. Workers are profiled in terms of skills and wage and then assigned to tasks using our proposed approach. Their final, completed tasks, are evaluated against several quality factors (completeness, grammar, neutrality, clarity, timeliness and added-value) through an evaluation procedure, which itself was also crowdsourced. Results show that the use of \sys\ indeed leads to better results compared to both an ``on-line greedy'' assignment algorithm (which is one of the few existing algorithms in current literature related to ILP-based crowdsourcing optimization), and a ``benchmark one'', which models the typical worker-to-task assignment of nowadays commercial platforms. 
Extensive simulation studies are used to further investigate our proposed framework, in terms of quality and efficiency. In these, we compare against several baseline algorithms, including one of the latest state-of-the-art techniques~\cite{chienJuHo} for online task assignment. 
%generalization a
%Our empirical study also involves designing several baseline algorithms, including one state-of-the-art technique~\cite{chienJuHo} for online task assignment. %generalization ability, we compare with three algorithms: i) the ``benchmark'' ii) the ``online greedy'', and iii) an ``online-optimal'', which performs its assignments solving the ILP problem with the workers who are currently on-line but without using index pre-computation. 
The obtained results demonstrate that 
%compared to these competitors, 
\emph{\Index\ and \vindex\ achieve 3x improvement}, both qualitatively and efficiency wise, corroborating the necessity of pre-computed indexes and their adaptive maintenance for the KI-C optimization problem. 
%Specifically, we show that: i) the pre-computation of \index\ and \vindex\ guarantees \emph{efficiency}, an aspect absolutely necessary since task assignment decisions need to be taken fast within an environment with hundreds of thousands workers and tasks, and ii) our proposed approximation on-line strategies of incremental index maintenance achieve near-optimal assignment \emph{performance}. 

%\begin{example}[Working example - continuation]\
\eat{
\begin{example}

For the purpose of illustration, a running example is used, consisting of a minuscule version of the news article composition task. Assume that the platform consists of 6 workers to compose 3 news articles. For simplicity, let us assume that each task requires only one skill. Worker profiles and the task descriptions are depicted in table \ref{tab:workers} and \ref{tab:tasks} respectively. This example will be repeatedly referred to, in order explain the proposed objective function in \sys, proposed index-based solutions using \Index\ and \vindex, and their adaptive maintenance thereof.

\begin{table*}%[!h]
\parbox{.45\linewidth}{
\begin{tabular}{| p{8em} | l | l | l | l | l | l |}
    \hline
    \textbf{Worker} & $\mathbf{u_1}$ & $\mathbf{u_2}$ & $\mathbf{u_3}$ & $\mathbf{u_4}$ & $\mathbf{u_5}$ & $\mathbf{u_6}$ \\ \hline
    \textbf{Skill} & 0.1 & 0.3  & 0.2 & 0.6 & 0.4 & 0.5\\ \hline
    \textbf{Wage} & 0.05 & 0.25  & 0.3 & 0.7 & 0.3 & 0.4\\ \hline
	\textbf{Acceptance ratio} & 0.8 & 0.7  & 0.8 & 0.5 & 0.6 & 0.9 \\   
    \hline
  \end{tabular}
  \caption{Workers Profiles}
  \label{tab:workers}
}
\hfill
\parbox{.45\linewidth}{
\begin{tabular}{| p{10em} | l | l | l |}
    \hline
    \textbf{Task} & $\mathbf{t_1}$ & $\mathbf{t_2}$ & $\mathbf{t_3}$\\ \hline
    \textbf{Quality threshold} & 0.7 & 0.8  & 0.9\\ \hline
    \textbf{Cost threshold} & 0.9 & 1.0 & 2.0 \\ 
    \hline
  \end{tabular}
\caption{Task Descriptions}\label{tab:tasks}
}
\end{table*}

\end{example} 

}
Our main contributions are summarized as follows:
\vspace{-0.05in}
\begin{enumerate}
\item We initiate the study of optimizing knowledge-intensive crowdsourcing (KI-C), formalize the problem, and propose rigorous theoretical analyses. 
\vspace{-0.05in}
\item We propose the necessity of index design and dynamic maintenance to address the KI-C optimization problem.  We propose novel optimal and approximate solutions (\Index, greedy \Index, and \vindex) for index creation as well as adaptive maintenance.
\vspace{-0.05in}
\item We conduct extensive experiments on real and simulated crowdsourcing settings to demonstrate the effectiveness of our proposed solution qualitatively and efficiency wise.
\end{enumerate}

\noindent Sections~\ref{dm} and \ref{smartcrowd} contain the settings, problem statements, and theoretical analyses. Sections~\ref{alg1} and \ref{approx} have the solutions. Sections~\ref{exp} and \ref{rel} contain the experiments and related work. We conclude in Section~\ref{conc}.
\section{KI-C Problem Settings} \label{dm}
\vspace{-0.1in}
\subsection{Data Model}
We are given a set of workers $\mathcal{U}=\{u_1,u_2,\ldots,u_n\}$, a
set of skills $\mathcal{S}=\{s_1,s_2, \ldots, s_m\}$ and a set of
tasks $T=\{t_1,t_2,\ldots,t_l\}$. In the context of
collaborative editing, skills represent topics such as Egyptian Politics, Play Station, or NSA document leakage. Tasks represent the documents that are being edited collaboratively. \\

\noindent{\bf Skills:} A skill is the knowledge on a particular topic
and is quantified in a continuous scale between $[0,1]$. It is
associated to workers and tasks. When associated to a worker, it
represents the worker's expertise of a topic. 
When associated to a task, a skill represents the minimum quality requirement for that task.
%quality associated to achieve that task.  
%A task may have a minimum quality requirement for each associated skill.
A value of $0$ for a skill reflects no expertise
of a worker for that skill. For a task, $0$ reflects no requirement
for that skill.\\

%% task or a worker either has have only a subset of skills in  $\mathcal{S}$
%% positive skill quality (the remaining skills will have quality
%% $0$). How the skills are obtained is an orthogonal research problem,
%% and our settings assume that the skill set along with their respective
%% qualities are provided.

\noindent {\bf Workers:} Each worker $u \in \mathcal{U}$ has a {\em
  profile} that is  a vector, $\langle u_{s_1}, u_{s_2}, \ldots,
u_{s_m}, w_u, p_u \rangle$, of length $m+2$ describing her $m$ skills in
$\mathcal{S}$, her wage $w_u$, and her task acceptance ratio $p_u$. 
\vspace{-0.05in}
\begin{itemize}
  \item Skill $u_{s_i} \in [0,1]$ is the expertise level of worker $u$
    for skill $s_i$. Skill expertise reflects the quality that the
    worker's contribution will assign to a task accomplished by that
    worker. 
    \vspace{-0.05in}
  \item Wage $w_u \in [0,1]$ is the amount of money a worker $u$ is
    willing to accept to complete a task. The wage represents the
    minimum amount the worker expects to be paid for any task.
    \vspace{-0.05in}
  \item Acceptance ratio $p_u \in [0,1]$, the probability at which a
    worker $u$ accepts a task. It reflects the worker's willingness to
    complete tasks assigned to her. A value of $0$ is used to model
    workers who are not available (as workers who do not accept any
    task).
    \vspace{-0.05in}
\end{itemize}
\vspace{-0.05in}
We refer to a worker's skill, wage expectation and
acceptance ratio as {\em human factors} that may vary over the time. 
%While wage expectation is usually fixed, skill expertise and acceptance ratio may vary. 
\eat{
\begin{table}
\begin{tabular} {|c|p{3cm}|}
\hline
  {\bf Notation} & {\bf Interpretation} \\
   \hline 
   $\mathcal{U}$ & the set of $n$ workers \\
   \hline 
   $\mathcal{S}$ & the set of $m$ skills \\
   \hline 
   $T$ & the set of $l$ tasks \\
  % \hline
  % $T$ & a task workload subset of $T$  \\
   %% \hline
   %% $\mathcal{P}$ & a time period  \\
   \hline
   $t = \langle Q_{t_1}, Q_{t_2}, \ldots, Q_{t_m}, W_t \rangle$ &  a task vector\\
   \hline
   $u = \langle u_{s_1}, u_{s_2}, \ldots, u_{s_m}, w_u, p_u \rangle$ & profile of worker $u$, $m$ skills, wage, and acceptance ratio \\
   \hline
   $i^{t} = \langle \mathcal{P}_i, \mathcal{L}_i \rangle$ & an \Index\ for task $t$ \\
   %\langle \mathcal{P}_i, \mathcal{L}_i \rangle$ & a \sys-Index, $\mathcal{P}_{i} = \langle v_i, q_{i_1}, \ldots, q_{i_m}, w_i \rangle$, a vector of size $m+2$, with value $v_i$, $m$ skills, and wage $w_i$.  $\mathcal{L}_i \subseteq \mathcal{U}$, workers assigned to $i$ \\
   \hline
   $\mathcal{I}$ & a set of \Index \\
   \hline
   $\mathcal{I}_v$ & a set of \vindex \\
   \hline
   $v_j$ & value of task $j$ \\
   \hline
   $\mathcal{C}_u$ & the number of active tasks assigned to worker $u$ \\
   \hline
   $V$ & a Virtual Worker \\
   \hline
   $\mathcal{V}_u$ & the number of active tasks assigned to Virtual Worker $V$ \\
   \hline
  \end{tabular}
\caption{Notations used in the paper}\label{tab:notations}
\end{table}
}
\noindent {\bf Tasks:} A task $t \in T$ is a
vector, $\langle Q_{t_1}, Q_{t_2}, \ldots, Q_{t_m}, W_t\rangle$ of length $m$+1
reflecting its minimum skill requirement for each skill and its
maximum cost (or wage). A task $t$ that is being executed has a set of
contributors $\mathcal{U}_t \subseteq \mathcal{U}$ so far. For collaborative tasks, the quality of a task is the aggregate of the skill of the workers contributing to $t$, for a given skill. $t$ is hence characterized by:
\vspace{-0.05in}
\begin{itemize}
\item Current quality $q_{t_i}=\Sigma_{u \in \mathcal{U}_t} u_{s_i} \in
  [0,|\mathcal{U}_t|]$ for skill $s_i$, with $u_{s_i}$ being the expertise of
  worker $u$ on skill $s_i$.  $q_{t_i}$ aggregates the expertise of
  all workers who have contributed to $t$ so far.
  \vspace{-0.05in}
\item Current cost $w_t=\Sigma_{u \in \mathcal{U}_t} w_u \in [0,|\mathcal{U}_t|]$,
  with $w_u$ being the wage paid to worker $u$.  $w_t$ aggregates the
  wages of all workers who have contributed to $t$ so far.
  \vspace{-0.05in}
\end{itemize}
\vspace{-0.05in}
\eat{\new{We assume that each task $t$ requires one skill (for now) and is described by a pair $t = <Start, End, E_s, E_c>$, where $E_s$ is the expected skill required for $t$ and $E_c$ is its expected budget. {\em Start} and {\em End} correspond to the starting and ending time of the task. This allows us to model tasks that do not overlap and simultaneous tasks. 
We are also provided with past workload that provides some idea about task arrival properties. 
If no workload information is available, then we can assume that the skill levels required for a task are uniformly distributed.}}

\noindent {\bf Workload:} 
We assume a static workload $T$ that represents a set of active tasks over a time period.

%We assume  a static workload 
% represents {\em a set of active tasks over a time period}.
%\footnote{More complicated definition of workload is unrealistic for a crowdsourcing platform and lacks of genericity.} 
 %an A workload is typically associated with a time period. Consequently, \sys\ is designed to optimize for the same time period.}. 
% Without loss of generality, we could imagine that the workload is represented by $T$. \\ %Each task $t \in T$ is characterized by its task vector.

%For the simplicity of illustration, if there is only one skill, a workload could be described by a 2-dimensional matrix, such as, one described below: A cell $C_{ij}$ in this matrix represents the number of tasks which satisfies the skill requirement $i$ and cost requirement $j$. Of course, when there are $m$ skills, the workload is described by an $m+1$ dimensional matrix.
\eat{
\begin{figure}[h]
\centering
\includegraphics[width=2.5in]{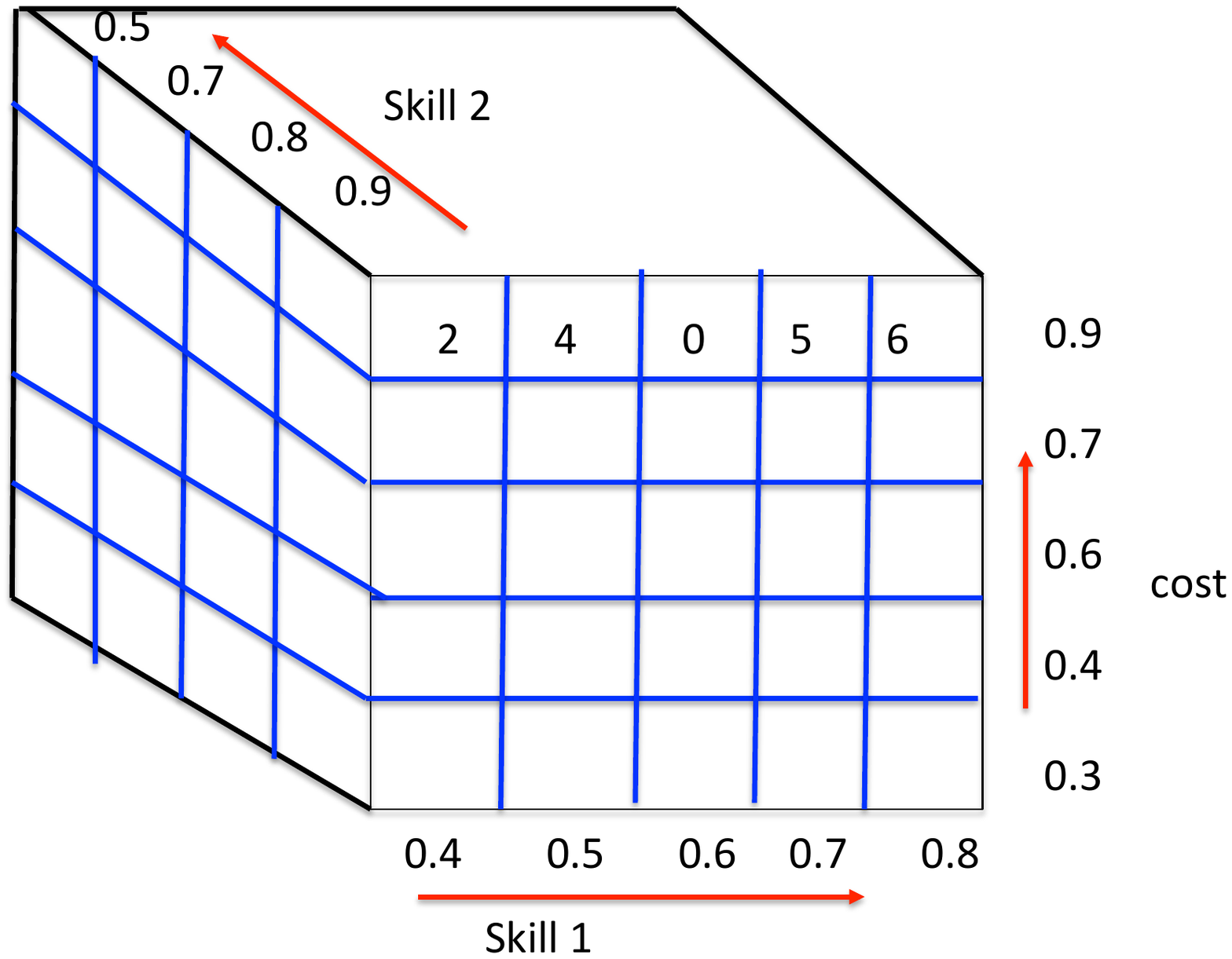}
%\vspace{-0.1in}
\caption{\label{workload} When there are only two skills (skill-1 and
  skill-2) in the system, a workload $T$ is represented as a
  3-dimensional matrix. This partially complete workload describes
  that there are $2$ tasks with $\langle 0.4,0.9,0.9 \rangle$, $4$
  tasks with $\langle 0.5,0.9,0.9 \rangle$, $0$ tasks with $\langle
  0.6,0.9,0.9 \rangle$, $5$ tasks with $\langle 0.7,0.9,0.9 \rangle$,
  and $6$ tasks with $\langle 0.8,0.9,0.9 \rangle$.}
\end{figure}
}
\vspace{-0.1in}
\subsection{Constraints}
\label{constraints}
The following 
%natural //What is a "natural" constraint? We could easily imagine others for the same problem. We simply picked these.
constraints are considered:
1. {\bf Quality constraint:} For each task $t \in T$, the worker-to-task assignment has to be such that the aggregated skill of assigned workers is at least as large as the minimum skill requirement of $t$ for each skill. \footnote{\small $Q_{t_j}$ is the threshold for skill $j$ and $q_{t_j} \geq Q_{t_j}$.}
2. {\bf Cost constraint:} For each task $t \in T$, the aggregated workers' wage ($w_t$) cannot exceed the maximum cost that $t$ can pay, i.e., $w_t \leq W_t$. 
3. {\bf Non-preemption constraint.} Once a worker has been assigned to a task, she cannot be pulled out of that task until finished. 
4. {\bf Tasks per worker constraint:} A worker must be assigned a minimum number of $X_l$ tasks and no more than $X_h$ tasks.
%\end{enumerate}
%Finally, a task may be assigned to multiple workers and the same worker could be assigned to multiple tasks. 
% over a period. 

%\subsection{Worker-to-task Assignment Objective} //We already mentioned that the KI-C problem is about worker-to-task assignment. 
\subsection{Objective}
Given a set  $T$ of tasks and a set $\mathcal{U}$ of workers, the objective is to perform
worker-to-task assignment for all tasks in $T$, such that the overall task quality is maximized and the cost is minimized, while the constraints of skill, cost, and tasks-per-worker are satisfied. 

\vspace{-0.1in}
\begin{example}\label{runningex}
A running example is described consisting of a minuscule version of the news article composition task. Assume that the platform consists of 6 workers to compose 3 news articles (tasks) on ``Egyptian Politics''(t1), ``NSA leakage''(t2) and ``US Health Care Law'' (t3). For simplicity, we assume that all tasks belong to same topic (``Politics'') and therefore require only one skill (knowledge in politics). We also assume that $X_l=1$, $X_h=2$. Worker profiles (skill, wage,acceptance ratio) and task requirements (minimum quality, maximum cost) are depicted numerically in Tables \ref{tab:workers} and \ref{tab:tasks}. This example will be used throughout the paper to illustrate our solution.

%\vspace{-0.1in}
\begin{table*}%[!h]
\parbox{.35\linewidth}{
\begin{tabular}{| p{10em} | l | l | l | l | l | l |}
    \hline
    \textbf{Worker} & $\mathbf{u_1}$ & $\mathbf{u_2}$ & $\mathbf{u_3}$ & $\mathbf{u_4}$ & $\mathbf{u_5}$ & $\mathbf{u_6}$ \\ 
    \textbf{Skill} & 0.1 & 0.3  & 0.2 & 0.6 & 0.4 & 0.5\\
    \textbf{Wage} & 0.05 & 0.25  & 0.3 & 0.7 & 0.3 & 0.4\\ 
	\textbf{Acceptance ratio} & 0.8 & 0.7  & 0.8 & 0.5 & 0.6 & 0.9 \\   
    \hline
  \end{tabular}
\vspace{-0.1in} 
  \caption{Workers Profiles}
  \label{tab:workers}
}
\hfill
\parbox{.45\linewidth}{
\begin{tabular}{| p{10em} | p{2em} | p{2em} | p{2em} |}
    \hline
    \textbf{Task} & $\mathbf{t_1}$ & $\mathbf{t_2}$ & $\mathbf{t_3}$\\ 
    \textbf{Quality threshold} & 0.7 & 0.8  & 0.9 \\ 
    \textbf{Cost threshold} & 1.08 & 1.1 & 2.0 \\ 
    \hline
  \end{tabular}
\vspace{-0.1in}
\caption{Task Descriptions}\label{tab:tasks}
}
\end{table*}
\vspace{-0.1in}
\end{example}

\section{Smartcrowd}\label{smartcrowd}
%: Crowd Indexes for KI-C Problem Optimization
The overall functionality of \sys\ is as follows: A set of indexes $\mathcal{I}$, referred to as \Index\ are pre-computed  based on a simple definition of past task {\em workload}. This step is referred to as the offline phase. Then, the idea is to make use of these indexes for efficient worker to task assignment once the actual tasks arrive. This latter step is construed as the online phase. However, indexing workers in KI-C  is more challenging than data indexing for query processing due to human factors in a dynamic environment, as workers may be unavailable/decline tasks, new workers may join, existing workers may have updated profile, etc. Therefore, \sys\ must design {\em adaptive maintenance strategies} of \Index\ to account for worker replacements, additions, deletions, or updates in their profiles. The maintenance strategies also need to be cognizant of workers non-preemption, since workers currently engaged in tasks can not be withdrawn until completion. 
%Pre-computation is essential because, as our theoretical analyses and experimental results corroborate, it is too expensive to solve the KI-C optimization problem optimally in real time after the actual tasks arrive. 
%We therefore underscore that, just like traditional query processing benefits from pre-computed indexes, worker-to-task assignment too does benefit from indexing. Next, when an actual task arrives (online phase), \sys\ makes use of $\mathcal{I}$ to perform efficient worker-to-task assignment. To do so, \sys\ designs further strategies that replace workers, in case some of the selected ones are unavailable or decline the task. It also designs strategies for the adaptive maintenance of the indexes, to account for worker additions, deletions, or updates in the worker profiles. 
\sys\ proposes further optimization opportunities for both offline and online phase, with a greedy \Index\ building and maintenance strategy and an alternative index, namely \vindex. The latter two strategies are crucial to ensure efficiency for applications that involve a large set of workers and tasks. Both of them are approximate, yet greedy \Index\ could entail provable approximation factor under certain conditions.
%Finally, it reasons about how these indexes are to be {\em adaptively} maintained in a dynamic crowdsourcing environment. 
Interestingly, the treatment of index building or their adaptive maintenance is {\em uniform} inside \sys, with appropriate adaptation of similar solution strategy. Of course, if the actual tasks are substantially different from the workload, \sys\ has to halt and re-design the indexes from scratch. The latter scenario is orthogonal to us.

In the following we formalize the index design problem  (\Index) through which we aim at optimizing the KI-C optimization problem and propose an in-depth theoretical analysis. Then we investigate further optimization opportunities in index design by formalizing \vindex. Finally, we propose adaptive maintenance of the indexes, which is related to the online index maintenance phase.

%First, we discuss the optimization criteria and formalize our index based (\Index) worker to task assignment problem. Furthermore, we propose an in-depth theoretical analysis of the \Index\ design problem. Further optimization opportunities in index design is investigated by formalizing a more optimized index \vindex. The unpredictability of human behavior brings additional challenges to \sys\ and necessitates the incremental maintenance of the pre-computed \Index\ and \vindex\ indexes. Section~\ref{maintd} formalizes the index maintenance problem for both \Index\ and \vindex.

\subsection{C-DEX}\label{def}
%Given a set  $T$ of tasks and a set $\mathcal{U}$ of workers, the objective is to perform
%worker-to-task assignment for all tasks in $T$. A task may be assigned to multiple workers and the same worker could be assigned to multiple tasks (every worker could be assigned between
%$[X_l,X_h]$ tasks) over a period. For each task $t \in T$, the worker-to-task
%assignment has to be such that the aggregated skill of assigned
%workers is at least as large as the minimum skill requirement of $t$
%for each skill \footnote{\small $Q_{t_j}$ is the threshold for skill $j$ and $q_{t_j} \geq Q_{t_j}$.}, and the aggregated workers' wage does not exceed the maximum cost that $t$ can pay (i.e., $w_t>W_t$).  Once these two hard-constraints are satisfied, a positive {\em value} $v_t$ is associated to task $t$
%($v_t$ is $0$ otherwise), and is computed as the weighted linear
%combination of skills (higher is better) and cost (lower is better)
%that this assignment achieves. The objective is to maximize $\mathcal{V} =\Sigma_{\forall t} v_t$.

\eat{
Once these two hard-constraints are satisfied, a positive {\em value} $v_t$ is associated to task $t$
($v_t$ is $0$ otherwise), and is computed as the weighted linear
combination of skills (higher is better) and cost (lower is better)
that this assignment achieves. The objective is to maximize $\mathcal{V} =\Sigma_{\forall t} v_t$.
}
%This problem formulation accounts for two out of the three human
%factors we consider, namely worker skills (that could be multiple) and worker wage, and is used
%to formalize a global optimization (for all available tasks in a
%workload $T$) of the worker-to-task assignment process. 
\eat{
As our theoretical analyses and experimental results corroborate, it is too expensive to solve this problem optimally in real time after the actual tasks arrive, leading to the necessity of pre-computation. We therefore underscore that, just like traditional query processing benefits from pre-computed indexes, worker-to-task assignment too does benefit from
indexing.  
}
We define the crowd index \Index\ as follows:
\vspace{-0.05in}
\begin{definition}[\Index]
A \Index\ $i^t=(\mathcal{P}_i^t,\mathcal{L}_i^t)$ is a pair that
represents an assignment of a set of workers in $\mathcal{U}$ to a
task $t$. Formally, it is described by a vector $\mathcal{P}_i^t$ of
length $m+2$, and a set of workers $\mathcal{L}_i^t$. $\mathcal{P}_i^t
= \langle v_t, q_{t_1}, \ldots, q_{t_m}, w_t \rangle$ contains the
value $v_t$ of task $t$, its expected minimum expertise $q_{t_i}$ for
each skill $s_i$, and its maximum cost $w_t$. $\mathcal{L}_i^t \subseteq
\mathcal{U}$ contains the workers assigned to index $i^t$.
\end{definition}

Consider Example~\ref{runningex} with $T=\{t_1,t_2,t_3\}$ for which three indexes are to be created offline. If workers $\{u_1,u_2,u_6\}$ are assigned to task $t_1$ with $W_1=W_2=0.5$, then the index for task $t_1$ will be, $i^{t_1} = (\langle 0.6, 0.74,0.58 \rangle, \{u_1,u_2,u_6\})$.

\eat{Although \Index\ are built offline using a given workload,
they are used to quickly select the appropriate workers, as a task 
arrives in real-time. A unique characteristic of \Index\ is the
necessity to adapt it as the workers become available or unavailable. Most
importantly, indexing should account for the third human factor,
namely acceptance ratio. 

Even when they are present in the system,
workers may decline a task. In what follows, we first describe the
{\em index design problem} followed by a discussion on how indexes are used
as tasks arrive, and how to perform dynamic maintenance of the
indexes.}
%\vspace{-0.05in}
{\bf C-DEX Design Problem:}
%Given a static workload $T$ and a set $\mathcal{U}$ of workers, 
We start with the KI-C problem described in Section \ref{dm}. We define $v_t$ to denote the {\em value} of each task $t$ in $T$ (in the beginning $v_t$ is $0$ for every task). The task value is associated with the current quality and cost of the task. More specifically, task value is calculated as a weighted linear combination of skills (higher is better) and cost (lower is better). The objective is to design an index \Index\ such that the sum of values $\mathcal{V} =\Sigma_{\forall t} v_t$ of all tasks it $T$ is maximized, while the problem constraints are satisfied.   

For a task $t$, its individual value $v_t$ and the global value $\mathcal{V}$  is defined in Equation~\ref{eqn:eq}.
\begin{equation} \label{eqn:eq}
    \text{Maximize } \mathcal{V}= \Sigma_{\forall t \in T} v_t 
\end{equation}
\vspace{-0.2in}
\begin{equation*}
    v_t=
    \begin{cases}
      W_1\times\Sigma_{\forall j \in \{1..m\}} q_{t_j}+W_2 \times (1-\frac{w_t}{W_t})
      &\text{if } q_{t_j} \geq Q_{t_j}\\
      &\wedge \ w_t\leq W_t \\
      0&\text{if } q_{t_j} < Q_{t_j}\\
      &\vee \ w_t>W_t
    \end{cases}
 \end{equation*}

%&\mbox{or } w_t>W_t

where $W_1, W_2 \geq 0$ and $W_1+W_2=1$. \\
Note that the above formulation is a flexible incorporation of different skills and cost, letting the application select the respective weights, as appropriate.

Since \Index\ are {\em pre-computed for future use},  skills (or quality) and wages
are computed in an expected sense considering acceptance ratio, instead of the actual aggregates, as follows:
\vspace{-0.25in}
\begin{align*} 
\\ q_{t_j} = \Sigma_{\forall u \in \mathcal{U}} u_{t} \times p_u \times  u_{s_j} \geq Q_{t_j}, \forall j \in \{1..m\} 
\\ w_t = \Sigma_{\forall u \in \mathcal{U}} u_t \times p_u \times  w_u \leq W_t 
\\  u_{t} = [0/1]
\\  X_{l} \leq \Sigma_{\forall t \in T} \{u_{t}\} \leq X_{h}
\end{align*} 
The above formulation is to design an assignment of a user $u \in \mathcal{U}$ to a task $t \in T$ to generate the \Index\ set $\mathcal{I}$.

%% Designing a good workload is
%% orthogonal to our research here, but the effectiveness of \sys\-Index
%% relies on the fact that the workload needs to represent the future
%% tasks well, a typical assumption made in any index design problem in
%% the database.
%\vspace{-0.1in}
%\subsubsection{C-DEX Design Problem} \label{index}
%The problem is then to perform worker-to-task assignment for all tasks
%in $|T|$ in the workload, such that the aggregated value $\mathcal{V} =\Sigma_{\forall t
 % \in T} v_t$ is maximized. More formally:

\vspace{-0.05in}
\subsubsection{Theoretical Analyses}\label{theo}
\vspace{-0.1in}
\begin{theorem}\label{nphard}
The \Index\ Design Problem is NP-Complete.
\end{theorem}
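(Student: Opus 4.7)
I will establish Theorem~\ref{nphard} by showing membership in NP and then providing a polynomial-time reduction from the Multiple Knapsack Problem (MKP), following the hint in the introduction. Membership in NP is straightforward: given a candidate assignment of workers to tasks (encoded as a bit-matrix $u_t \in \{0,1\}$), a verifier can in polynomial time (i) check the per-worker degree constraint $X_l \leq \sum_t u_t \leq X_h$, (ii) compute each $q_{t_j}$ and $w_t$ to test the quality and cost constraints, and (iii) evaluate $\mathcal{V} = \sum_t v_t$ and compare it against the target.

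The core of the proof is the reduction. Recall an MKP instance consists of items $\{1,\dots,n\}$ with profits $p_i$ and weights $w_i$, and $m$ knapsacks with capacities $C_j$; the task is to assign each item to at most one knapsack so that the total weight in each knapsack is at most its capacity, maximizing total profit. Given such an instance, I construct a \Index\ Design instance as follows. Create one skill $s_1$ and, for each item $i$, a worker $u_i$ with profile $\langle u_{s_1}=p_i,\, w_{u_i}=w_i,\, p_{u_i}=1\rangle$. For each knapsack $j$, create a task $t_j$ with quality threshold $Q_{t_j,s_1}=0$ and cost threshold $W_{t_j}=C_j$. Set the value weights to $W_1=1$, $W_2=0$, and set $X_l=0$, $X_h=1$ so each worker can be assigned to at most one task. (If the problem definition insists on $X_l\ge 1$, add a single dummy task with cost budget $\sum_i w_i$ to harmlessly absorb unplaced workers; its contribution to $\mathcal{V}$ is irrelevant because the construction below can be adapted by adding a tiny penalty skill or by simply observing that under $W_1=1$ no dummy task contributes unless unplaced workers are routed there.)

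Under this construction, the quality constraint $q_{t_j}\ge 0$ is vacuous, so $v_{t_j}>0$ exactly when $w_{t_j}\le W_{t_j}$, i.e., the workers assigned to $t_j$ have total wage at most $C_j$. Because $p_{u_i}=1$, the expected aggregates coincide with the integral aggregates, so $v_{t_j}=\sum_{u_i \in \mathcal{L}_i^{t_j}} p_i$ and hence
\[
\mathcal{V} \;=\; \sum_{j=1}^{m} \sum_{u_i \in \mathcal{L}_i^{t_j}} p_i,
\]
which is exactly the MKP objective. The one-task-per-worker condition ($X_h=1$) matches the MKP restriction that each item goes to at most one knapsack. Thus a \Index\ solution of value $\ge K$ exists iff the MKP instance admits an assignment of profit $\ge K$. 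The construction is clearly polynomial in the size of the MKP instance.

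\textbf{Main obstacle.} The only nontrivial point is reconciling the \Index\ objective, which is a piecewise function that collapses to $0$ whenever \emph{any} quality or cost constraint is violated, with MKP's purely budget-style constraints; setting $Q_{t_j,s_1}=0$ neutralizes the quality clause so only the cost clause remains active, letting the correspondence go through cleanly. A secondary subtlety is the degree bound $X_l$: if the definition forces $X_l\ge 1$, the dummy-task trick above absorbs any slack without altering optimal profit, so NP-hardness is preserved. Together with NP membership, this yields NP-completeness.
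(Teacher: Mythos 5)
Your proposal is correct and follows essentially the same route as the paper: a reduction from the Multiple Knapsack Problem in which each item becomes a worker with skill equal to its profit, wage equal to its size, and acceptance ratio $1$, each bin becomes a task with zero quality threshold and cost budget equal to its capacity, with $W_1=1$, $W_2=0$, $X_l=0$, $X_h=1$, so that $\mathcal{V}$ equals the MKP profit. Your added remarks on NP membership and on absorbing a forced $X_l\ge 1$ via a dummy task are minor elaborations of the same argument rather than a different approach.
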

\begin{proof}
It is easy to see that the problem is in NP.  To prove
NP-completeness, we prove that the well known Multiple-Knapsack
Problem (MKP)~\cite{DBLP:books/fm/GareyJ79} is polynomial time
reducible to an instance of the \Index\ Design Problem, i.e., MKP
$\leq_{p}$ \Index\ Design Problem.

An instance of MKP is as follows: a pair $(\mathcal{B},\mathcal{S})$,
where $\mathcal{B}$ is a set of bins, and $\mathcal{S}$, a set of
items. Each bin $j \in \mathcal{B}$ has a capacity $c(j)$, and each
item $a$ has a size $s(a)$ and profit $p(a)$. The objective is to find a
subset $\mathcal{U} \subseteq \mathcal{S}$ of maximum profit such that
$\mathcal{U}$ has a feasible packing in $\mathcal{B}$. The decision
version of this problem is to find a feasible packing with
$\mathcal{U}$ using $|\mathcal{B}|$ bins, where $\mathcal{U} \subseteq
\mathcal{S}$, such that the total profit is $P$.

We reduce an instance of MKP to create an instance of the \Index\
Design Problem as follows. We assume that $\mathcal{S}= \{s\}$ (i.e., $m=1$),
$W_1=1, W_2=0$. The workload consists of $|\mathcal{B}|$ tasks (is equal to the number of indexes). Each bin $j$ represents an \Index\ 
$j$, and the capacity of the bin $c(j)$ represents the maximum
expected workers' wages assigned to index $j$ (i.e., $c_j=W_j$). In this simpler version
of the \Index\ Design problem, we assume that each \Index\ has the minimum quality
requirement of $0$, i.e., $q_{j} = 0$.

The item set $\mathcal{S}$ represents the worker set, where each item
$a$ is a worker $u$, $p(a)=u_s$, and $w_u = s(a)$, $p_u = 1$.
This creates the following instance of the \Index\ Design problem, where $v_j$ is
the value of $j$-th \Index, and $\mathcal{V}$ is the overall value:
\vspace{-0.4in}
\begin{align*}
\\ \mathcal{V}= \Sigma_{\forall j \in \mathcal{B}} v_j
\\  v_j= 1 \times q_{j} + 0 \times (1-\frac{w_j}{W_j}),
\\ q_{j} = \Sigma_{\forall u} u_{j} \times p_u \times  u_{s} \geq 0, 
\\ w_j = \Sigma_{\forall u} u_j \times p_u \times  w_u \leq W_j,
\\  u_{j} = [0/1], 0 \leq \Sigma_{\forall j \in \mathcal{B}} \{u_{j}\} \leq 1. 
\end{align*}
%\vspace{-0.1in}
Given the above instance of the \Index\ Design Problem, the
objective is to create $|\mathcal{B}|$ \Index, such that
$\mathcal{V} = P$ and there exists a solution of the MKP problem with
total profit $P$, if and only if, a solution to our instance of the
\Index\ Design Problem exists.
\end{proof}

\noindent {\bf Effect of Constraints on C-DEX Design Problem:} Interesting theoretical properties of the optimization problem (Equation~\ref{eqn:eq}) are investigated under different conditions and constraints. In particular, we investigate the sub-modularity and monotonicity properties~\cite{submod} of the objective function that are heavily used in designing approximation algorithms in Section~\ref{approx}. 

\smallskip \noindent {\bf Submodular Function:} In general, if $\mathcal{A}$ is a set, a submodular function is a set function: $ f : 2^\mathcal{A} \to \mathbb{R}$ that satisfies the following condition:
For every $\mathcal{X},\mathcal{Y} \subseteq \mathcal{A}$ with $\mathcal{X} \subseteq \mathcal{Y}$ and every $x \in \mathcal{A} \backslash \mathcal{Y}$, we have $f(\mathcal{X} \cup \{x\}) - f(\mathcal{X}) \geq f(\mathcal{Y} \cup \{x\}) - f(\mathcal{Y})$.

\smallskip \noindent Value function $v_t$ for task $t$ satisfies this form: it maps each subset of the workers $\mathcal{S}$ from $\mathcal{U}$  to a real number $v_t$, denoting the value if that subset of workers are assigned to task $t$. Conversely, global optimization function $\mathcal{V}=\Sigma_{\forall t \in T} v_t$ is defined over a set of sets, each set maps an assignment of a subset of the workers from $\mathcal{U}$  to a task $t \in T$ with value $v_t$.

\smallskip \noindent {\bf Monotonic Function:} A real valued function $f$ defined on non-empty subsets of $\mathbb{R}$ is monotonic if $f( \mathcal{X} \cup \{x\}) \leq f( \mathcal{X} \cup \{y\}), \forall x \leq y, \mathcal{X} \subset \mathbb{R}$.
%\vspace{-0.05in}
\begin{theorem}
The value function $v_t$ is not submodular in  the \Index\ Design problem, if $Q_{t_j}>0, \forall j \in \{1..m\}$ .
\end{theorem}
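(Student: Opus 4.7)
The plan is to show non-submodularity by exhibiting a counterexample that exploits the threshold behavior of $v_t$ when $Q_{t_j}>0$. The essential intuition is that because $v_t$ snaps to $0$ whenever the quality requirement is not met, adding a worker to a small set may leave us below the threshold (yielding zero marginal gain), whereas adding the same worker to a larger set may push us over the threshold (yielding a strictly positive jump). This is precisely the ``increasing marginal returns'' behavior that contradicts submodularity.

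First I would recall the definition: for $v_t$ to be submodular, we must have $v_t(\mathcal{X}\cup\{x\})-v_t(\mathcal{X}) \geq v_t(\mathcal{Y}\cup\{x\})-v_t(\mathcal{Y})$ for every $\mathcal{X}\subseteq \mathcal{Y}\subseteq \mathcal{U}$ and every $x\in\mathcal{U}\setminus\mathcal{Y}$. It therefore suffices to find one triple $(\mathcal{X},\mathcal{Y},x)$ that violates this inequality.

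Next I would construct the counterexample using the running example (Tables~\ref{tab:workers} and \ref{tab:tasks}), for instance on task $t_1$ where $Q_{t_1}=0.7$ and $W_{t_1}=1.08$. Take $\mathcal{X}=\emptyset$ and $\mathcal{Y}=\{u_2\}$, and let the candidate worker be $x=u_6$. Since a single worker's expected skill $u_s\cdot p_u$ is strictly less than the threshold $0.7$ for each of $u_2$, $u_6$, and even their singleton combination with $\mathcal{X}$, we have $v_{t_1}(\mathcal{X})=v_{t_1}(\{u_6\})=v_{t_1}(\{u_2\})=0$, so the marginal gain from $\mathcal{X}$ is $0$. However, $\{u_2,u_6\}$ jointly meets the quality threshold and respects the cost budget, so $v_{t_1}(\mathcal{Y}\cup\{x\})>0$ while $v_{t_1}(\mathcal{Y})=0$. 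Hence the marginal gain from $\mathcal{Y}$ is strictly positive, which contradicts the submodularity inequality.

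Finally I would argue that this construction is not an artifact of the specific numbers: whenever $Q_{t_j}>0$ one can always pick two workers each of whose individual expected contribution falls below $Q_{t_j}$ but whose sum exceeds it (and whose combined wage stays within $W_t$), giving the same zero-to-positive jump. The main (and only) obstacle is ensuring the cost constraint is not violated by the pair, so that the positive jump is genuine rather than masked by the second clause of $v_t$; in the example above this is easily verified, and in general one can simply scale the wages down without affecting the skill side of the argument. Concluding from this single violation, $v_t$ fails to be submodular whenever $Q_{t_j}>0$ for all $j$.
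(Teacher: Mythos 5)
Your proposal takes essentially the same route as the paper: both exploit the threshold clause of $v_t$ to produce a zero marginal gain when a worker is added to a small set (still below $Q_{t_1}$) and a strictly positive jump when the same worker is added to a larger set that crosses the threshold, contradicting the diminishing-returns inequality. The paper's sketch does this abstractly with $\mathcal{R}\subset\mathcal{S}$ and explicitly ignores acceptance ratios without loss of generality, while you instantiate it on the running example and then observe that the construction generalizes; these are the same argument.

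One concrete slip in your instantiation: you measure each worker by expected skill $u_{s}\cdot p_u$ (as the C-DEX formulation does), but under that convention the pair $\{u_2,u_6\}$ contributes only $0.7\times 0.3 + 0.9\times 0.5 = 0.66 < 0.7 = Q_{t_1}$, so $v_{t_1}(\{u_2,u_6\})=0$ and this particular triple does not witness a violation. Either drop the acceptance ratios, as the paper does (giving $0.3+0.5=0.8\geq 0.7$ with wage $0.65\leq 1.08$), or choose a pair such as $\{u_4,u_6\}$, whose expected quality is $0.5\times 0.6+0.9\times 0.5=0.75\geq 0.7$ and expected cost $0.35+0.36=0.71\leq 1.08$. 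With that repair, your argument—including the closing remark that such a threshold-crossing pair can always be arranged when $Q_{t_j}>0$ while keeping wages within budget—is sound and coincides with the paper's proof.
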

\vspace{-0.05in}
\begin{proof}
Sketch: %Consider an arbitrary function $f()$ that maps the subsets of a finite ground set $\mathcal{U}$ to non-negative real numbers. Our value function $v_t$ for task $t$ satisfies this form: it maps each subset of the workers $\mathcal{S}$ to a real number $v_t$, denoting the value if that subset of workers are assigned to task $t$. 
Without loss of generality, we ignore the weights and the acceptance ratios of the workers for this proof. For the simplicity of exposition imagine $m=1$. Value $v_t$ remains $0$ until  $q_{t_1} \geq Q_{t_1}$. Consider a subset  $\mathcal{R} \subset \mathcal{S}$ and imagine $f'(\mathcal{R}) <  Q_{t_1}$, leading to $v_t = 0$. If an element $k$ is added to $\mathcal{R}$, if  $f'(\mathcal{R} \cup k) <  Q_{t_1}$, then $v_t = 0$. However, if  $f'(\mathcal{S}) \geq Q_{t_1}$, $v_t > 0$. Thus $v_t >0$, for $f'(\mathcal{S} \cup k)$. In such cases, it is easy to see,\\
$f'(\mathcal{S} \cup k) - f'(\mathcal{S}) > f'(\mathcal{R} \cup k) -  f'(\mathcal{R})$. This clearly violates the submodularity condition.  We omit the details for brevity.
\end{proof}
\vspace{-0.05in}
\begin{theorem}\label{locald}
The value function $v_t$  in the \Index\ Design problem is submodular but non-monotone, when $Q_{t_j} = 0, \forall j \in \{1..m\}$.
\end{theorem}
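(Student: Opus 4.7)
The plan is to exploit the fact that when $Q_{t_j}=0$ for every skill $j$, the quality clause $q_{t_j}\geq Q_{t_j}$ holds automatically for any subset of workers, so that the only condition that can still send $v_t$ to $0$ is the budget constraint $w_t\leq W_t$. Restricted to the downward-closed family $\mathcal{F}=\{S\subseteq\mathcal{U}:w(S)\leq W_t\}$ of cost-feasible subsets, the two-branch definition of $v_t$ collapses to a single closed form that I expect to turn out modular, which will immediately give submodularity.

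The first calculation I would carry out is to substitute the definitions $q_{t_j}(S)=\sum_{u\in S} p_u\,u_{s_j}$ and $w_t(S)=\sum_{u\in S} p_u\,w_u$ into the formula for $v_t$, regroup the sums by worker, and obtain
\begin{equation*}
v_t(S) \;=\; \sum_{u\in S}\alpha_u \;+\; W_2, \qquad \alpha_u \;=\; W_1\,p_u\sum_{j=1}^{m} u_{s_j} \;-\; \frac{W_2\,p_u\,w_u}{W_t}.
\end{equation*}
Since $v_t$ is affine in $S$, its marginal $v_t(S\cup\{x\})-v_t(S)=\alpha_x$ does not depend on $S$, and the submodularity inequality $v_t(X\cup\{x\})-v_t(X)\geq v_t(Y\cup\{x\})-v_t(Y)$ holds with equality for every $X\subseteq Y$ all of whose relevant supersets remain in $\mathcal{F}$.

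For the non-monotone half, the same rewrite does the work: $\alpha_u$ is strictly negative whenever $W_2>0$ and the worker's weighted wage dominates her weighted skill aggregate. A clean witness is the extreme $W_1=0,\,W_2=1$, where $\alpha_u=-p_u w_u/W_t<0$ for every worker with positive wage, so that adding any such worker to $\emptyset$ strictly decreases $v_t$ from $W_2$. An explicit instance drawn from Tables~\ref{tab:workers}--\ref{tab:tasks} would also serve.

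The step I expect to require the most care is pinning down the correct scope of the submodularity claim. If one insists on viewing $v_t$ as a function on the full power set $2^{\mathcal{U}}$, the hard jump to $0$ across the budget boundary can interfere with pointwise submodularity (for instance, when $Y$ is already infeasible but $X$ becomes infeasible only after adding $x$, the two marginals need no longer compare in the right direction). The plan is therefore to phrase the claim on the downward-closed family $\mathcal{F}$ of cost-feasible sets, which is precisely the regime in which the non-monotone submodular maximization algorithms invoked in Section~\ref{approx} operate, so the framing is both faithful to the underlying optimization problem and sufficient for the downstream algorithmic consequences.
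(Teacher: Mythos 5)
Your proof is correct, and it takes a sharper route than the paper's own sketch. The paper argues submodularity by asserting a \emph{strict} diminishing-return inequality once the cost threshold is respected, and non-monotonicity by noting the value can move either way when a worker is added. Your rewrite $v_t(S)=W_2+\sum_{u\in S}\alpha_u$ with $\alpha_u=W_1\,p_u\sum_{j}u_{s_j}-W_2\,p_u w_u/W_t$ instead shows that, inside the budget, every worker's marginal is the constant $\alpha_x$: the function is \emph{modular} there, so submodularity holds only with equality and the strict inequality claimed in the paper's sketch does not actually hold on the feasible region. Your caveat about the full power set is also substantive rather than cosmetic: with the hard drop to $0$ once $w_t>W_t$, one can construct configurations (e.g.\ $W_1=0$, $W_2=1$, $X\subset Y$ both feasible, $X\cup\{x\}$ feasible but $Y\cup\{x\}$ infeasible) in which the marginal $\alpha_x$ at $X$ is strictly smaller than the marginal $-v_t(Y)$ at $Y$, so the submodularity inequality fails on $2^{\mathcal{U}}$; restricting the claim to the downward-closed family of cost-feasible sets, as you do, is the correct scope and is exactly what the greedy arguments of Section~\ref{approx} require. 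Your non-monotonicity argument coincides with the paper's (a worker whose weighted wage term dominates the weighted skill term has $\alpha_u<0$, while a high-skill cheap worker has $\alpha_u>0$); it is worth stating explicitly that this needs $W_2>0$, which is consistent with Lemma~\ref{local}, where monotonicity is recovered at $W_2=0$. In short, your approach buys a precise closed form, a correct statement of where submodularity holds (and in what strength), at the mild price of weakening "submodular" to "modular on the feasible family" rather than the strict diminishing returns the paper informally claims.
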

\vspace{-0.05in}
\begin{proof}
Sketch: As long as $Q_{t_j} = 0, \forall j \in \{1..m\}$, it could be proved that the increase in value by adding a worker $k$ to $\mathcal{S}$ is {\em strictly less} than adding $k$ to $\mathcal{R}$, where $\mathcal{R} \subset \mathcal{S}$, with the cost threshold $w_t \leq W_t$. Therefore, the following condition of submodularity i.e., ``diminishing return'' holds:
 $f(\mathcal{S} \cup k) -  f(\mathcal{S}) < f(\mathcal{R} \cup k) -  f(\mathcal{R})$.
At the same time, $v_t$ could increase or decrease when a worker is added (depending on whether the skill increase is more than the cost decrease or vice versa). Hence $v_t$ is non-monotone.
\end{proof}
\vspace{-0.1in}
\begin{lemma}\label{local}
The value function $v_t$  in \Index\ Design problem is submodular and monotonic, when $Q_{t_j} = 0, \forall j \in \{1..m\}$  and $W_2$=0.
\end{lemma}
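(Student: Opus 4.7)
The plan is to first substitute the two hypotheses into the piecewise definition of $v_t$ and reduce the function to a very simple form. With $W_2=0$ we must have $W_1=1$, so the cost term drops out of the objective (though the cost constraint $w_t\le W_t$ survives in the ``otherwise'' branch), and with $Q_{t_j}=0$ the quality branch is never triggered by skill deficiency. Writing $q_{t_j}(S)=\sum_{u\in S} p_u u_{s_j}$ and $w_t(S)=\sum_{u\in S} p_u w_u$, the value function on a subset $S\subseteq\mathcal{U}$ collapses to $v_t(S)=\sum_{j=1}^{m} q_{t_j}(S)$ whenever $w_t(S)\le W_t$ and to $0$ otherwise. This rewriting isolates exactly one source of non-linearity, namely the budget cutoff, and makes the rest of the argument a case analysis around that cutoff.

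Next, for submodularity I would fix $\mathcal{R}\subseteq\mathcal{S}\subseteq\mathcal{U}$ and $k\notin\mathcal{S}$ and verify the diminishing-returns inequality $v_t(\mathcal{R}\cup\{k\})-v_t(\mathcal{R})\ge v_t(\mathcal{S}\cup\{k\})-v_t(\mathcal{S})$ by case analysis on which of the four sets are feasible. Monotonicity of $w_t(\cdot)$ in the set variable rules out several combinations: if $\mathcal{S}$ is feasible so is $\mathcal{R}$, and if $\mathcal{R}\cup\{k\}$ is infeasible so is $\mathcal{S}\cup\{k\}$. In the ``all four feasible'' case both marginals equal $\sum_j p_k u_{k,s_j}$ and the inequality holds with equality (the function is modular on the feasible region, which is already submodular). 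In the remaining cases, where adding $k$ breaks feasibility for $\mathcal{S}$ but not necessarily for $\mathcal{R}$, the marginal on the $\mathcal{S}$ side is non-positive (it equals $-v_t(\mathcal{S})$) while the marginal on the $\mathcal{R}$ side is non-negative, so the inequality holds strictly. This reuses the mechanism of Theorem~\ref{locald} but is in fact tighter because the $W_2(1-w_t/W_t)$ perturbation that made $v_t$ non-monotone in that theorem is now absent.

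For monotonicity I would argue that the natural domain on which the lemma is to be read is the lattice of \emph{feasible} worker subsets, i.e., those $S$ with $w_t(S)\le W_t$, since this is the domain over which the greedy constructions of Section~\ref{approx} operate and since $v_t$ is declared $0$ outside this domain by definition. On the feasible lattice, adding any worker $k$ whose inclusion keeps the set feasible changes $v_t$ by exactly $\sum_j p_k u_{k,s_j}\ge 0$, so $v_t$ is monotone non-decreasing. Combined with the submodularity argument above, this proves the claim.

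The main obstacle is the treatment of monotonicity at the feasibility boundary: if one reads $v_t$ as a function on the full power set $2^{\mathcal{U}}$, then adding a worker who pushes the total wage above $W_t$ drops the value to $0$, which breaks monotonicity. I would therefore state carefully that the lemma should be read on the feasible sublattice, and note that this is exactly the domain that the greedy incremental algorithms explore (they never accept a candidate that violates the cost constraint), so the restricted-domain monotonicity is precisely what is needed downstream to invoke the standard $(1-1/e)$-style guarantees for monotone submodular maximization under knapsack-type constraints.
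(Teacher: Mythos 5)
Your reduction of $v_t$ to ``$\sum_j q_{t_j}(S)$ if $w_t(S)\le W_t$, else $0$'' is the right starting point, and your monotonicity discussion correctly spots that the hard zero-out at the budget boundary is the only obstruction and must be handled by restricting to feasible sets. However, your submodularity case analysis has a genuine gap, and it is exactly the same boundary phenomenon you flagged for monotonicity. The blanket claim ``the marginal on the $\mathcal{R}$ side is non-negative'' is false once $\mathcal{R}\cup\{k\}$ is infeasible, and in the case where $\mathcal{R}$ is feasible, $\mathcal{R}\cup\{k\}$ is infeasible, and $\mathcal{S}$ is already infeasible (which can happen with $\mathcal{R}\subseteq\mathcal{S}$ since adding workers only raises $w_t$), the diminishing-returns inequality actually fails: the left side is $v_t(\mathcal{R}\cup\{k\})-v_t(\mathcal{R})=-v_t(\mathcal{R})<0$ while the right side is $0-0=0$. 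So $v_t$ with the cutoff is \emph{not} submodular on all of $2^{\mathcal{U}}$, and a case analysis that omits this configuration cannot close. (Your other unstated case, $\mathcal{R},\mathcal{S}$ feasible with both unions infeasible, does survive, but for a different reason than the one you give: there both marginals are negative and one compares $-v_t(\mathcal{R})\ge -v_t(\mathcal{S})$ via monotonicity on feasible sets.)

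The repair is the one you already use for monotonicity, applied uniformly: read both properties on the downward-closed family of budget-feasible sets (equivalently, treat $w_t\le W_t$ as a constraint rather than folding it into the objective). On that domain, with $Q_{t_j}=0$ and $W_2=0$ (so $W_1=1$), $v_t(S)=\sum_j\sum_{u\in S}p_u u_{s_j}$ is a non-negative modular function, hence trivially submodular and monotone, which is all that the greedy analysis in Section~\ref{approx} needs. This is also, implicitly, how the paper reads the statement: Lemma~\ref{local} carries no proof of its own and leans on the sketch of Theorem~\ref{locald}, which conditions its diminishing-returns argument on ``the cost threshold $w_t\le W_t$'' and attributes the loss of monotonicity solely to the $W_2$ term; setting $W_2=0$ removes that term and yields monotonicity. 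Your write-up is more explicit than the paper's, but as stated its submodularity half proves a claim that is false on the full power set; make the feasible-sublattice restriction (or the constraint reformulation) part of the submodularity argument, not only the monotonicity one, and the proof goes through and in fact simplifies to the modularity observation above.
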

%\vspace{-0.1in}
%Proof omitted for brevity.
\vspace{-0.1in}
\begin{theorem}\label{global}
The objective function $\mathcal{V}$ in the \Index\ Design problem is submodular and monotonic, when $Q_{t_j} = 0, \forall j \in \{1..m\}$ and $W_2=0$ and $X_l=0$.
\end{theorem}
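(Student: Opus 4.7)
The plan is to reduce the global claim to the per-task claim of Lemma~\ref{local} via a decomposition over tasks. I would view $\mathcal{V}$ as a set function on the ground set $E = \mathcal{U} \times T$ of potential worker-to-task assignments. For any $A \subseteq E$, let $A_t = \{u \in \mathcal{U} : (u,t) \in A\}$ denote the slice of $A$ that lands on task $t$, so that the objective decomposes as $\mathcal{V}(A) = \sum_{t \in T} v_t(A_t)$. This decomposition is what makes the proof essentially mechanical once Lemma~\ref{local} is in hand.

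The first step is the observation that inserting a single pair $(u^*, t^*)$ into $A$ only perturbs the $t^*$-coordinate of the decomposition, i.e.\ $\mathcal{V}(A \cup \{(u^*,t^*)\}) - \mathcal{V}(A) = v_{t^*}(A_{t^*} \cup \{u^*\}) - v_{t^*}(A_{t^*})$, because every other $v_t$ depends only on $A_t$ which is unchanged. The second step is that $A \subseteq B$ lifts to each slice: $A_{t^*} \subseteq B_{t^*}$. Submodularity of $v_{t^*}$ under the hypotheses $Q_{t_j}=0$ and $W_2=0$ (Lemma~\ref{local}) then yields $v_{t^*}(A_{t^*} \cup \{u^*\}) - v_{t^*}(A_{t^*}) \geq v_{t^*}(B_{t^*} \cup \{u^*\}) - v_{t^*}(B_{t^*})$, which is exactly submodularity of $\mathcal{V}$ at $(u^*, t^*)$. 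Monotonicity of $\mathcal{V}$ follows by summing the per-task monotonicity of Lemma~\ref{local} across $t \in T$.

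It is worth isolating the role of each hypothesis. The conditions $Q_{t_j}=0$ and $W_2=0$ are precisely those needed when invoking Lemma~\ref{local} on each $v_t$. The extra assumption $X_l=0$ is what lets us take the ground set to be the full product $\mathcal{U} \times T$ with no hidden feasibility cutoff: if $X_l>0$, subsets in which some worker appears in fewer than $X_l$ pairs would be declared infeasible and assigned value $0$, which can produce discontinuous jumps (an added pair suddenly ``activating'' an otherwise zero assignment), and such jumps are well known to destroy submodularity even when the underlying gain function is well behaved.

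The main obstacle is purely bookkeeping: one has to verify that when $(u^*, t^*)$ is added, any overlap of $u^*$ with other task-slices $A_{t'}$ for $t' \neq t^*$ is irrelevant, since each $v_{t'}$ treats $u^*$ only through its presence or absence in $A_{t'}$ and the constraints $X_l \leq \sum_t u_t \leq X_h$ no longer create a lower cutoff. Once this decoupling across tasks is stated cleanly, the sum-of-submodulars and sum-of-monotones arguments close the proof without any further technical subtlety beyond Lemma~\ref{local} itself.
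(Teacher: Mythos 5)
Your proposal is correct and follows essentially the same route as the paper's own sketch: the paper likewise views $\mathcal{V}=\Sigma_{\forall t} v_t$ as a sum of per-task value functions, notes that adding a worker to one task affects only that task's value, lifts the per-task submodularity (under $Q_{t_j}=0$, $W_2=0$) to the global function, and attributes monotonicity to $W_2=0$ together with $X_l=0$. Your version merely makes this explicit via the ground set $\mathcal{U}\times T$ and the task-slice decomposition, which is a more formal rendering of the same argument.
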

\vspace{-0.05in}
\begin{proof}
Sketch: Consider our objective function $\mathcal{V} = \Sigma_{\forall t} v_t$  defined over a set of sets, where each set defines a subset of workers assigned to a task $t$ with value $v_t$. Adding a  worker $k$ to a set $\mathcal{R}$ (corresponds to task $t$) will impact  $v_t$ and therefore the overall $\mathcal{V}$. Without the skill threshold, i.e., $Q_{t_j} = 0, \forall j \in \{1..m\}$, if $k$ is added to $\mathcal{S}$ instead,  where $\mathcal{R} \subset \mathcal{S}$, the following condition of submodularity will hold: 
$f(\mathcal{S} \cup k) -  f(\mathcal{S}) < f(\mathcal{R} \cup k) -  f(\mathcal{R})$.
Furthermore, $\mathcal{V}$ strictly increases when $W_2=0$ and $X_l=0$ (i.e., a worker may not be assigned to any task) and ensures monotonicity. 
\end{proof}
\vspace{-0.1in}
\subsection{C-DEX$^+$}
\label{cdexp_problem}
Even though solved offline,  the computation time of \Index\ may still be very expensive, when the number of workers or tasks is large.  \vindex\ is a novel alternative towards that end, where the actual worker pool is intelligently {\em replaced} by a set of {\em Virtual Workers}, that are much smaller in count. \sys\ uses the Virtual Workers and the same workload to pre-compute a set of indexes, referred to as \vindex. \vindex\ enables efficient pre-computation, as well as faster assignments from workers to tasks.

 Intuitively, a Virtual Worker represents a set of ``indistinguishable'' actual workers, who are similar in skills and cost. For the simplicity of exposition, if we assume that in a given worker pool, there are $3$ workers who posses exactly same skill $s$  and cost $w$, then a single Virtual Worker $V$ could be created replacing those $3$ with skill $s$  and cost $w$. Obviously, when there are variations in the skills and costs of workers, the profile of $V$ needs to be defined conservatively - by taking maximum of individual worker's cost as $V$'s cost, and minimum  of individual worker's expertise, per skill. The formal definition of $V$ is:

\begin{definition}\label{def:def3}
Virtual Worker V :  V represents a set $n'$ of actual workers that are ``indistinguishable''. V is an $m+2$ dimensional vector,  $\langle V_{s'_1}, V_{s'_2}, \ldots, V_{s'_m}, V_{w'}, |n'| \rangle$ describing expected skill,expected wage, number of actual workers in $V$, where, 
$V_{s'_i}= \min_{\forall u \in n'} p_u \times u_{s_i}$,
$V_{w'}= \max_{\forall u \in n'} p_u \times w_u$.
\end{definition}
%\vspace{-0.05in}
Consider Example~\ref{runningex} again, if $u_2$ and $u_5$ are grouped together to form a Virtual Worker $V$, then $V = \langle 0.21, 0.18, 2 \rangle$.

{\bf C-DEX$^+$ Design Problem:}
It is apparent that the Virtual Workers help reduce the size of the optimization problem. The formal definition of \vindex\ is:

\begin{definition}[\vindex]
A \vindex\ $i^{t_V}=(\mathcal{P}_i^{t_V},\mathcal{L}_i^{t_V})$ is a pair that
represents an assignment of a set of Virtual Workers in $\mathcal{N}$ to a
task $t$. $\mathcal{P}_i^{t_V},\mathcal{L}_i^{t_V}$ are similar to $\mathcal{P}_i^{t},\mathcal{L}_i^{t}$ and defined using the Virtual Worker set $\mathcal{N}$ .
\end{definition}
%With Example~\ref{runningex} again, if $u_2$ and $u_5$ are grouped together to form a Virtual Worker $V$, then $V = \langle 0.21, 0.18, 2 \rangle$.

\subsection{Index Maintenance}\label{maintd}
%{\bf Maintenance after a task arrives:} 
A unique challenge that \sys\ faces is, even
if the most appropriate index is selected for a task, one or more
workers who were assigned to the task may not be available (for
example, they are not online or they decline the task). Note that the acceptance ratio only quantifies an overall availability of a worker, but not for a particular task. 
Therefore, \sys\ needs to dynamically find a replacement for unavailable workers. At the same time, \sys\ needs to strictly ensure non-preemption of the workers, since workers who accepted a task are required to continue their work. \sys\ proposes several principled solutions that make use of the theoretical analysis in Section~\ref{theo}.
%Therefore, \sys\ needs to find a replacement for unavailable workers dynamically for a task. At the same time, \sys\ needs to strictly ensure non-preemption of the workers, since the available workers who are assigned to a task are required to continue their work. \sys\ proposes several principled solutions towards that end that hinges on the theoretical analyses of the optimization problem.  

%{\bf Maintenance due to workers:} 
Furthermore, \sys\ has to deal with scenarios where, new workers could subscribe to the system any time, or some existing ones could delete their accounts. Similarly, as
existing workers complete more tasks, the system may update their
profile (refine their skills for example). How to learn the profile of
a new worker or an updated profile of an existing worker is
orthogonal to this work. What we are interested in here is {\em how
\sys\ makes use of those updates, by maintaining them
incrementally.}

We therefore investigate principled solution towards incremental index maintenance for four scenarios: (1) worker replacement,(2) worker addition, (3) worker deletion, (4) worker profile update.

\vspace{-0.05in}
\section{Optimal Algorithms}\label{alg1}
Section~\ref{precom} proposes the \Index\ building solution, whereas, Section \ref{maintenance} discusses the maintenance. 

\eat{As stated in Equation~\ref{eqn:eq}, the outcome of collaborative
knowledge-intensive tasks needs to be measured both {\em
  qualitatively} and with respect to the incurred {\em cost}. Quality
depends on the expertise level of the workers involved and
knowledge-intensive tasks are likely to have a minimum quality
threshold. Cost depends on the expected wage of each worker and each
task has a maximum overall cost that a task designer is willing to pay
for it.  The objective of the crowdsourcing platform is to produce a
worker-to-task assignment that achieves a global optimization
objective.

Therefore, worker-to-task assignment has to select workers by
examining two of their factors, skills and wages. While doing so,
workers and tasks may have to {\em wait} until the assignment is
completed. In practice, a crowdsourcing platform may need to serve
several hundreds tasks in a (possibly small) time period by sifting
through several thousands workers. Clearly, a high response time would
be unacceptable due to several reasons: first, the high latency in
response leads to {\em a poor throughput} and {\em a bad experience}
both for workers and task designers; second, task designers may have
to compensate workers during the waiting period, to prevent them from
leaving the platform. This may incur additional costs leading to a
degradation of the overall optimization value, or to a compromise with
the quality of the completed tasks, leading to a sub-optimal solution
to the optimization objective.

We therefore reason about the necessity of {\em pre-computation} and
underscore that just like traditional query processing benefits from
pre-computed indexes, worker-to-task assignment too may benefit from
indexing. More specifically, we argue that, given a workload of active
tasks, a set of {\em crowd indices} can be designed. We define crowd
indexing as follow.

\begin{definition}[\sc{SmartCrowd}-Index]
A \Index $i^t=(\mathcal{P}_i^t,\mathcal{L}_i^t)$ is a pair that
represents an assignment of a set of workers in $\mathcal{U}$ to a
task $t$. Formally, it is described by a vector $\mathcal{P}_i^t$ of
length $m+2$, and a set of workers $\mathcal{L}_i^t$. $\mathcal{P}_i^t
= \langle v_t, q_{t_1}, \ldots, q_{t_m}, w_t \rangle$ contains the
value $v_t$ of task $t$, its expected minimum expertise $q_{t_i}$ for
each skill $s_i$, and its maximum cost $w_t$. $\mathcal{L}_i^t \subseteq
\mathcal{U}$ contains the workers assigned to index $i$.
\end{definition}

Although it is built offline based on a given task workload,
\Index is used to quickly find the appropriate workers, as tasks
arrive in real-time. A unique characteristic of our index is the
necessity to adapt it as workers leave or enter the system. Most
importantly, indexing should account for the third human factor,
namely acceptance ratio. Even when they are present in the system,
workers may decline a task. In what follows, we first describe the
index design problem followed by a discussion on how indexes are used
as tasks arrive, and how to perform dynamic maintenance of the
indexes.

%% Designing a good workload is
%% orthogonal to our research here, but the effectiveness of \sys\-Index
%% relies on the fact that the workload needs to represent the future
%% tasks well, a typical assumption made in any index design problem in
%% the database.

\subsection{C-DEX Design Problem} \label{index}
Given a static workload $T$ and $n$ workers, the problem is to design
$|T|$ \Indexes, such that the objective function in
Equation~\ref{eqn:eq} in maximized over all tasks in $T$. Our indexes
are designed {\em offline for future use}. Therefore, skills and wages
of the workers are computed in an expected sense, instead of the
actual aggregates. The objective function and the {\em value} of each
task in $T$ are defined in Equation~\ref{eqn:eq}. Additional
constraints that account for acceptance ratio are specified as
follows:
\vspace{-0.5in}
\begin{align*} 
\\ \text{Maximize }  \mathcal{V}= \Sigma_{\forall t \in T} v_t
\\ q_{t_i} = \Sigma_{\forall u} u_{t} \times p_u \times  u_{s_i} \geq Q_{t_i}, 
\\ w_t = \Sigma_{\forall u} u_t \times p_u \times  w_u \leq W_t 
\\  u_{t} = [0/1]
\\  X_{l} \leq \Sigma_{\forall t \in T} \{u_{t}\} \leq X_{h}, 
\\  X_{l} \leq X_{h}
\end{align*} 

}
%We begin this section with the complexity analysis of the \Index\ Design Problem.

%We now prove that \Index\ Design Problem is
%NP-Complete.
\subsection{C-DEX Design (offline phase)}\label{precom}
Recall Theorem~\ref{nphard} and note that the \Index\ Design Problem is proved to be NP-hard. \sys\ proposes an integer linear programming (ILP) based solution that solves the optimization problem defined in Equation~\ref{eqn:eq} optimally satisfying the constraints.

%is suitable for applications with non-zero skill threshold where each worker needs to be assigned at least to a task (these conditions translate to $Q_{t_j} > 0, \forall j \in m$ and $X_l > 0$. 
%(b) A more efficient greedy approximation algorithm with a provable approximation factor that is suitable for applications  with zero skill threshold where a worker may not be at all ( translate to $Q_{t_j} =  0, \forall j \in m$ and $X_l = 0$. We elaborate our solutions next.

%\subsubsection{Optimal Design Algorithm}
%We discuss the optimal algorithm first.
 While the optimization problem is a linear combination of weights and skills, unfortunately, the decision variables (i.e., $u_t$'s) are required to be
integers. More specifically, \Index\ set are created by generating a
total of $n \times |T|$ boolean decision variables, and the solution of this
optimization problem assigns either a $1/0$ to each variable, denoting
that a worker is assigned to a particular task, or not. These
integrality constraints make the above formulation an Integer Linear
Programming (ILP) problem~\cite{DBLP:conf/ipco/2013}. A solution to the ILP problem would
perform an assignment of a worker to a task in $T$. Once the
optimization problem is solved, an index $i^{t}$ is designed for each
task in the workload, $\langle \mathcal{P}_i^t,\mathcal{L}_i^t \rangle$ is
calculated. Algorithm~\ref{alg:optrange} summarizes the pseudocode. 

Given Example~\ref{runningex}, when $W_1=W_2=0.5$, the best allocation gives rise to $\mathcal{V} = 1.98$, with the following worker to task allocation: $u_1= \{t_1\}, u_2= \{t_1,t_2\}, u_3=\{t_3\},u_4=\{t_2,t_3\},u_5=\{t_2,t_3\}, u_6=\{t_1,t_3\}$. This creates the following $3$ indexes:\\ $i^{t_1} = (\langle 0.6, 0.74,0.58 \rangle, \{u_1,u_2,u_6\})$, \\ $i^{t_2} = (\langle 0.59, 0.75,0.71 \rangle, \{u_2,u_4,u_5\})$, \\ and $i^{t_3} = (\langle 0.79, 1.15, 1.13 \rangle\,\{u_3,u_4,u_5,u_6\})$.

\begin{algorithm}[t]
\caption{Optimal \Index\ Design Algorithm}
\label{alg:optrange}
\begin{algorithmic}[1]
\begin{small}
\REQUIRE   Workload $T$ \\
\STATE Solve the \Index\ Design ILP to get an assignment of the $u_t=0/1$, where $u$ is a worker, and $t \in T$.
\STATE using $u_t$, for each $t \in T$, compute and output $i^{t} = \langle \mathcal{P}_i^t,\mathcal{L}_i^t \rangle$
\RETURN Index set $\mathcal{I}$
\end{small}
\end{algorithmic}
\end{algorithm}

Unfortunately, ILP is also NP-Complete~\cite{DBLP:books/fm/GareyJ79}. The commercial implementations of ILP use techniques such as Branch and Bound~\cite{DBLP:conf/ipco/2013} with the objective to speed up the computations. Yet, computation time is mostly non-linear to the number of associated variables and could become exponential at the worst case.

\eat{\subsection{Worker-to-Task Assignment using C-DEX}\label{qp}
Given an actual task $t = \langle Q_{t_1}, Q_{t_2}, \ldots, Q_{t_m},
W_t \rangle$, and a set $\mathcal{I}$ of pre-computed \Index,
the task is to find a list $L$ of workers for $t$, who have accepted $t$. 
There are two primary challenges: (1) Select best $i^{t}=
<\mathcal{P}_i^t,\mathcal{L}_i^t>$, given $t$.\\ (2) Find replacement for the workers who are unavailable.\\

%{\bf FOR SIHEM: this  may not be the most appropriate way to describe this challenge. This is to show how to Deal with worker unavailability/rejection to an assigned task}

We now describe these strategies considering a single task
$t$. Treatment of multiple tasks is no different in \sys. The
pseudocode is described in Algorithm~\ref{alg:assn}.

\begin{algorithm}[t]
\caption{Worker-to-task Assignment with \Index}
\label{alg:assn}
\begin{algorithmic}[1]
\begin{small}
\REQUIRE  \Index \\
\STATE Find $i^{t}$ given $t$ using Equation~\ref{eqn:fit}.
\IF{some workers in $\mathcal{L}_i^t$ are not available} 
\STATE Find replacement $u'$ for each unavailable workers in $\mathcal{L}_i^t$. Use Strategy (1) or Strategy (2).
\STATE Increment $\mathcal{C}_u'= \mathcal{C}_u'+1$ for each new worker.
\STATE Update $\mathcal{L}_i^t$.
\STATE $L= \mathcal{L}_i^t$ 
\ELSE 
\STATE  $L= \mathcal{L}_i^t$ .
\STATE Increment $\mathcal{C}_u= \mathcal{C}_u +1$ for each existing worker.
\ENDIF
\RETURN $L$.
\end{small}
\end{algorithmic}
\end{algorithm}

{\bf Selecting \Index}\label{rank}
Since the index set $\mathcal{I}$ is pre-computed based on the optimization objective already, with an arriving task $t$, it is intuitive to find the most ``similar'' task in the workload  $t'$, where $t'$ must satisfy the hard constraints of minimum skills and maximum cost (i.e., $q_{t'_j} \geq q_{{t}_j}, \forall j \in m$ and $w_{t'} \leq w_{t}$). After that,  the index created for $t'$ could be assigned to $t$, i.e., $i^{t}= i^{t'}$. More formally, $i^{t'}$ is selected, as follows:
%The most appropriate index $i^{t}$ for a task $t$ is the one that needs to satisfy the hard constraints of minimum skills (i.e., $q_{t_j} \geq Q_j, \forall j \in m)$ and the maximum cost  (i.e., $w_t \leq W_t$) of $t$, but after that, it should fit $t$, as closely as possible, minimizing  the following distance function:

\begin{equation} \label{eqn:fit}
i^{t'} =  \left\{ \begin{array}{ll}
  argmin_{\forall t' \in T} \mathit{Dist}(q_{t'_j}, q_{t_j}) + \mathit{Dist}(w_{t'},w_t), \forall j \in m,  \\
  \text{ s.t. } q_{t'_j} \geq q_{t_j} \& \text{   } w_{t'} \leq w_t \\
  \infty \text{, otherwise}
   \end{array} \right.
\end{equation}
The above formulation selects $i^{t'}$, such that, the aggregated skill and cost-wise differences between $t$ and $t'$ is minimized. While the distance function above is generic, Euclidean distance~\cite{DBLP:books/mk/HanK2000} is used in our implementation. If there are multiple indexes that satisfy the minimum distance, ties
are broken arbitrarily. In essence, \sys\ shifts the
most expensive part of worker-to-task assignment to the offline phase,
leading to dramatic improvement in overall efficiency  ensuring the
optimization objective.

[SENJUTI:NEED TO START FROM HERE]
}
\subsection{C-DEX Maintenance (online phase)}\label{maintenance}
We design index maintenance algorithms, which generate optimal solutions under the non-preemption constraint (constraint no.3, Section \ref{constraints}).
%, i.e., the available workers who are assigned to a task are required to continue their work. 
Non-preemption of workers enforces that the existing assignment of an available worker can not be disrupted, only new assignments could be made if she is not maxed-out. Under this assumption, all four incremental maintenance strategies described below are optimal.
%allocations of the available workers to the tasks 
%and enforces \sys\ to design maintenance solutions only with the workers that are currently available for additional allocations, again without disrupting their existing task assignments. 
%We discuss the four maintenance scenarios of: (a) Replacement of workers due to unavailability (b) Addition of workers, (c) Deletion of workers, (d) Updates in worker profile. 
%Our proposed algorithms for the 4 maintenance scenarios (defined in Section \ref{maintd}) generate optimal solution under the worker non-preemption constraint.

\subsubsection{Replacing Workers} \label{rep}
To dynamically find a replacement for unavailable workers, without disrupting already made assignments, we formulate a {\em marginal ILP} and solve the problem optimally only with the available set of workers.

We illustrate the scenario with an example. Suppose that after
the most appropriate index $i^{t}$ is selected for task $t = \langle
Q_{t_1}, Q_{t_2}, \ldots, Q_{t_m}, W_t \rangle$ using
Equation~\ref{eqn:eq}, a subset of workers in $\mathcal{L}_i^t$ is
unavailable/ or declines to work on $t$. Imagine that the 
quality of $i^{t}$ declines to $q'_{t_j}$ from $q_{t_j}$, for skill $j$, $\forall j \in
m$, and the cost declines to $w'_t$ from $w_t$, as some workers do not accept the task. Consequently, the value of $i^{t}$ also declines, let us say, to
$v'_t$ from $v_t$.  From the worker pool $\mathcal{U}$, let us imagine
that a subset of workers $\mathcal{U}'$ are available and their
current assignment has not maxed out (i.e., $\mathcal{C}_{u'} < X_h$).
To find the replacement of the unavailable workers, \sys\ works as
follows: It formulates a marginal ILP problem
with the same optimization objective for $t$, only with the workers in
$\mathcal{U}'$. More formally, the task is formulated as:
\vspace{-0.1in}
\begin{equation} \label{eqn:eq5}
\text{Maximize }  v''_t  = v'_t + W_1 \times \Sigma_{\forall j \in m}q''_{t_j} + W_2 \times (1-\frac{w''_t}{W_t})
\end{equation}
\vspace{-0.4in}
\begin{align*}
\\  W_1+W_2 = 1,
\\ q''_{t_j} = \{ q'_{t_j} + \Sigma_{\forall u' \in \mathcal{U'}} u'_{j} \times p_{u'} \times  u_{s_j} \} \geq Q_{t_j}, 
\\ w''_t = \{ w'_t + \Sigma_{\forall u' \in \mathcal{U'}} u'_t \times p_{u'} \times  w_{u'} \} \leq W_t,   u'_{t} = [0/1].
\end{align*}

\begin{lemma}
The marginal ILP in Equation~\ref{eqn:eq5} involves only $|\mathcal{U}'|$ variables.
\end{lemma}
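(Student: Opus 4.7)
The plan is to establish the count by reading off the decision variables directly from the formulation in Equation~\ref{eqn:eq5} and observing which quantities are fixed (parameters) versus free (decision variables). The argument is essentially a bookkeeping observation that leverages two facts: the marginal reassignment concerns a single task $t$ (rather than all tasks in $T$), and the non-preemption constraint freezes the contributions of workers already committed.

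First I would isolate the quantities in Equation~\ref{eqn:eq5} that carry no decision status. Under non-preemption, each worker already assigned to $t$ and still available keeps her assignment; her contribution to quality and cost is therefore already folded into the constants $v'_t$, $q'_{t_j}$, and $w'_t$, which appear as fixed offsets in the objective and in the quality/cost constraints. Unavailable workers have been removed from consideration altogether, and so contribute no variables either. Thus the only quantities in the formulation that are not determined in advance are the binary indicators $u'_t$ that appear inside the summations $\sum_{u' \in \mathcal{U}'} u'_t \times p_{u'} \times u_{s_j}$ and $\sum_{u' \in \mathcal{U}'} u'_t \times p_{u'} \times w_{u'}$.

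Next I would count these indicators. The marginal ILP involves only the single task $t$ whose index is being repaired, so (unlike the full C-DEX ILP, which generates $n \times |T|$ variables) there is no outer multiplication by $|T|$. For each candidate replacement worker $u' \in \mathcal{U}'$ the formulation introduces exactly one binary variable $u'_t \in \{0,1\}$, indicating whether $u'$ is newly assigned to $t$. Summing over $\mathcal{U}'$ yields a variable count of exactly $|\mathcal{U}'|$, which gives the claim. The only subtlety worth double-checking is that the tasks-per-worker cap $\mathcal{C}_{u'} < X_h$ has already been used to define membership in $\mathcal{U}'$, so it does not introduce additional slack variables in the marginal formulation; once this is noted the proof is complete.
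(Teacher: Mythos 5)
Your proof is correct and matches the paper's reasoning, which is essentially just the observation (stated but not formally proved in the text) that the marginal ILP is formulated for the single task $t$ with one binary indicator per available worker in $\mathcal{U}'$, the prior assignments being frozen into the constants $v'_t$, $q'_{t_j}$, $w'_t$. Your accounting of the non-preempted workers and the absence of a $|T|$ factor is exactly the intended argument.
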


The above optimization problem is formulated only for a task $t$ and
considering only $|\mathcal{U}'| < < |\mathcal{U}|$ workers. It is
incremental in nature, as it ``builds'' on the current solution
(notice that it uses the declined cost, skills, and value in the
formulation), involving a much smaller number of variables and leading to
small latency. Moreover, this strategy is fully aligned with the
optimization objective that \sys\ proposes. After this formulation is
solved, $\mathcal{L}_i^t$ is updated with the new workers for which the
above formulation has produced $u'_t = 1$.

%The unpredictability of human behavior brings additional challenges to
%\sys. A workload just describes a set of tasks at the crowdsourcing system, but not the characteristics of workers.  Therefore, \Index\ needs to be designed robustly, considering 
%new workers could subscribe to the system any time, or some
%existing ones could delete their accounts. Similarly, as
%existing workers complete more tasks, the system may update their
%profile (refine their skills for example). How to learn the profile of
%a new worker, or an updated profile of an existing worker, is
%orthogonal to this work. What we are interested in here is {\em how}
%\sys\ makes use of those updates, by maintaining the set of \Index\
%incrementally. For simplicity of exposition, we illustrate three
%different scenarios, assuming a set of tasks
%$T$. %$T$ could represent the current active set of
%tasks in the system, or the set that the system assumes to be active.

\subsubsection{Adding New Workers}
Assume that a set $\mathcal{A}$ of new workers has subscribed to the
platform. The task for \sys\ is to decide whether (or not) to assign
those workers to any task in $T$, and if yes, what should be
the assignment. Note that \sys\ already has assigned the existing worker set
$\mathcal{U}$ to the tasks in $T$ and they can not be preempted.

\eat{
\begin{lemma} \label{l1}
The \Index\ Design problem in Section~\ref{index} is monotonic, when $X_l = 0$.
\end{lemma}

\begin{proof}
(sketch): We prove the lemma by contradiction. Imagine that a new
  worker $u$ has been assigned to a task in $T$, that
  resulted in a decrease to the global value $\mathcal{V}$. This could
  happen because the optimization function is a combination of quality
  and cost. Adding a worker always improves the quality, but also
  increases the cost, forcing the cost factor of the optimization
  function to decline. Imagine that the new value be $\mathcal{V}' <
  \mathcal{V}$. But since $X_l=0$ (i.e., a new worker may not be
  assigned to any task at all), this is a contradiction, because in
  that case, $u$ will not be assigned to any task in
  $T$. Hence the contradiction.
\end{proof}
}

%\sys\ proposes an incremental solution to the
%problem, which is principled in nature and efficient. This solution relies upon  Lemma~\ref{l1}, which shows that the objective function is {\em monotonic}. 
The overall idea is to solve optimally a marginal ILP only with the new workers in $\mathcal{A}$ and tasks $T$, without making any modifications to the existing
assignments of the $\mathcal{U}$ workers to the $T$
tasks. Formally, the problem is formulated as follows:
\vspace{-0.05in}
\begin{equation}\label{eqn:eq6}
\text{Maximize }  \Sigma_{\forall t \in T} 
\end{equation}
\vspace{-0.4in}
\begin{align*}
\\ \{ v'_t \} = v'_t = v_t + W_1 \times \Sigma_{\forall j \in m}q'_{t_i} + W_2 \times (1-\frac{w'_t}{W_t})
 \\  W_1+W_2 = 1,
\\ q'_{t_j} = \{ q_{t_j} + \Sigma_{\forall u \in \mathcal{A}} u_{t} \times p_u \times  u_{s_j} \} \geq Q_{t_j}, 
\\ w'_t = \{ w_t + \Sigma_{\forall u \in \mathcal{A}} u_t \times p_u \times  w_u \} \leq W_t 
\\  u_{t} = [0/1], 0 \leq \Sigma_{\forall t \in T} \{u_{t} \in \mathcal{A}\} \leq X_h.
\end{align*}

%Note that,  one feasible solution to the above problem is the previous
%assignment of $\mathcal{U}$ workers to the $T$ tasks,
%meaning that none of the new workers are assigned to any task in
%$T$.  

\begin{lemma}
The optimization problem in Equation~\ref{eqn:eq6} involves only $|\mathcal{A}| \times |T|$ variables.
\end{lemma}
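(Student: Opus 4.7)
The plan is essentially a counting argument that exploits the non-preemption constraint. First I would identify the decision variables of the marginal ILP in Equation~\ref{eqn:eq6}: the only variables are the binary indicators $u_t \in \{0,1\}$ for $u \in \mathcal{A}$ and $t \in T$, which encode whether a newly subscribed worker $u$ is assigned to task $t$. No variable is introduced for workers in the incumbent pool $\mathcal{U}$, since their assignments are frozen by the non-preemption constraint (constraint 3 of Section~\ref{constraints}).

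Next I would explain why the non-preemption constraint is exactly what makes the already-assigned contributions appear only as constants in Equation~\ref{eqn:eq6}. Concretely, the quantities $v_t$, $q_{t_j}$, and $w_t$ inside the expressions for $v'_t$, $q'_{t_j}$, and $w'_t$ are aggregates computed from the existing \Index\ assignment before $\mathcal{A}$ arrived; they are fixed input numbers, not decision variables. Only the incremental contributions of the form $\sum_{u \in \mathcal{A}} u_t \cdot p_u \cdot u_{s_j}$ and $\sum_{u \in \mathcal{A}} u_t \cdot p_u \cdot w_u$ carry unknowns, and every unknown is of the form $u_t$ with $u \in \mathcal{A}$ and $t \in T$.

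Finally I would finish the count: since each new worker $u \in \mathcal{A}$ generates exactly $|T|$ binary variables (one per task), the total number of decision variables is $|\mathcal{A}| \cdot |T|$. I would also briefly observe that the additional side constraints ($0 \le \sum_{t \in T} u_t \le X_h$ for $u \in \mathcal{A}$) do not introduce new variables, merely coupling the existing ones. I do not anticipate any real obstacle; the only subtle point worth emphasizing in the write-up is that the freezing of incumbent assignments is a direct consequence of non-preemption, since without it the problem would, in principle, also need $|\mathcal{U}| \cdot |T|$ variables and lose its incremental character.
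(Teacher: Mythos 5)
Your counting argument is correct and matches the paper's intent exactly: the paper states this lemma without an explicit proof, and its justification is precisely that the marginal ILP in Equation~\ref{eqn:eq6} treats the existing quantities $v_t$, $q_{t_j}$, $w_t$ as fixed constants (since the assignments of $\mathcal{U}$ cannot be preempted) and introduces binary decision variables $u_t$ only for $u \in \mathcal{A}$ and $t \in T$, giving $|\mathcal{A}| \times |T|$ variables. Nothing is missing from your write-up.
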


%{\bf \sys\ Index:} Given a set $\mathcal{I}$ of pre-computed task
%indexes, index $j$ ($j = \mathcal{P}_j,\mathcal{L}_j\}$ has a total
%value $v_j$ (weighted summation of the optimization objective of
%Equation~\ref{eqn:eq4}), expected cost $w_j$, and the $i$-th expected
%skill $q_{j_i}$, when $|\mathcal{A}|$ new workers arrive, the problem
%is to assign the new workers to the pre-computed indexes, such that

\subsubsection{Deleting Workers}
In principle, the treatment of worker deletion is analogous to that of
worker replacement strategies in Section~\ref{rep}.
%, for finding replacement workers due to their unavailability. 
Basically, the idea is to
determine the decreased quality, cost, and value of each of the tasks that are
impacted by the deletion, and then re-formulate an optimization
problem only with those tasks, and the remaining workers who are not
maxed-out yet (i.e., $\mathcal{C}_u < X_h$) on their assignment, using the current
quality, cost, and value. Similar to Section~\ref{rep}, this
formulation is also a marginal ILP that is incremental in nature, and involves a smaller number of variables.  We omit further discussion on this for brevity.

\subsubsection{Updating Worker Profiles}
Interestingly, the handling of updates in worker profile is also
incremental in \sys. If the skill, wage, or acceptance-ratio of a
subset $\mathcal{A'}$ of workers gets updated, \sys\ first updates the
respective {\em value} of the tasks (where these workers were
assigned), by discounting the contribution of the workers in
$\mathcal{A'}$. After that, a smaller optimization problem is
formulated involving only $\mathcal{A'}$ workers and $T$
tasks.  After discounting the contribution of the workers in
$\mathcal{A'}$, if the latest value of a task $t$ is
$v'_t$\footnote{\small If none of the workers in $\mathcal{A'}$
  contributed to $t$, then $v'_t=v_t$.}, current quality on skill $j$
is $q'_{t_j}$, and current cost is $w'_t$, then the optimization
problem is formulated as:
\vspace{-0.1in}
\begin{equation}\label{eqn:eq7}
\text{Maximize }  \Sigma_{\forall t \in T} \{ v'_t \}
\end{equation}
where, 
\vspace{-0.2in}
\begin{align*}
v''_t = v'_t + W_1 \times \Sigma_{\forall j \in \{1..m\}}q''_{t_i} + W_2 \times (1-\frac{w''_t}{W_t}), 
\\  W_1+W_2 = 1,
\\ q''_{t_j} = \{ q'_{t_j} + \Sigma_{\forall u \in \mathcal{A'}} u_{t} \times p_u \times  u_{s_j} \} \geq Q_{t_j}, 
\\ w''_t = \{ w'_t + \Sigma_{\forall u \in \mathcal{A'}} u_t \times p_u \times  w_u \} \leq W_t 
\\  u_{t} = [0/1], X_l \leq \Sigma_{\forall t \in T} \{u_{t} \in \mathcal{A'}\} \leq X_h
\end{align*}

Similar to the previous cases, the proposed solution is principled and
well-aligned with the optimization objective that \sys\ proposes. The
solution involves only $|\mathcal{A'}| \times |T|$ variables,
and our experimental study corroborates that it generates the output
within reasonable latency.

\section{Approximation Algorithms}\label{approx}
The optimal algorithm presented in Section~\ref{alg1} may be very expensive during index building as well as maintenance time, since the ILP-based solution may have exponential computation time at the worst case. To expedite both of these steps,  two approximate solutions  are discussed next: a) A greedy approximate solution for \Index\ that has provable approximation factor under certain conditions. b) A clustering-based solution \vindex\ which offers high efficiency but may give approximate result.

\subsection{Greedy Approximation for C-DEX} \label{greedycdex}
Next we describe the greedy strategies for \Index\ creation and adaptive maintenance, both guaranteed to run in polynomial time. The quality of the results is approximate but the approximation factor can be guaranteed under certain conditions.

\subsubsection{Approximate C-DEX Design (offline phase)} \label{offg}
The approximate \Index\ design algorithm 
%{\tt Approx-C-DEX-Build}
{\tt Offline-CDEX-Approx} 
follows a greedy  strategy for index building which admits a provable approximation factors under certain conditions. 
Given the pool of tasks and workers, it iteratively adds a worker to a task such that the addition ensures the {\em highest marginal gain} in $\mathcal{V}$ in that iteration, while ensuring the quality, cost, and tasks-per-worker constraints. Imagine a particular instance of  {\tt Offline-CDEX-Approx} on Example~\ref{runningex} after first iteration. After a single worker assignment (first iteration will assign one worker to one of the indexes), if only $u_1$ is assigned to $i^{t_1}$ and nobody  to $i^{t_2}$ and $i^{t_3}$ yet, then the algorithm may select $u_6$ to assign to $i^{t_3}$  in the second iteration to ensure the highest marginal gain in $\mathcal{V}$.
%Further discussion of this greedy strategy is discussed in Section~\ref{greedymain}.
%Since \Index\ indexes are created offline, computational efficiency is less of a concern there. In fact, our experimental results demonstrate that the commercial implementation of ILP builds the \Index\ indexes within a reasonable latency, even for scenarios that involve thousands of workers and tasks. For applications that are intolerant to such latency may further leverage a more efficient greedy solution. This solution is approximate, yet it ensures polynomial computation time. In fact, this greedy algorithm has a provable approximation factor under certain conditions. However, we note that this faster greedy solution may be most appropriate/needed for the purpose of index maintenance,  when \sys\ has to  find replacement  of unavailable workers in real time. Nevertheless, the computational time of \Index\ design could  benefit as well from this greedy algorithm with some loss in quality. For brevity, we omit further discussions here and refer to Section~\ref{maintenance} for further details.
\begin{theorem}
{\tt Offline-CDEX-Approx} has an approximation factor of $(1-1/e)$,  when $Q_{t_j} = 0, \forall j \in \{1..m\}$ and $W_2=0$ and $X_l=0$.
\end{theorem}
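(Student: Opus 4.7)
My plan is to reduce the theorem to the classical Nemhauser--Wolsey--Fisher guarantee for greedy maximization of a monotone submodular function. The premises of the theorem are precisely the hypotheses of Theorem~\ref{global}, so under $Q_{t_j} = 0$, $W_2 = 0$, and $X_l = 0$ I can invoke that theorem as a black box to conclude that the global objective $\mathcal{V} = \sum_{t \in T} v_t$ is both monotone and submodular as a set function over the ground set $E$ of all worker--task pairs $(u,t) \in \mathcal{U} \times T$.

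\textbf{Step 1: Set up the ground set and feasibility structure.} I would view each greedy move (assigning a worker $u$ to a task $t$) as selecting an element $(u,t) \in E$. The per-worker constraint $\sum_{t} u_t \le X_h$ then translates to a partition matroid $\mathcal{M}$ on $E$ whose blocks are $\{(u,t) : t \in T\}$ for each $u \in \mathcal{U}$, each block having rank $X_h$. Since $X_l=0$, there is no lower-bound requirement per worker to worry about, and since $W_2=0$ and $Q_{t_j}=0$, no feasibility threshold makes $v_t$ drop to zero, so every intermediate selection maintains feasibility.

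\textbf{Step 2: Match the greedy to the classical setting.} The {\tt Offline-CDEX-Approx} algorithm, at each iteration, picks the pair $(u,t) \in E \setminus S$ maximizing the marginal gain $\mathcal{V}(S \cup \{(u,t)\}) - \mathcal{V}(S)$, subject to the constraint that $S \cup \{(u,t)\}$ is an independent set in $\mathcal{M}$ (i.e., $u$ has not yet been assigned to $X_h$ tasks). This is exactly the standard greedy scheme for monotone submodular maximization under a (uniform/partition) cardinality constraint.

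\textbf{Step 3: Apply Nemhauser--Wolsey--Fisher.} By Theorem~\ref{global}, $\mathcal{V}$ is monotone submodular, so the classical $(1 - 1/e)$ bound for greedy maximization subject to a cardinality-style constraint applies, yielding $\mathcal{V}(\text{greedy}) \ge (1 - 1/e)\,\mathcal{V}(\text{opt})$. The main obstacle I anticipate is rigorously handling the per-task cost constraint $w_t \le W_t$: strictly speaking this is a knapsack-type side constraint, but under the stated conditions it can be folded into feasibility without breaking submodularity, since $W_2=0$ means the marginal gain does not depend on cost. I would therefore argue that the greedy never violates $w_t \le W_t$ for the optimal-comparable portion of the solution (otherwise the corresponding task already contributes its maximal value-structure), so the reduction to the pure matroid case is lossless and the $(1-1/e)$ factor carries over verbatim.
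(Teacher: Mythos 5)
Your proposal follows essentially the same route as the paper: the paper's proof is a two-line sketch that invokes the monotonicity and submodularity of $\mathcal{V}$ under exactly these conditions (Theorem~\ref{global}) and then appeals to the classical greedy guarantee for monotone submodular maximization, which is precisely your Steps 1--3. You in fact supply more scaffolding than the paper does (the worker--task ground set, the partition-matroid view of the $X_h$ constraint, and the acknowledgment that the per-task cost constraint is the delicate point), all of which the paper omits ``for brevity.''
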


\begin{proof}
Sketch: The proof relies on our theoretical analyses in Section~\ref{theo} and on the fact that the optimization function $\mathcal{V}$ becomes submodular and monotonic under the above-mentioned conditions. We omit the details for brevity.
\end{proof}

\begin{lemma}
The run time of algorithm {\tt Offline-CDEX-Approx} is polynomial, i.e., $O(X_h \times \ |\mathcal{U}| \times |T|)$.
\end{lemma}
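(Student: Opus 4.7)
The plan is to bound the total runtime by factoring it into (i) the maximum number of greedy worker-to-task assignments the algorithm can perform before terminating, and (ii) the cost of selecting the single best assignment in each step. Both factors are polynomial, and their product will yield the claimed bound.

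First I would bound the number of iterations. By the tasks-per-worker cap $X_h$ (constraint~4 in Section~\ref{constraints}), no worker can be placed into more than $X_h$ tasks, so across all tasks in $T$ the number of (worker, task) pairs the algorithm can ever commit to is at most $X_h \cdot |\mathcal{U}|$. Since {\tt Offline-CDEX-Approx} commits exactly one such pair per iteration and halts when no further addition yields positive marginal gain (either because all workers are maxed out, or because every remaining candidate would violate the cost/quality constraints), the iteration count is $O(X_h \cdot |\mathcal{U}|)$.

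Next I would bound the per-iteration work. The marginal gain of adding a worker $u$ to task $t$ is $O(1)$ to evaluate: one updates the aggregate quality $q_{t_j}$ and aggregate cost $w_t$ by a single additive term and re-computes $v_t$ from Equation~\ref{eqn:eq}. The greedy procedure organizes its search so that in each round it considers the contribution of a single worker against each of the $|T|$ candidate tasks (equivalently, maintains an incremental per-task best-candidate cache), reducing per-iteration selection to an $O(|T|)$ scan rather than a full $O(|\mathcal{U}|\cdot|T|)$ rescan of all pairs.

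Multiplying the $O(X_h \cdot |\mathcal{U}|)$ iterations by the $O(|T|)$ per-iteration scan yields $O(X_h \cdot |\mathcal{U}| \cdot |T|)$, as claimed. The main obstacle, and the only step that requires care, is justifying the $O(|T|)$ per-iteration cost rather than the naive $O(|\mathcal{U}| \cdot |T|)$: I would make explicit the algorithm's control flow (processing workers in sequence, or amortizing the best-candidate computation across iterations via a cache that is updated only when an affected task changes) so that the per-iteration accounting goes through cleanly.
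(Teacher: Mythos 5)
The paper states this lemma without an accompanying proof, so the only question is whether your argument stands on its own. Your first factor is fine: the tasks-per-worker cap $X_h$ bounds the number of committed (worker, task) pairs, hence the number of greedy iterations, by $X_h \cdot |\mathcal{U}|$. The gap is in the second factor. The algorithm as described in Section~\ref{offg} picks, in each iteration, the pair with the highest marginal gain in $\mathcal{V}$ over \emph{all} feasible worker--task pairs; implemented directly, that is an $O(|\mathcal{U}| \cdot |T|)$ scan per iteration, giving $O(X_h \cdot |\mathcal{U}|^2 \cdot |T|)$, not the claimed bound. You try to bridge this with a per-task best-candidate cache (or by processing workers in sequence), but you never establish that the cache can be maintained within $O(|T|)$ amortized work per iteration, and in fact it cannot without further argument: committing $u$ to $t$ can change the marginal gain of \emph{every} other worker on $t$ (the quality threshold $Q_{t_j}$ may be crossed, changing $v_t$ from $0$ to positive, and the remaining cost budget $W_t - w_t$ shrinks, changing which candidates are feasible), and when $u$ reaches $X_h$ assignments she must be removed from consideration everywhere, invalidating the cached best candidate of every task for which $u$ was the cached optimum; repairing each such cache entry costs $O(|\mathcal{U}|)$, and this can recur up to $|\mathcal{U}|$ times, so the amortized accounting you gesture at does not go through as stated.

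To make the claimed $O(X_h \cdot |\mathcal{U}| \cdot |T|)$ bound rigorous you would need either (i) an explicit amortized analysis showing the total cache-repair work is within the budget (which appears false in the worst case under the global-best-pair rule), or (ii) to fix the algorithm's control flow as a sequential greedy --- for each worker, scan the $|T|$ tasks and commit up to $X_h$ assignments --- for which the bound is immediate, but which is a different procedure from the ``highest marginal gain in $\mathcal{V}$ per iteration'' greedy the paper describes and on which its approximation claim rests. As written, your proof substitutes an implementation the paper does not specify and leaves its key cost claim unproven, so the per-iteration $O(|T|)$ step is a genuine gap.
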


\subsubsection{Approximate C-DEX Maintenance (online phase)}\label{greedymain}
%Like Section~\ref{maintenance} we investigate four different possibilities, under the assumption of non preemption of workers.  (a) Replacement of workers due to unavailability (b) Addition of workers, (c) Deletion of workers, (d) Updates in worker profile.
We discuss four greedy maintenance strategies next that are incremental and designed ensuring worker non-preemption.
{\bf Replacing Workers:} 
After a task arrives if one or more of the assigned workers to this task are not available, an efficient greedy solutions is proposed by selecting replacement workers from the available pool. This strategy leads to a provable approximation algorithm, when $Q_{t_j} = 0, \forall j \in \{1..m\}$ and $W_2=0$. We describe the greedy algorithm  
%{\tt Approx-C-DEX-Main}
{\tt Online-CDEX-Approx}
 next.

Given a set of unavailable workers in $\mathcal{L}_i^t$, \sys\ performs a simple iterative  greedy replacement from the available pool of workers $\mathcal{U}'$. In a given iteration, the idea is to select that worker from the available pool and add her to $\mathcal{L}_i^t$ which results in the {\em highest marginal gain in $v_t$}. This iterative process continues until the cost constraint exceeds. %For example, This %incredibly 
%simple greedy algorithm has a provable approximation factor, when $Q_{t_j} = 0, \forall j \in \{1..m\}$ and $W_2=0$. 
This greedy algorithm is approximate in nature but admits a provable approximation factor under certain conditions.

\begin{theorem}
Algorithm {\tt Online-CDEX-Approx} admits an approximation factor of $1-1/e$, when $Q_{t_j} = 0, \forall j \in \{1..m\}$ and $W_2=0$.
\end{theorem}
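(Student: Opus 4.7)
The plan is to reduce the analysis of \texttt{Online-CDEX-Approx} to the classical guarantee of Nemhauser, Wolsey, and Fisher for greedy maximization of a monotone submodular function. First I would invoke Lemma~\ref{local}, which establishes that under the hypotheses $Q_{t_j}=0$ for all $j \in \{1..m\}$ and $W_2=0$, the per-task value function $v_t$ reduces to $W_1 \sum_j q_{t_j}$ on the set of assigned workers and is both submodular and monotonic. Monotonicity holds because the quality aggregate can only grow as more workers are added, and submodularity is the ``diminishing returns'' statement extracted from Theorem~\ref{locald} specialized to the zero-threshold regime. The replacement subproblem then becomes: pick a subset $S \subseteq \mathcal{U}'$ of the still-available workers such that the incremental cost $\sum_{u \in S} p_u w_u$, added to the current $w'_t$, stays below $W_t$, and such that $v_t$ (monotone submodular in $S$) is maximized.

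Second, I would cast \texttt{Online-CDEX-Approx} as the canonical greedy algorithm: at each iteration it appends the worker giving the largest marginal increase in $v_t$, halting when the budget $W_t$ is about to be exceeded. The bound follows by the textbook inductive argument: for each step $k$, submodularity upper-bounds $\mathrm{OPT} - v_t^{(k)}$ by (essentially) $|OPT|$ times the marginal gain achieved at step $k{+}1$; chaining these inequalities and using $\bigl(1-\tfrac{1}{K}\bigr)^K \leq 1/e$ yields the $(1-1/e)$ factor at termination.

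The main obstacle is that the budget $w_t \leq W_t$ is a knapsack-style constraint, whereas Nemhauser--Wolsey--Fisher in its cleanest form assumes a cardinality constraint. I plan to bridge this gap in one of two ways. Either I would adopt Sviridenko's partial-enumeration trick: guess a constant-size subset of the optimal solution, and complete it by a density-based greedy that repeatedly picks the worker maximizing marginal gain per unit expected wage; the best such completion attains $(1-1/e)$ for monotone submodular maximization under a knapsack. Alternatively, I would argue that because the tasks-per-worker and acceptance-ratio normalizations make the effective costs comparable in magnitude, the knapsack degenerates to a cardinality constraint for the analysis, and the vanilla Nemhauser--Wolsey--Fisher bound applies directly. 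In either case the heavy lifting is done by Lemma~\ref{local}: once submodularity and monotonicity are in hand, the $(1-1/e)$ approximation is inherited from the standard greedy analysis, and the only delicate step is handling the cost constraint carefully enough to preserve the constant factor.
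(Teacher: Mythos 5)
Your core route coincides with the paper's: the paper's own proof is only a two-line sketch that invokes the monotonicity and submodularity of $v_t$ established in Section~\ref{theo} (Lemma~\ref{local}) and then appeals to the standard greedy guarantee, exactly as your first two paragraphs do. Where you go beyond the paper is in flagging the genuinely delicate point: the stopping rule of \texttt{Online-CDEX-Approx} is the budget $w_t \leq W_t$, a knapsack constraint rather than a cardinality constraint, so the Nemhauser--Wolsey--Fisher bound does not apply off the shelf. That observation is correct and is precisely what the paper's sketch glosses over.

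However, neither of your two proposed bridges actually closes this gap for the algorithm as stated. Sviridenko's partial-enumeration-plus-density greedy does achieve $(1-1/e)$ for monotone submodular maximization under a knapsack, but it is a different algorithm from \texttt{Online-CDEX-Approx}, which (as described in Section~\ref{greedymain}) repeatedly adds the worker of highest \emph{absolute} marginal gain until the cost constraint is violated; the pure marginal-gain greedy under a knapsack admits no constant-factor guarantee (it can exhaust the budget on a single expensive high-gain worker), and even the gain-per-unit-cost greedy must be combined with the best singleton to obtain a constant factor of roughly $(1-1/e)/2$, with the full $(1-1/e)$ requiring the enumeration step. Your alternative argument---that the knapsack ``degenerates'' to a cardinality constraint because effective costs are comparable---is unsupported by the model: wages $w_u$ and acceptance ratios $p_u$ are arbitrary values in $[0,1]$, so the expected costs $p_u w_u$ need not be anywhere near uniform, and nothing in the hypotheses $Q_{t_j}=0$, $W_2=0$ changes that. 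As written, then, your proposal (like the paper's own sketch) establishes submodularity and monotonicity but does not derive the $(1-1/e)$ factor for the cost-constrained greedy; to make it rigorous you would have to either assume uniform expected wages explicitly (turning the budget into a cardinality bound), or analyze the density/enumeration variant of the algorithm instead of the one the theorem is about.
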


\begin{proof}
Sketch: We omit the details for brevity; however, our proof uses the monotonicity and submodularity property of $v_t$ as proved in Section~\ref{theo} under these conditions.
\end{proof}

Of course, unless the above conditions are satisfied, the above approximation factor does not theoretically hold.

 \begin{lemma}
 The run time of 
%{\tt Approx-Replacement}
{\tt Online-CDEX-Approx}
 is polynomial, i.e., $O(|\mathcal{U}'|)$.
 \end{lemma}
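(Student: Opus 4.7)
The plan is to bound the two sources of cost in {\tt Online-CDEX-Approx} separately: the number of greedy iterations, and the work performed inside each iteration. The outer loop strictly shrinks the candidate pool $\mathcal{U}'$ by one worker per iteration (each selected worker is committed to $\mathcal{L}_i^t$ and is not reconsidered, by the non-preemption convention used throughout Section \ref{maintenance}), and it additionally terminates as soon as adding another worker would violate $w_t \leq W_t$. Hence the number of iterations is at most $|\mathcal{U}'|$.

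Next I would bound the per-iteration work. In each iteration we must (i) evaluate the marginal gain $\Delta v_t(u) = v_t(\mathcal{L}_i^t \cup \{u\}) - v_t(\mathcal{L}_i^t)$ for each remaining candidate $u$, and (ii) select the maximizer. The key observation is that, under the conditions in which this algorithm is intended to deliver its approximation guarantee ($Q_{t_j}=0$ and $W_2=0$), the value function collapses to $v_t = W_1 \sum_j q_{t_j}$, so the marginal contribution of each worker $u$ is simply the state-independent scalar
\[
\Delta v_t(u) \;=\; W_1 \sum_{j=1}^{m} p_u \cdot u_{s_j},
\]
computed once in $O(m)$ time per worker (treating $m$ as a constant of the schema, as elsewhere in the paper). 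A single linear pass therefore precomputes all marginal gains in $O(|\mathcal{U}'|)$ time. After this preprocessing, each greedy pick is an $O(1)$ amortized operation over the course of the loop (e.g., by arranging the candidates in a sorted list or a bucketed structure keyed by $\Delta v_t(u)$, or equivalently by precomputing the greedy ordering once and then sweeping through it).

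Combining the two bounds, the total work is $O(|\mathcal{U}'|) + O(|\mathcal{U}'|) = O(|\mathcal{U}'|)$, as claimed. The main subtlety, and the only part that needs care, is justifying why the per-iteration marginal-gain lookup can be taken to be constant-time: this rests on the state-independence of $\Delta v_t(u)$ under the stated regime, which is exactly what made the submodular-monotone analysis of Section \ref{theo} go through, and on a one-shot sort of the candidates (so the total sorting cost is absorbed into the $O(|\mathcal{U}'|)$ bound up to at most a logarithmic factor). Without those conditions one would need to re-evaluate marginals every round, pushing the bound to $O(|\mathcal{U}'|^2)$; since the lemma is invoked in the same setting as the approximation theorem that immediately precedes it, this is the intended regime.
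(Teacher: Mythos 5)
First, note that the paper states this lemma without any proof, so there is no author argument to compare against; judged on its own merits, your proposal has two genuine problems. (1) It does not actually deliver the stated bound. Your per-iteration $O(1)$ pick rests on first arranging the candidates by their precomputed gains, and as you yourself concede this costs a sort, i.e.\ $O(|\mathcal{U}'|\log|\mathcal{U}'|)$; the ``bucketed structure keyed by $\Delta v_t(u)$'' does not rescue this, since the gains are arbitrary reals and constant-time max-extraction from such buckets is not available without discretization. So what you have proved is an $O(|\mathcal{U}'|\log|\mathcal{U}'|)$ bound, not $O(|\mathcal{U}'|)$. (2) You silently narrow the lemma's scope. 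The lemma is stated unconditionally, but your argument only works in the regime $Q_{t_j}=0$, $W_2=0$ where marginal gains are state-independent, and you explicitly concede the general case would degrade to $O(|\mathcal{U}'|^2)$ under your accounting. The adjacent approximation theorem is conditioned on that regime; the runtime claim is not.

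The missing idea is a different (and simpler) accounting of the two cost sources. Even in the general regime, the marginal gain of a candidate $u$ with respect to the \emph{current} partial assignment is computable in $O(m)$ time (maintain the running aggregates $q_{t_j}$ and $w_t$; the gain is a fixed arithmetic expression in $p_u$, $u_{s_j}$, $w_u$ and these aggregates, threshold checks included), so each greedy round is a single $O(|\mathcal{U}'|)$ linear scan with no sorting and no state-independence assumption. The linear bound then follows once one observes that the number of rounds is not governed by $|\mathcal{U}'|$ at all: the algorithm only fills the slots of the unavailable workers of one task's index $\mathcal{L}_i^t$ and stops when the task's cost budget $W_t$ is exhausted, so the iteration count is a per-task quantity independent of the size of the available pool and is treated as a constant. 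That is the reading under which $O(|\mathcal{U}'|)$ is defensible; trying instead to force each iteration to be $O(1)$, as you do, is what drags in both the sorting cost and the unnecessary restriction to the submodular-monotone special case. If the iteration count is not treated as constant, then the honest statement is $O(k\cdot|\mathcal{U}'|)$ with $k$ the number of replacement workers added, and you should say so rather than appeal to the approximation regime.
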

 
{\bf Addition of New Workers:} Our proposed greedy solution is similar in principle to the offline greedy approximation algorithm described in Section~\ref{offg}. New workers are to be assigned to the pre-computed indexes based on the highest marginal gain in value without disrupting the existing allocation of the current workers. In order to satisfy any theoretical guarantee, this objective function has to relax quality and cost threshold, number of tasks per worker, and make $W_2=0$. We omit further discussions for brevity. 
 
{\bf Deletion of Workers:}
This solution is akin to that of the greedy worker replacement strategy described above. It admits the exact same set of theoretical claims under similar conditions as described above. 

{\bf Updates of Worker Profile:} 
If the skill, wage, or acceptance ratio of a
subset $\mathcal{A'}$ of workers gets updated, \sys\ first updates the
respective {\em value} of the tasks (where these workers were
assigned), by discounting the contribution of the workers in
$\mathcal{A'}$. After that, it adapts the {\tt Offline-CDEX-Approx} (Section~\ref{offg}) involving $\mathcal{A'}$ workers and $T$
tasks. It iteratively adds a worker in $\mathcal{A'}$ to a task in $T$ based on the highest marginal gain in value, as well as satisfy the skill, cost, and number of workers per task constraint. Akin to {\tt Offline-CDEX-Approx}, this algorithm does not satisfy the $(1-1/e)$ approximation factor, unless $Q_{t_j} = 0, \forall j \in \{1..m\}$ and $W_2=0$ and $X_l=0$.

\subsection{C-DEX$^{+}$}\label{alg2}
%As discussed in Section \ref{cdexp_problem}, designing and maintaining \Index\ can be computationally expensive, especially in case of a very large number of workers/tasks (the ILP problem is quadratic in the number of workers and number of tasks). 
Next, we present our second approximate solution \vindex\ for index building and adaptive maintenance based on clustering of workers. This solution is approximate yet very efficient, since it replaces the actual set of workers with a very small set of Virtual Workers (a VIrtual Worker represents a set of ``indistinguishable'' actual workers, who are similar in skills and cost, as defined in section \ref{cdexp_problem}).

\eat{
The ILP formulation designed for the pre-computation phase in Section~\ref{alg1} involves a large number of variables, as it is quadratic in the number of workers and number of tasks. Even though solved offline,  the computation time may still be very expensive, when there are large number of workers, or tasks, or both. Similarly, even though solved incrementally, the online algorithms in Section~\ref{qp} or \ref{maintenance} may demand considerably high response time. In this section, we discuss a novel alternative to this problem, where, the actual worker pool is intelligently {\em replaced} by a set of {\em Virtual Workers}, that are much smaller in count. \sys\ uses the Virtual Workers and the same workload to pre-compute a set of indexes, referred to as \vindex. \vindex\ enables efficient pre-computation, as well as faster assignments from workers to tasks.

 Intuitively, a Virtual Worker represents a set of ``indistinguishable'' actual workers, who are similar in skills and cost. For the simplicity of exposition, if we assume that in a given worker pool, there are $3$ workers who posses exactly same skill $s$  and cost $w$, then a single Virtual Worker $V$ could be created replacing those $3$ with skill $s$  and cost $w$. Obviously, when there are variations in the skills and cost among the workers, the profile of $V$ needs to be defined conservatively - by taking maximum of individual worker's cost as $V$'s cost, and minimum  of the individual workers expertise, per skill. The formal definition is provided next:

\begin{definition}\label{def:def3}
Virtual Worker V :  V represents a set $n'$ of actual workers that are ``indistinguishable''. V is an $m+2$ dimensional vector,  $\langle V_{s'_1}, V_{s'_2}, \ldots, V_{s'_m}, V_{w'}, |n'| \rangle$ describing expected skill,expected wage, number of actual workers in $V$, where, 
$V_{s'_i}= \min_{\forall u \in n'} p_u \times u_{s_i}$,
$V_{w'}= \max_{\forall u \in n'} p_u \times w_u$.
\end{definition}

{\bf Creating Virtual Workers}
First we investigate how to create a set of Virtual Workers, given $\mathcal{U}$. Intuitively, a Virtual  Worker $V$ should represent a set of workers who are {\em similar} in their profile. In \sys, Virtual Workers are created by performing multi-dimensional clustering~\cite{DBLP:books/mk/HanK2000} on $\mathcal{U}$, and considering a {\em threshold} $\alpha$ that dictates the maximum {\em distance} between any worker-pairs inside the same cluster. The size of the Virtual Worker set  $\mathcal{N}$ clearly depends on 
$\alpha$, a large value of $\alpha$ leads to  smaller $|\mathcal{N}|$, and vice versa. Interestingly, this allows flexible design, as the appropriate trade-off between the quality and the cost could be chosen by the system, as needed. 
Formally, given $\mathcal{U}$ and $\alpha$, the task is to design a set of Virtual Workers, such that the following condition is satisfied:
\begin{equation*}
\forall u, u': u \in V, u' \in V, \mathit{Dist}(u,u') \leq \alpha
\end{equation*}
Our implementation uses a variant of Connectivity based Clustering~\cite{DBLP:books/mk/HanK2000} considering Euclidean distance to that end.
 }
\eat{
\begin{figure}[h]
\centering
\includegraphics[width=3.5in]{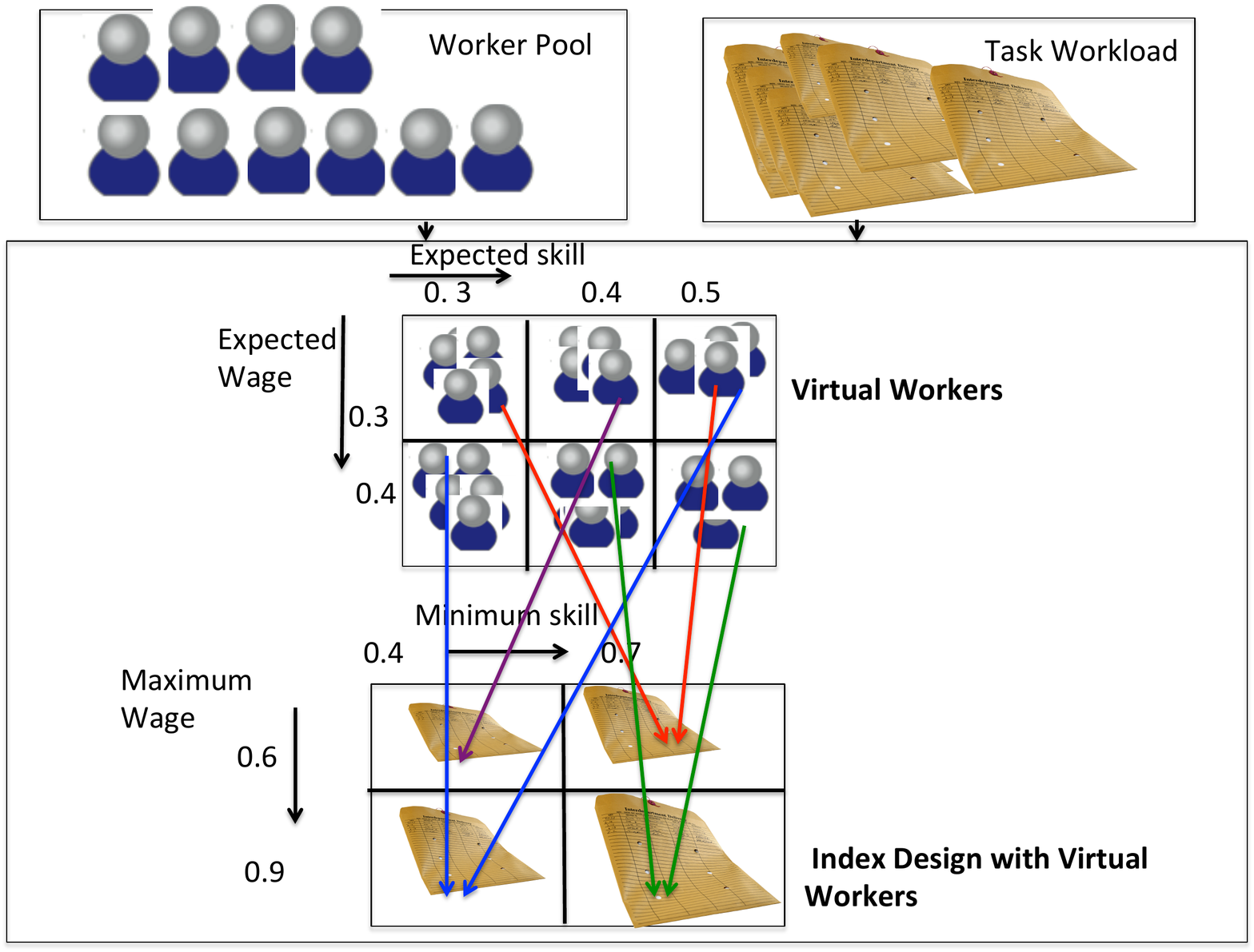}
\vspace{-0.1in}
\caption{\label{fig:archi} The actual worker pool is represented using $6$ Virtual Workers,  only these $6$ workers are used in \vindex, assuming only $1$ skill.}
\end{figure}}

\subsubsection{C-DEX$^+$ Design (offline phase)}
We work in two steps: 1) Creating virtual Workers and 2) Designing the \vindex. 

{\bf Creating Virtual Workers}
First,  a set $\mathcal{N}$ of Virtual Workers is created, given $\mathcal{U}$. Intuitively, a Virtual  Worker $V$ should represent a set of workers who are {\em similar} in their profile. In \sys, Virtual Workers are created by performing multi-dimensional clustering~\cite{DBLP:books/mk/HanK2000} on $\mathcal{U}$, and considering a {\em threshold} $\alpha$ that dictates the maximum {\em distance} between any worker-pairs inside the same cluster. The size of the Virtual Worker set  $\mathcal{N}$ clearly depends on 
$\alpha$, a large value of $\alpha$ leads to  smaller $|\mathcal{N}|$, and vice versa. Interestingly, this allows flexible design, as the appropriate trade-off between the quality and the cost could be chosen by the system, as needed. 
Formally, given $\mathcal{U}$ and $\alpha$, the task is to design a set of Virtual Workers, such that the following condition is satisfied:
\begin{equation*}
\forall u, u': u \in V, u' \in V, \mathit{Dist}(u,u') \leq \alpha
\end{equation*}
Our implementation uses a variant of Connectivity based Clustering~\cite{DBLP:books/mk/HanK2000} considering Euclidean distance to that end.

For example, if $\alpha = 0.25$, Example~\ref{runningex} will create $|\mathcal{N}| =2$ Virtual Workers; $V_1$ with $\{u_1,u_2,u_3,u_5\}$ and $V_2$ with$ \{u_4,u_6\}$; $V_1 = \langle 0.08, 0.18, 4 \rangle$  and $V_2 = \langle 0.3, 0.36, 2 \rangle$.
%It is easy to see that the $\alpha$ dictates how similar/dissimilar the profile of the individual workers will be in a Virtual Worker. 

\eat{
It is apparent that the Virtual Workers help reducing the size of the optimization problem. The formal definition of the index design problem is defined next:

\begin{definition}[\vindex]
A \vindex\ $i^{t_V}=(\mathcal{P}_i^{t_V},\mathcal{L}_i^{t_V})$ is a pair that
represents an assignment of a set of Virtual Workers in $\mathcal{N}$ to a
task $t$. $\mathcal{P}_i^{t_V},\mathcal{L}_i^{t_V}$ is defined similar to $\mathcal{P}_i^{t},\mathcal{L}_i^{t}$ respectively, but considering only Virtual Worker set $\mathcal{N}$ .
\end{definition}

Intuitively the index design problem works in two steps:
}

{\bf Designing C-DEX$^+$: } 
For a Virtual Worker $V$ with $|n'|$ actual workers, a counter $\mathcal{C}_V$ is created stating the maximum assignments of $V$, i.e.,  $\mathcal{C}_V = |n'| \times X_h$. An ILP is designed analogous to Section~\ref{alg1} with $|\mathcal{N}|$ workers, and all the tasks in $T$. Additionally, a total of $2|\mathcal{N}|$ constraints are added; one per $V$, stating that the maximum and the minimum allocation of $V$ are $\mathcal{C}_V$ and $(|n'| \times X_l$), respectively. 
%\vspace{-0.02in}
\begin{lemma}
The optimization problem for  C-DEX$^+$ involves only $|\mathcal{N}| \times |T|$ variables
\end{lemma}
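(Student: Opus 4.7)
The plan is to argue by direct inspection of the ILP formulation described for \vindex. The starting point is to recall the original \Index\ ILP in Section~\ref{alg1}: there, a binary decision variable $u_t \in \{0,1\}$ is introduced for every (worker, task) pair, giving $|\mathcal{U}| \times |T|$ variables in total, with constraints expressing quality thresholds, cost thresholds, and the per-worker load bounds $X_l, X_h$. I would first restate this baseline count, so that the reduction in size becomes a clean apples-to-apples comparison.

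Next I would observe that the \vindex\ ILP is constructed by an identical template, except that the worker pool $\mathcal{U}$ is replaced by the Virtual Worker pool $\mathcal{N}$ (using the conservative skill/wage aggregates from Definition~\ref{def:def3}). Concretely, the only decision variables are the indicators $V_t \in \{0,1\}$ for each pair $(V, t) \in \mathcal{N} \times T$, asserting whether Virtual Worker $V$ is assigned to task $t$. Since these are the sole integer unknowns appearing in both the objective $\mathcal{V}$ and the quality/cost aggregates (which are linear functions of the $V_t$), the variable count is immediately $|\mathcal{N}| \times |T|$.

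The remaining step is to confirm that the extra $2|\mathcal{N}|$ capacity constraints introduced for \vindex, namely $|n'| \cdot X_l \leq \Sigma_{t \in T} V_t \leq \mathcal{C}_V = |n'| \cdot X_h$ for each Virtual Worker $V$, do not add new decision variables. They are linear inequalities on the same $V_t$ already counted, with the cluster-size constants $|n'|$, $X_l$, $X_h$ appearing only as coefficients or right-hand sides. Hence no auxiliary variables are needed, and the total variable count remains $|\mathcal{N}| \times |T|$. There is no real obstacle here; the only care needed is to explicitly verify that neither the virtual-worker capacity constraints nor the recast quality/cost aggregates (which use the precomputed $V_{s'_i}$ and $V_{w'}$ as constants) sneak in any additional integer unknowns.
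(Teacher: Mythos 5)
Your proposal is correct and follows essentially the same reasoning the paper relies on: the \vindex\ ILP is the same template as the \Index\ ILP with the worker pool $\mathcal{U}$ replaced by $\mathcal{N}$, so the only decision variables are the 0/1 indicators for each (Virtual Worker, task) pair, and the extra $2|\mathcal{N}|$ allocation constraints add no new variables. Your explicit check that the capacity bounds and the conservative skill/wage aggregates enter only as constants is a fine, if minor, elaboration of the paper's implicit argument.
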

\vspace{-0.05in}
Using the above lemma, it is easy to see that the ILP is likely to get solved faster for \vindex, as it involves less number of variables.

Example~\ref{runningex} gives rise to $2$ virtual workers $V_1,V_2$ when $\alpha = 0.25$. Two additional maximum allocation constraints will be added to the optimization problem, such that $\mathcal{C}_{V_1}=4,\mathcal{C}_{V_2}=2$. Therefore, the index-design problem with Virtual Workers could be solved for $3$ tasks and $2$ Virtual Workers, involving only $3 \times 2 = 6$ decision variables, instead of $6 \times 3 = 18$ variables that \Index\ has to deal with. While this solution is much more efficient compared to \Index, it may give rise to approximation to the achieved quality (i.e., in the objective function value $\mathcal{V}$), as the search space for the optimization problem gets further restricted with the Virtual Workers, leading to sub-optimal solution for $\mathcal{V}$. Interestingly, our empirical results shows that this alternative solution is efficient, yet the decline in the overall quality is negligible. 

The output of the optimization problem is the set of task indexes $\mathcal{I}_V$ using virtual workers. Considering Example~\ref{runningex}, $\mathcal{I}_V = \{i^{t_{V_1}},i^{t_{V_2}},i^{t_{V_3}}\}$. For task $t_1$, created \\ $i^{t_{V_1}} = (\langle 0.38, 0.76, 1.08 \rangle, \{V_1,V_1,V_2,V_2\})$, when $W_1=W_2=0.5$. The individual worker to task assignment could be performed after that by a simple post-processing. 
 %The overall solution is described in pseudo-code  in Algorithm~\ref{alg:optv}. 
%It is easy to see, after this assignment is done,  current $\mathcal{C}_V$ of individual Virtual Worker are likely to get reduced, and may even be $0$.

\eat{
\begin{algorithm}[t]
\caption{\vindex Design Algorithm}
\label{alg:optv}
\begin{algorithmic}[1]
\begin{small}
\REQUIRE  Workload $T$, $\mathcal{U}$, $\alpha$ \\
\STATE Create $\mathcal{N}$, using $\mathcal{U}$ and $\alpha$.
\STATE Solve the \vindex\ Design ILP problem to get an assignment of Virtual Workers to the task.
%\STATE Perform post-processing with a round-robin allocation, using actual worker list, per $V$. compute and output $i^{t} = \langle \mathcal{P}_i,\mathcal{L}_i \rangle$
\RETURN Index set $\mathcal{I}_V$.
\end{small}
\end{algorithmic}
\end{algorithm}}

\eat{
\subsection{Worker-to-Task Assignment using C-DEX$^{+}$}\label{qpv}
Given an actual task $t = \langle Q_{t_1}, Q_{t_2}, \ldots, Q_{t_m},
W_t \rangle$, and a set $\mathcal{I}_V$ of pre-computed \vindex,  the task is to output a list $L$ with actual workers, who are available and accepted task $t$. Like Section~\ref{qp}, there are two challenges:

Challenge-1: Select the best index $i^{t}_V$ for $t$. This step runs in two steps: 
(1) Step-1 is exactly same as the ``similarity based'' algorithm in Section~\ref{rank}.  (2) As the assignment in reality needs to have a list of actual workers, and not Virtual Workers, the challenge here is to be able to {\em disintegrate} Virtual Workers from Step 1, and generate the assignment with the actual workers.  A post-processing algorithm is designed towards that end, that uses a list $L_V$ of all the actual worker ids, for each $V$. Given the output of Step 1, it performs a round-robin allocation of actual workers to the task $t$ using those list. For example, if Step-1 assigns one worker from $V_1$ and 2 workers from $V_2$ to $t$, this algorithm will assign the worker id $u_x$ from $V_1$, and workers $u_y$ and $u_z$ from $V_2$, upon round-robin assignment considering $L_{V_1}$ and $L_{V_2}$.

Challenge-2: To handle worker unavailability, the exactly same 2 strategies developed in Section~\ref{rep} could also be used here with trivial extension, using Virtual Workers. Strategy (1) is a greedy replacement: for an unavailable worker $u$ in $V$, \sys\ performs a quick lookup inside $V$ first to find the replacement, else looks inside $V'$ that is most similar to $V$, and repeat this process, until the replacement is found. Note that, using Virtual Workers, this greedy replacement is likely to be faster than that of Strategy (1) in Section~\ref{rep}. In fact, instead of a multi-dimensional indexing technique designed in Section~\ref{rep}, due to a small number $|\mathcal{N}|$ of Virtual Workers, a simple linear scan may be sufficient in this case. At the same time, the availability count $\mathcal{C}_u$  of an individual worker $u$ also needs to get updated. The rest of the solutions is akin to Section~\ref{rep}, and omitted for brevity.
Strategy (2) designs the ILP involving task $t$, and all the Virtual Workers whose current  $\mathcal{C}_V > 0$. Once the solution is achieved, individual worker assignment could be performed with a post-processing algorithm, in a round robin fashion, by keeping track of individual $L_V$'s. The pseudo-code summarizes the process in Algorithm~\ref{alg:assnv}.

\begin{algorithm}[t]
\caption{Worker-to-task Assignment using \vindex}
\label{alg:assnv}
\begin{algorithmic}[1]
\begin{small}
\REQUIRE  \vindex \\
\STATE Find $i^{t}_V$ given $t$ using Equation~\ref{eqn:fit}.
\STATE Consider $L_V$ for each $V$, in $t$.
\STATE Allocate individual worker ids for each $L_V$ to $t$, in a round-robin fashion.
\IF{some workers in $V$ are not available} 
\STATE Find replacement $u'$ for each unavailable workers.Use Strategy (1) or Strategy (2).
\STATE Increment $\mathcal{C}_u'= \mathcal{C}_u'+1$ for each new worker.
\STATE Update $\mathcal{L}_i^{t_V}$.
\STATE Update $\mathcal{C}_V$ appropriately.
\STATE Generate $L$ for $t$.
\ELSE 
\STATE Generate $L$ for $t$.
\STATE Increment $\mathcal{C}_u= \mathcal{C}_u +1$ for each existing worker.
\ENDIF
\RETURN $L$.
\end{small}
\end{algorithmic}
\end{algorithm}
}

\subsubsection{C-DEX$^+$ Maintenance (online phase)}\label{maintenancev}
Recall that the maintenance strategies are designed for $4$ different scenarios, enforcing worker non-preemption constraint.
%In this section, we discuss how the maintenance of the pre-computed \vindex\ could be performed, as new workers subscribe to the system, existing workers leave, and the worker profile gets updated, as 
%We describe the maintenance of \vindex\ next, as described in Section~\ref{maintenance} for its counterpart \Index. One novelty of \sys\ remains in the uniform treatment of different adaptive maintenance scenarios that are principled, irrespective of application type, or worker type (actual workers or Virtual Workers). As we shall see next, the proposed solution for \vindex\ maintenance is akin to the one of \Index\ (Section~\ref{maintenance}). 

{\bf Replacing Workers:} \vindex\ designs a marginal ILP involving task $t$, and all the Virtual Workers whose current  $\mathcal{C}_V > 0$, akin to its \Index\ counterpart. Once the solution is achieved, individual worker assignment could be performed with a post-processing algorithm, in a round robin fashion, by keeping track of individual $L_V$'s. 

{\bf Addition of New Workers:} First, the existing Virtual Worker set $\mathcal{N}$  needs to get updated. Interestingly, since $\alpha$ is pre-determined, the new workers could be accommodated with incremental clustering, just by forming new clusters (i.e., creating new Virtual Workers) involving those additions, without having to re-create the entire $\mathcal{N}$ from scratch. After that, a smaller ILP is formulated only involving the Virtual Workers that are affected by the updates, considering existing partial assignments, akin to Equation~\ref{eqn:eq6}. We omit the details for brevity. 

{\bf Deletion of Workers:} The handling of worker deletion is akin to addition, in the sense, first \sys\ propagates these updates incrementally to the Virtual Worker set $\mathcal{N}$. To satisfy the pre-defined $\alpha$, it accounts for those remaining actual workers from each of the Virtual Worker $V$,  that has atleast one deleted worker. It reruns a smaller clustering solutions only involving those remaining workers. After $\mathcal{N}$ gets updated, the rest of the maintenance is exactly same as what is discussed in handling deletion inside Section~\ref{maintenance}. We omit the details for brevity. 

{\bf Updates of Worker Profile:} Similarly, if \sys\ gets to have updated profile of some of the workers, it first updates the Virtual Workers set by solving a smaller clustering problem, akin to deletion. With the updated Virtual Workers set, the rest of the maintenance here is same as solving a marginal ILP involving only the  updated Virtual Workers, as has been discussed in Section~\ref{maintenance} for maintaining profile updates.

%\sys\ proposes a novel algorithm towards that end, that solves the optimization problem with only the Virtual Workers (hence more efficient), but by using an additional data-structure, we propose a post-processing algorithm which allocates the actual workers to the created indexes. The post-processing is possible, as any worker could  have at the most $X_h$ allocations, based on our formulations.

 %However there is an additional challenge. Unlike indexes in Section~\ref{qp}, now the indexes are designed using Virtual Workers. Therefore, an additional post-processing needs to be in place to {\em disintegrate} Virtual Workers, and perform actual worker to task allocation. 

\eat{
Given a set  $\mathcal{N}$ of virtual workers, where each worker V is described using Definition~\ref{def:def1} and a workload $\mathcal{W}$, the task index design problem now could be restated as follows:

\begin{equation}\label{eqn:eq4}
\text{Maximize }  \Sigma_{\forall t \in \mathcal{W}} \{ v_t \}
\end{equation}

where, 

\begin{align*}
v_t = w_1 \times \Sigma_{\forall i \in m}q_{i_t} + w_2 \times (1-\frac{w_t}{W_t}), 
\text{ such that }, 
\\  w_1+w_2 = 1,
\\ q_{i_t} = \Sigma_{\forall V \in \mathcal{N}} V_{t}  \times  V_{s'_i} \geq Q_{i_t}, 
\\ w_t = \Sigma_{\forall V \in \mathcal{N}} V_{t} \times V_{w'} \leq W_t 
\\  V_{t} = [0/1]
\\  0 \leq \Sigma_{\forall t \in \mathcal{W}} \{V_{t}\} \leq X
\end{align*}

It is easy to see that this optimization problem is much smaller in size as it involves only $\mathcal{N} \times \mathcal{W}$ times.

\subsection{Algorithms}
\subsubsection{User Index Design - Designing Virtual Workers}

\subsubsection{Task Index Design using Virtual Workers}

Given a set  $\mathcal{N}$ of virtual workers, where each worker V is described using Definition~\ref{def:def1} and a workload $\mathcal{W}$, the task index design problem now could be restated using Equation~\ref{eqn:eq4}. The optimization problem with integrality constraint could be solved using Integer Linear Programming formulation. Output to the algorithm is a set $\mathcal{I}$ of indexes, where each index $j$ is composed of a set of virtual workers, and has an expected skill value $q_{i_j}$ for skill $i$, and expected cost value $w_j$.}
\vspace{-0.1in}
\section{Experimental Evaluation}\label{exp}
\noindent We perform 2 different types of experiments: i) Real data experiments - conducted involving $250$ AMT~\footnote{\small Amazon Mechanical Turk, www.mturk.com} (AMT) workers in 3-different stages; ii) Synthetic data experiments - conducted using an event-based crowd simulator. The real-data experiments aim at evaluating the proposed approach in terms of  quality and feasibility, while the synthetic ones aim at validating its scalability and quality.

\subsection{Real Data Experiments}
The purpose of these experiments is to evaluate our approach in terms of feasibility and quality. We study {\em feasibility} since the current paid crowdsourcing platforms (like AMT) do not support KI-C task development and thus this is one of the first studies trying to optimize KI-C task production in such an environment. We study {\em quality} with the aim to measure the key qualitative axes of the knowledge produced by the hired workers. 

Overall the study is designed as an application of collaborative document writing by AMT workers selected using \sys. These results are compared to the respective results achieved using 2 representative rival strategies: {\tt Benchmark} (workers self-appoint themselves to articles after a skill-based pre-selection process, akin to how the current paid platforms work) and {\tt Online-Greedy} (workers are assigned to the available tasks taking into account the workers' marginal utility on each task; this is the adaptation of one of the latest state-of-the-art online task assignment algorithms~\cite{chienJuHo}. Workers are asked to produce documents on 5 different topics (KI-C tasks) of current interest: 1)Political unrest in Egypt, 2) NSA document leakage, 3)Playstation (PS) games, 4) All electric cars and 5) Global Warming. For simplicity and ease of quantification, we consider that each task requires one skill (i.e. expertise on that topic). The user study is conducted in 3 stages.

%We involve AMT workers in collaborative document writing tasks. This study is designed primarily to validate the proposed optimization objective {\em qualitatively}.  The proposed objective function in \sys\ (referred to as {\tt SmartCrowd} is compared with {\tt Benchmark} %(workers are randomly selected for this) 
%and {\tt Online-Greedy}. We have selected 5 different topics (i.e., tasks) for this set of experiments - 1)Political unrest in Egypt. 2)NSA document leakage. 3)Playstation (PS) games 4)All electric cars 5)Global Warming. The user study is conducted in 3 stages. For simplicity and ease of quantification, we consider, each task requires one skill, the expertise on that topic.
\vspace{-0.05in}
\subsubsection{Stage 1 - Worker Profiling}
In this stage, we hire $20$ AMT workers per task, totaling $100$ unique workers. The workers are informed that a subset of them will be invited (through email) in Stage 2 to collaboratively write a document on that topic. 
%Workers are asked to provide their email address, so that they can be contacted in Stage-2. 
We design a set of $8$  multiple choice questions per task, assessing the workers' knowledge over facts related to the task (e.g., on Egypt -  \emph{``What is the name of the place in Cairo where the protests took place?}'' with possible answers: Tahrir Square, Mubarak Plaza, Al Azhar Square, or on the NSA leakage topic: \emph{``Who is Adrian Lemo?} with possible answers: A computer hacker, A federal agent, Both). The skill of a worker is then calculated as the percentage of her correct answers. Workers are also asked questions to extract their acceptance ratio and wage. Figure~\ref{fig:phase1} shows the quantification of worker profile distributions for the ``Egypt'' task. Worker profiles for the other topics exhibit similar distributions and are omitted for brevity. A strong positive correlation among workers' skill and their wage is also observed.

\eat{
\begin{figure*}
\begin{minipage}[t]{0.22\textwidth}
%\centering
    \includegraphics[height=30mm, width = 60mm]{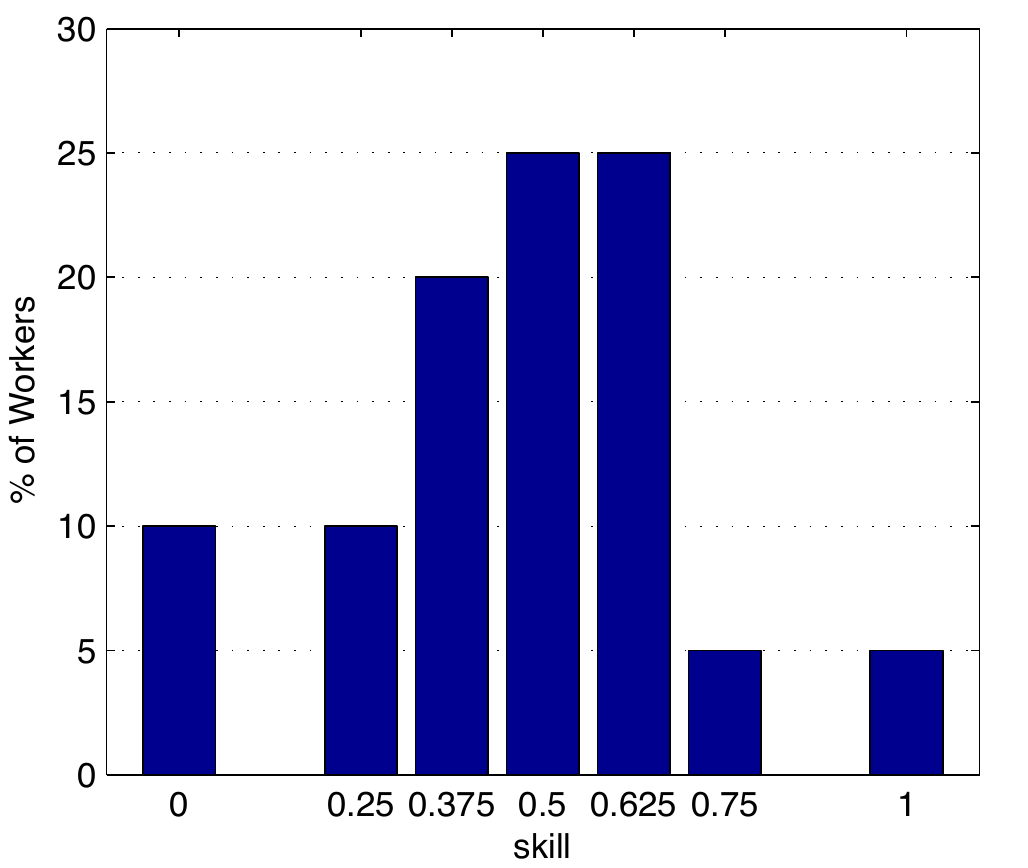}
%{}    
    \caption{\small{Skill distribution}}
    \label{fig:scn1}
\end{minipage}
\hspace{5mm}
\begin{minipage}[t]{0.22\textwidth}
%\centering
    \includegraphics[height=30mm, width = 60mm]{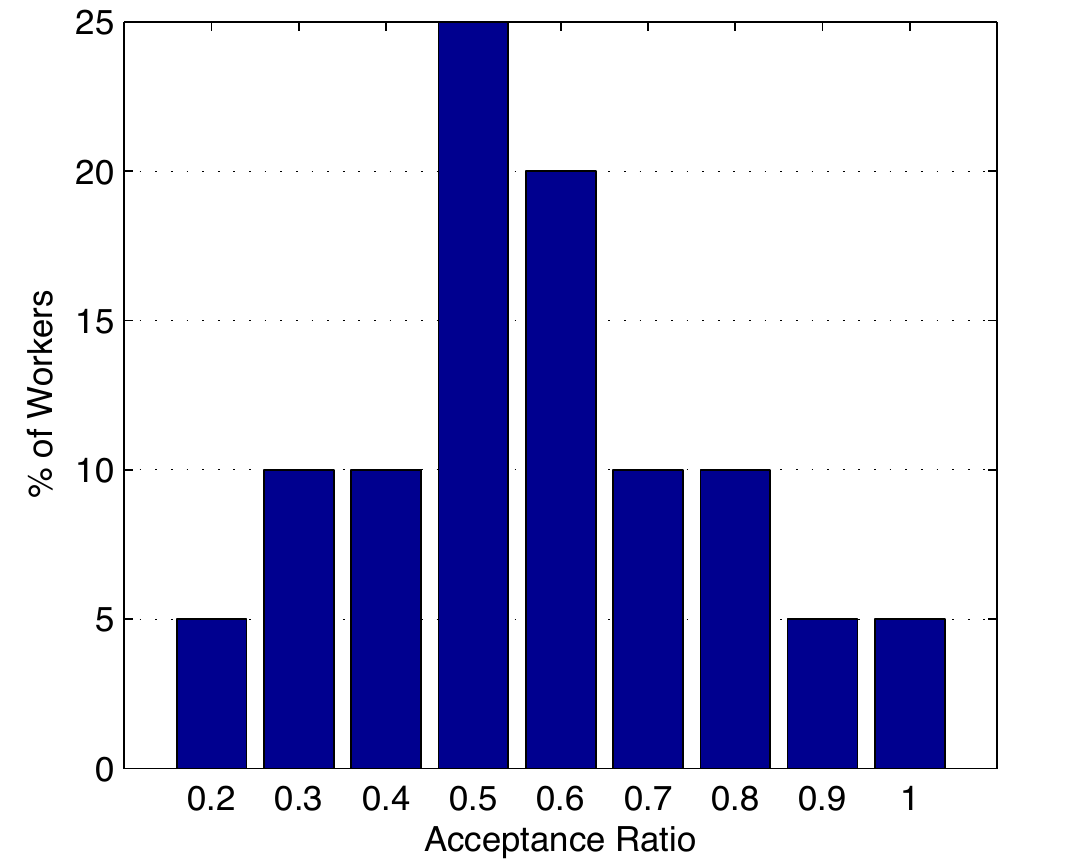}
    \caption{\small{Acceptance ratio distribution}}
    \label{fig:scn2}
\end{minipage}
\hspace{5mm}
\begin{minipage}[t]{0.22\textwidth}
%\centering
    \includegraphics[height=30mm, width = 60mm]{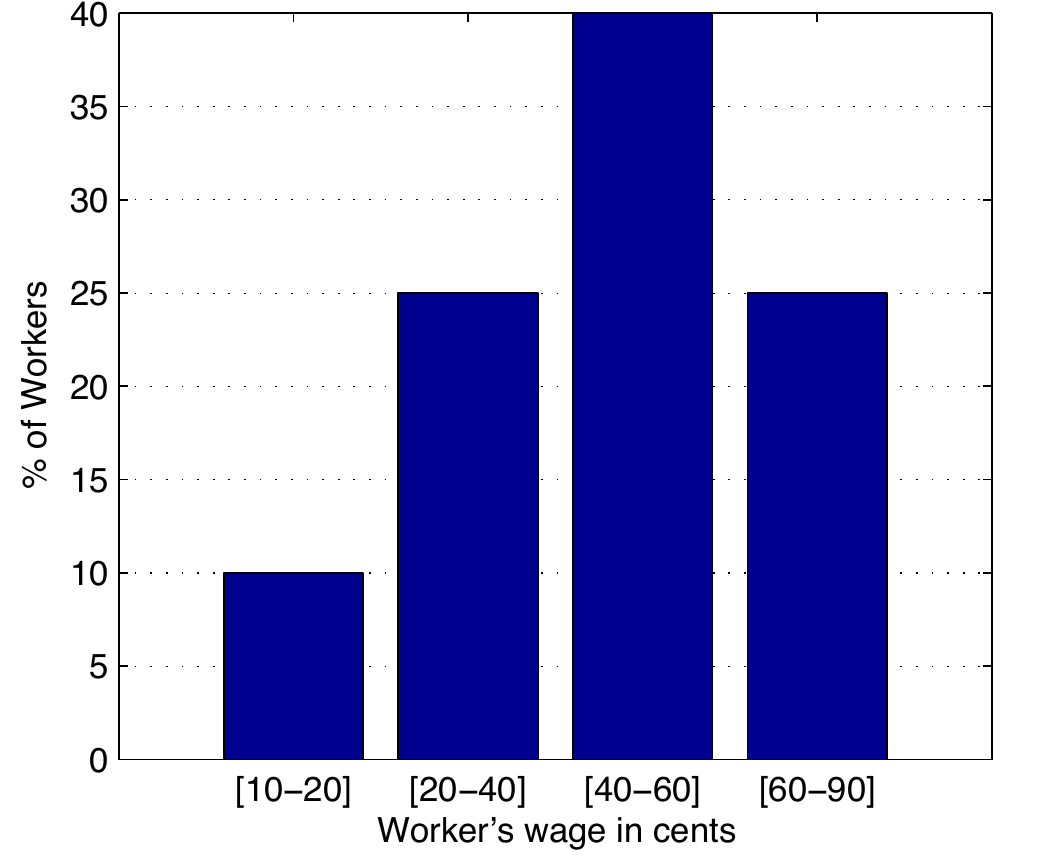}
   \caption{\small{Worker wage distribution}}
    \label{fig:scn3}
\end{minipage}
\hspace{5mm}
\begin{minipage}[t]{0.22\textwidth}
%\centering
    \includegraphics[height=30mm, width = 50mm]{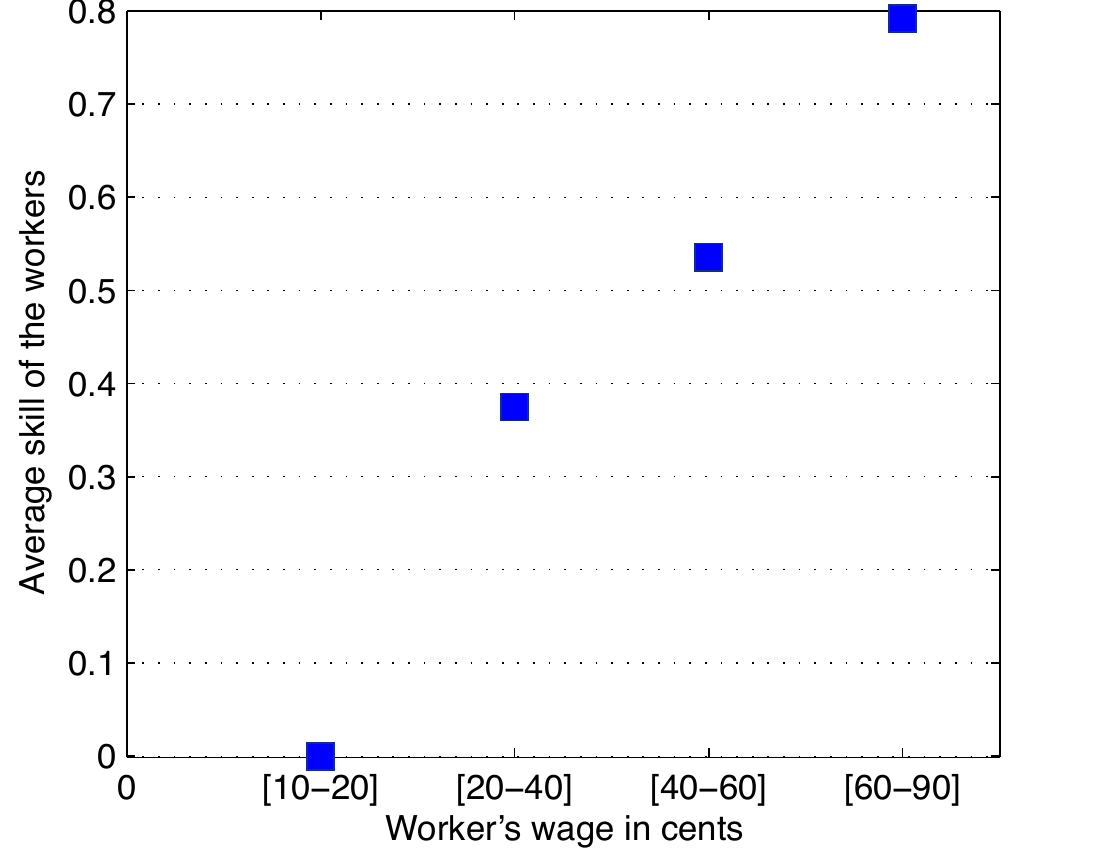}
   \caption{\small{Strong positive correlation between worker skill and wage}}
    \label{fig:scn4}
\end{minipage}
%\hspace{5mm}
\end{figure*}
}

\begin{figure*}[t]
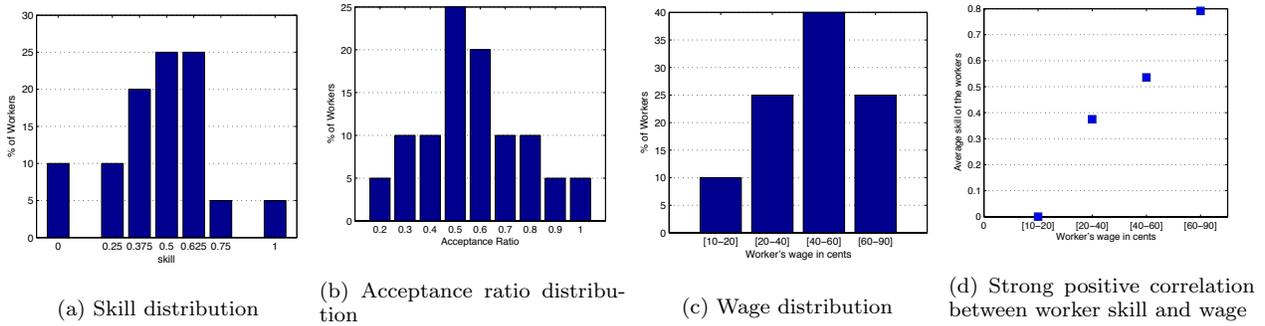

        \centering
        \begin{subfigure}[a]{0.23\textwidth}
                %\centering
                \includegraphics[width=\textwidth]{figures/experiments/matlab/skilld.pdf}
\vspace{-0.05in}                
                \caption{Skill distribution}
                \label{fig:scn1}
        \end{subfigure}
        \begin{subfigure}[a]{0.23\textwidth}
                %\centering
                \includegraphics[width=\textwidth]{figures/experiments/matlab/accrd.pdf}
\vspace{-0.05in}                 
                \caption{Acceptance ratio distribution}
                \label{fig:scn2}
        \end{subfigure}
        \begin{subfigure}[c]{0.23\textwidth}
               % \centering
                \includegraphics[width=\textwidth]{figures/experiments/matlab/waged.pdf}
\vspace{-0.05in}     
                \caption{Wage distribution}
                \label{fig:scn3}
        \end{subfigure}
         \begin{subfigure}[c]{0.23\textwidth}
               % \centering
                \includegraphics[width=\textwidth]{figures/experiments/matlab/corr.pdf}
\vspace{-0.05in}                 
                \caption{Strong positive correlation between worker skill and wage}
                \label{fig:scn4}
        \end{subfigure}
\vspace{-0.05in}        
       \caption{AMT worker profile distributions for the Egypt task}\label{fig:phase1}
\end{figure*}

%*Note: From here we extract the parameters of: worker skill distribution, worker wage distribution, worker probability of acceptance distribution, which will also be applied to parameterize the simulator. We should mention this and give the figures.*

\vspace{-0.05in}
\subsubsection{Stage 2 - Worker-to-Task Assignment}
In this stage, a subset ($56$ out of the $100$) of the workers among those who participated in Stage 1 is selected according to 3 worker-to-task assignment strategies: {\tt SmartCrowd}, {\tt Benchmark} and {\tt Online-Greedy}, as presented above. The minimum skill requirement per task is considered to be $1.8$, the maximum wage $\$2.0$ and $W_1=W_2=0.5$. The selected workers for each task are provided with a Google doc to collaboratively compose an article on the task's topic up to 150 words and in a time window of $24$ hours. The workers are suggested to use the answers of the Stage-1  questionnaires, as reference and/or starting point of their work %(this choice is made taking into account that in knowledge-intensive tasks, workers are not likely to contribute to an entirely blank document [REF-TODO]). 
Workers are also asked to care for quality aspects of their article, such as language correctness and information completeness. The final outcome of this stage is a production of $3$ documents per task, and a total of $15$ documents.
\vspace{-0.05in}
\subsubsection{Stage 3 - Task Evaluation}
KI-C evaluation is a delicate topic because it is objective. An appropriate technique for such objective evaluation is to again leverage the \emph{wisdom of the crowds}. This way a diverse and large enough group of individuals can accurately evaluate information to nullify individual biases and herding effect. Therefore, we crowdsource \emph{the task evaluation}. Each completed task (set of $3$ documents) from Stage 2 is set up as a HIT in AMT, and $30$ workers are assigned to evaluate it considering 5 key quality assessment aspects~\cite{quality}, without knowing the underlying task production algorithm. The results listed in Table~\ref{tab:userstudy} indicate that the use of {\tt SmartCrowd} indeed leads to more qualitative KI-C tasks, across \emph{all} of the measured quality axes. 

\begin{table*}[t]
\begin{tabular}{ |l|l|l|l|l|l|l|l| }
\hline
\multicolumn{8}{ |c| }{Average Rating} \\
\hline
Task & Algorithm & Completeness & Grammar & Neutrality & Clarity & Timeliness & Added-value \\ \hline
\multirow{3}{*}{Egypt political unrest} & {\tt SmartCrowd} & 4.5 & 4.2 & 4.0 & 4.2 & 4.1 & 4.0 \\
& {\tt Online-greedy} & 3.3 & 3.4 & 3.3 & 3.0 & 3.4 & 3.1\\
& {\tt Benchmark} & 3.1 & 3.2 & 3.3 & 3.1 & 3.2 & 2.9 \\
\hline
\multirow{3}{*}{NSA document leakage} & {\tt SmartCrowd} & 4.5 & 4.7 & 4.3 & 3.9 & 4.1 & 4.1\\
& {\tt Online-greedy} & 3.2 & 3.4 & 3.3 & 3.3 & 3.0 & 2.9\\
& {\tt Benchmark} & 3.3 & 3.3 & 3.4 & 2.9 & 2.9 & 3.4 \\
\hline
\multirow{3}{*}{PS Games} & {\tt SmartCrowd} & 4.3 & 4.2 & 4.0 & 4.1 & 4.1 & 4.2 \\
& {\tt Online-greedy} & 3.2 & 3.3 & 3.3 & 3.1 & 3.0 & 2.9\\
& {\tt Benchmark} & 3.0 & 3.2 & 3.1 & 2.8 & 2.9 & 2.9 \\
\hline
\multirow{3}{*}{All electric cars} & {\tt SmartCrowd} & 4.2 & 4.2 & 4.1 & 4.4 & 4.0 & 4.1 \\
& {\tt Online-greedy} & 3.0  & 3.1 & 3.3 & 3.0 & 2.9 & 2.8 \\
& {\tt Benchmark} & 2.9 & 2.6 & 2.6 & 3.0 & 2.8 & 2.3  \\
\hline
\multirow{3}{*}{Global warming} & {\tt SmartCrowd} & 4.2 & 4.3 & 4.5 & 4.2 & 4.1 & 3.7 \\
& {\tt Online-greedy} & 3.0 & 3.2 & 3.1 & 3.4 & 3.3 & 3.3\\
& {\tt Benchmark} & 2.9 & 2.9 & 3.1 & 3.2 & 2.9 & 2.7 \\
\hline
%\caption{User Study}
\end{tabular}
\caption{\small Quality assessment is performed (scale $1-5$) by a new set of $150$ AMT workers of the composed write-ups in Stage-2. {\tt SmartCrowd} always outperforms the other two alternatives consistently and significantly across all quality dimensions. Interestingly,  {\tt Benchmark} is somewhat comparable to {\tt Online-greedy} in most of the quality dimensions (only slightly less). As workers are self-appointed to the tasks in {\tt Benchmark}, in our post analysis we observe that some expert workers are assigned to the task, increasing the overall quality. On an average $\$1.81$ is spent per task for {\tt Online-greedy}, whereas, $\$1.936$ and $\$1.84$ are spent on {\tt SmartCrowd} and {\tt Benchmark} respectively. These results corraborate that the proposed optimization in \sys\ for knowledge-intensive crowdsourcing is {\em effective} to achieve high quality results in a cost-effective way.}\label{tab:userstudy}
\end{table*}

%Synthetic experiments are conducted to validate both the quality and the scalability aspect of the proposed solution, whereas, the latter is  designed for purely qualitative evaluation.\\

%{\bf Summary of Results: [TODO]} Performance experiments are conducted using synthetic data, for the index building, task assignment, and index maintenance processes. We observe that the proposed index-based solutions in \sys\ outperform other strawman algorithms. Similar observations hold for the quality experiments using synthetic data. Unsurprisingly, these experiments show that both qualitatively and performance-wise, {\tt Online-Optimal} performs poorly, corroborating the necessity of designing precomputed indexes as \sys\ proposes. Between \Index\ and \vindex\, as expected, \Index\ produces higher quality solutions, whereas, the latter wins in performance. Real-data experiments are designed by hiring AMT workers in three phases. These experiments corroborate that the proposed objective function in \sys\ is highly effective in designing knowledge-intensive crowdsourcing applications.
%

\eat{
%For the simulation-based experiments, a crowd simulator was built, which generated synthetic workers, interacting with synthetic jobs. The simulator-based experiments present the advantage of enabling the evaluation of the proposed algorithms under differentiated conditions of the external environment (e.g. worker/job arrival rates). The real-world experiments, conducted in AMT on a set of real workers and jobs, allow the validation of our approach on real-world conditions. Together, the two types of experiments offer validation and generalization ability. 

We first present the modelling and results of the simulation-based experiment. We split these results into the pre-computation and on-line phase of the problem. We further divide the results of the on-line phase into performance, quality and maintenance ones. For experiment we typically start with one basic scenario, and then vary critical simulator parameters to examine the responsiveness of the algorithms. 
%We further divide the results of the on-line phase into performance, which are related to the throughput achieved by each of the examined algorithms, quality ones, which are related to the value of the objective function, as well as maintenance ones, which examine the maintenance aspect of our solution. For each of those experiments we typically start with one basic scenario, and then vary different parameters of the simulator, to examine the responsiveness of the algorithms under differentiated conditions of the external crowdsourcing environment. 
Second we present the real-world experiments conducted in AMT, first providing the description of their design and then presenting the quantitative results and the observations that can be drawn.

\subsubsection{Simulation modelling}
\label{simulation_modeling}
Simulations were run 
We build five systems. . Their functionality and modelling is presented in the following.

{\bf 1. Benchmark}
It models a typical crowdsourcing system, \emph{ without task recommendations}. The following elements are modelled: 
\begin{itemize}
\item \emph{Simulation time}. We simulate the system for a time period of 10 days, i.e. 14400 simulation units, with each simulation unit corresponding to 1 minute.

\item \emph{Skills}. A total of $|\mathcal{S}|$= XX(Sara how many total skills do we simulate for?) is simulated. The value of element may vary across experiments, but unless otherwise stated, it receives the above value.

\item \emph{Workers}. We simulate a maximum population of $|U|=10,000$ workers. Each worker $u$, as defined in section \ref{dm}, is modeled as follows:
\begin{itemize}
\item \emph{Skill}$u_{s_i}$ in skill $s_i$ receives a random value from a normal distribution with the mean set to 0.5. After its initialization, the skill value for each worker remains fixed.
%TODO: Change the distribution to normal, if we change it in the experiments. Sara?
\item \emph{Wage} $w_u$, receives a random value from a normal distribution with a mean set to XX. After its initialization it remains fixed.

%TODO: We should do the following --> The values given to the above parameters are selected based on the respective values that the workers of real experiments were found to have.

\end{itemize}
\item \emph{Tasks}. Each modelled task $t$ has a:
\begin{itemize}
\item \emph {Minimum quality}. $Q_it \in [0,1]$, which receives a random value from a normal distribution with mean equal to 0.7. Once initialized it remains fixed for the specific task.
%TODO: Is that so, in the latest experiments, or each task has a different Quality threshold?
%TODO: we should modify minimum quality not to be the same for all tasks. 
\item \emph{Maximum cost} $C_t$, which receives a random value from a normal distribution with mean set to XX. %exponential distribution 100∙exp⁡(λ) with λ= 0.5%. The rational here is that most jobs pay little amounts of money, thus have little upper threshold and less jobs have higher maximum payoffs.
%TODO: Use normal if AMT values show normal.

\end{itemize}

\item \emph {Worker-Task Interaction}.
%TODO: we can change the above title to: Functionality  
Workers arrive to the system following a Poisson process, with a \emph{worker arrival rate} $\mu =10$ users per minute, which gives an inter-arrival time of $ 1/\mu = 0.1$ simulation units. Jobs are requested also following a Poisson process with a \emph{task arrival rate} of $\kappa=20$ tasks per minute, i.e. an inter-arrival time of $1\kappa=0.05$ simulation units.
As soon as a worker arrives, they select, among the available tasks, the one that offers the higher payment, provided that this task pays more than their personal minimum wage $w_u$. Similarly to the usage of `pre-qualification tests or golden-data in current crowdsourcing platforms, the benchmark system assumes a worker filtering based on skill, i.e. workers are allowed to undertake a specific task only if their skill value is above a certain threshold, which for this simulation modelling we set to 10\% of the total task requirement in terms of quality, i.e. equal to $s_iu=0.07$.
%TODO: Do we finally do this in the simulations? Is this threshold too low? Normally in AMT the skill threshold is set to 90% (e.g. of accuracy). 
\end{itemize}

The benchmark is the basis for all other systems, i.e. the modelling elements of workers and tasks are kept the same to allow comparison. What changes is that for the rest of the modelled systems, workers do not self-select tasks but they are given task recommendations, which they can accept or reject. 

{\bf 2. Greedy}
Here we model a crowdsourcing system with task recommendations, given by a heuristic on-line greedy algorithm, which works as follows. As soon as a worker arrives, the algorithm finds from the available tasks, the ones that pay more than the worker's minimum wage, orders them in descending skill value of the worker (starting from the task where the worker's contribution can have the most impact) and suggests the first from the list to the worker. If the worker accepts she is given the task, if she rejects, the algorithms moves to the next task on the list, until there are no more tasks to recommend. We use this algorithm as a placeholder for on-line greedy algorithms, such as the one in \cite{chienJuHo}. Similarly to our modeled greedy algorithm, in this and other papers, the allocation is done based on marginal utility.

%, for which the contribution of this particular worker will help surpass the quality threshold, i.e. the tasks for which the following holds:
%$\frac{(\sum_{i=1}{n}p_i)+p_i^*}{n+1} - Q>0$, where $p_i$ is the skill contribution quality of a worker that has contributed to this task, $n$ is the number of workers that have already contributed to the task, $i^*$ is the current user and $p_i^*$ is the contribution quality of the current user. Then the algorithm sorts the found jobs in descending value order, it tries to give the worker the job where the worker's contribution will have the most impact.

%If the worker accepts the job, the job is allocated to this worker. If the worker rejects, the algorithm suggests the next job in the list. 
Whether the worker will accept or not, is defined by the worker's \emph{acceptance ratio} $p_u$ element, which is modelled to receive a random value in the [0, 0.5] range, i.e. workers have at most a 50\% probability of accepting a task suggestion. The worker acceptance ratio parameter is kept the same for all remaining systems
% (except from the benchmark (since benchmark makes are no task recommendations). 

{\bf 3. Ad-hoc}
It models a crowdsourcing system with task recommendations, which are computed online, optimally and ad-hoc upon task arrival. That is, when a new task arrives the system calculates the optimal allocation on-the-fly (solving the MILP problem for the specific task and the available at the moment workers) and performs the allocation.

{\bf 4. C-DEX}
It system models a crowdsourcing system with task recommendations, which are computed using the proposed C-DEX approach. The exact size of the worker and task pool used to pre-compute the C-DEX index is determined experimentally, in the performance results of section \ref{precomp}.
 
%that is, the index is pre-calculated solving the MILP problem for a specific set of tasks and workers, determined according to the results presented in the ``Performance experiments" section.

%TODO: The "how many" workers to solve the MILP problem with, is also important and related to the worker arrival ratio. We need to clarify this in the skype meeting.

{\bf 5. C-DEX+}
It models a crowdsourcing system with task recommendations, computed according to the pre-computed index of C-DEX+ approach, including the notion of virtual workers. }

\subsection{Synthetic Data Experiments}
%The purpose of this type of experiments is to examine our approach deeper, in terms of performance and quality. 
These experiments are conducted on an Intel core i7 CPU, 8 GB RAM machine. IBM CPLEX version 12.5.1 is used for solving the ILP. An event-based simulator is designed on Java Netbeans to simulate the crowdsourcing environment. All results are presented as the average of 3 runs.

% To connect CPLEX and the simulator we used the CPLEX Java API. 

{\bf Simulator Parametrization:} The distribution of the parameters presented below are chosen akin to their respective distributions, observed in our real AMT populations. \\
\noindent 1. {\em Simulation Period} - We simulate the system for a time period of 10 days, i.e. 14400 simulation units, with each simulation unit corresponding to 1 minute.\\
\noindent 2. {\em \# of skills} - a total of $|\mathcal{S}|$= 10 skills are simulated. Unless otherwise stated, the default  \# of skills in a task is  $1$. \\
\noindent 3. {\em \# of Workers} - $|\mathcal{U}|$= 10,000.\\
\noindent 4. {\em Profile of a worker} - $u_{s_i}$ in skill $s_i$ receives a random value from a normal distribution with the mean set to $0.5$, variance $0.15$. $w_u$ receives a random value from a normal distribution with a mean set to $0.5$, variance $0.2$. $p_u$ is also normal with a a mean set to $0.5$, variance $0.1$.\\
\noindent 5. {\em Tasks} -  A normal variable with mean $15$, variance $3$ is multiplied with another normal random variable with mean $0.7$, variance $0.15$ to get $Q_{t_i}$, whereas, the former normal random variable is multiplied  with a different normal random variable with mean $0.5$, variance $0.2$ to get $W_t$.\\
\noindent 6. {\em Weights} - Unless otherwise stated, $W_1=W_2=0.5$.\\
\noindent 7. {\em Worker Arrival, Task Arrival} - Workers arrive  following a Poisson process, with an arrival rate of $\mu =10$/minute.
%, which gives an inter-arrival time of $1/\mu = 0.1$ simulation units. 
Tasks arrive also in a Poisson distribution with an arrival rate of $\kappa=20$/minute.\\
%, i.e. an inter-arrival time of $1/\kappa=0.05$ simulation units. 
\noindent 8. {\em Workload} - Unless otherwise stated, the workload is designed with $10,000$ tasks.

{\bf Implemented Algorithms:}
%The implemented algorithms are described next.
{\tt Benchmark:} It models a typical crowdsourcing environment, where the workers are self-appointed to the tasks, trying to maximize their individual profit. The algorithm also performs worker pre-filtering, similar to the pre-qualification tests used by today's crowdourcing platforms, allowing workers to undertake a certain task $t$ only if their skill is above 10\% of the task's skill requirement $Q_{t_i}$.  \\
{\tt Online-Greedy:} As soon as a worker arrives, it finds from the available tasks the ones that pay more than the worker's minimum wage. Then it calculates the worker's marginal utility on the filtered tasks and suggests worker the task with the highest utility. This algorithm is an adaptation of one of the latest state-of-the-art strategies for online task assignment~\cite{chienJuHo}.\\
{\tt Online-Optimal:} It optimally solves the ILP problem of Equation~\ref{eqn:eq} in a purely online fashion; when invoked, it uses only the workers that are currently online on the tasks that currently require worker assignment. \\ %\sys\ proposes in% 
{\tt C-DEX:} generates an optimal solution (Section~\ref{alg1}).\\
{\tt Offline-CDEX-Approx, Online-CDEX-Approx:} generates an approximate solution for offline computation and online maintenance (Section~\ref{greedycdex}).\\
{\tt C-DEX$^+$:} generates an approximate solution (Section~\ref{alg2}).
\vspace{-0.1in}
\subsubsection{Performance Experiments}
We design experiments for: Offline phase (index building) and  Online phase (index maintenance). Two measures are used: clock time for the index building and maintenance stages, and the fraction of successful tasks for the worker-to-task assignment stage ($\frac{\text{\# of succesful task assignments}}{\# tasks}$). 
\vspace{-0.1in}
\paragraph{Index Building (offline)}\label{precomp}
We vary the workload size of {\tt \Index},{\tt \vindex} and \\ {\tt Offline-CDEX-Approx} with $|\mathcal{U}|=10,000$, and measure clock time for index computation (in minutes). Recall that {\tt \vindex} needs to have the Virtual Worker set ($\mathcal{N}$) computed first. For that, our experimental evaluation sets $\alpha$ to $20$-th percentile pair-wise Euclidean distance in ascending order, and observes that the computation time is within $2$ minutes, resulting in $|\mathcal{N}|=620$ Virtual Workers. The results are presented in {\bf Figure~\ref{fig:1} (consider the primary Y-axis)}. Unsurprisingly, {\tt Offline-CDEX-Approx} is the fastest among the three alternatives, but \vindex\ is very comparable. Beyond $50,000$ tasks, {\tt Index} fails to respond. % but the approximate solutions still scale well. %This is because \vindex\ drastically reduces the required number 
%of variables by replacing multiple boolean variables with an integer one. 
\vspace{-0.1in}
\paragraph{Worker Replacement (online)}\label{taskp}
We compare the six implemented algorithms. The index-based algorithms ({\tt \Index},{\tt \vindex} and {\tt Online-CDEX-Approx}) become clear winner compared to the rest.
%is the best performing one among the two we propose. %(see Sections ~\ref{rep} and~\ref{qpv}). 
%A comparison amongst the three index based strategies is presented at the end.

%{\bf Simulation period:} We vary the simulation period in x-axis, and Figure~\ref{fig:2} represents the results. It is easy to observe that proposed index-based strategies outperform the remaining ones significantly. {\tt Online-Greedy} takes significant time to run, leading to small value in y-axis (Fraction of successful tasks). 
%Additionally, we run the experiment for the entire simulation period, and Figure~\ref{fig:3} presents the average task assignment time in minutes. It is easy to observe that proposed index-based strategies outperform the remaining ones significantly. Similarly, {\tt Online-Greedy} takes significant time for assignment, justifying the necessity for pre-computation.

{\bf Simulation period - Figures \ref{fig:2}, \ref{fig:2a}, \ref{fig:3}:} In Figures \ref{fig:2} and \ref{fig:2a}, we measure system performance (fraction of successful tasks) throughout the simulation period at discrete intervals (every 2 days). Figure~\ref{fig:2a} captures the special case with $W_2=0, X_l=0$ and zero skill threshold. Note that, under this condition {\tt Online-CDEX-Approx} has a provable approximation factor.  We can observe that the proposed index-based strategies outperform the remaining ones significantly and that they maintain their throughput over the entire simulation period, while the other algorithms peak and then drop midway, as a result of their myopic worker-task assignment decisions that penalize the overall outcome. However, Figure \ref{fig:2} and \ref{fig:2a} still depict a better-than-reality performance for some algorithms, since certain bad assignments are not counted as such due to the measurement discretization. For example, if a task comes at time unit 1, languishes until time 2388 before getting assigned, it will still count as a successful task. Figure \ref{fig:3} investigates this behavior by measuring average task end-to-end time, i.e. the difference in  time between a task arrival and the time when a set of workers satisfying the task's quality/cost requirements have accepted to take it. This measurement is taken only for successful tasks and smaller is better. It can be observed that our proposed algorithms finish in less than 2 time units mainly because of our worker replacement strategy. The other algorithms including {\tt Online-Greedy} take significantly more time that justifies the necessity of pre-computation.

{\bf Vary the ratio of task to worker arrival rate - Figure~\ref{fig:4}:}
All algorithms perform well when the ratio of task arrival to worker arrival rate is small, because of the oversupply of workers. However, with high task arrival rate index based strategies outperform all the remaining solutions.

%{\bf Vary worker to task arrival ratio} Figure~\ref{fig:4} presents the result with varying worker to task arrival ratio. It is easy to see, that with high task arrival rate {\tt Index} and {\tt \vindex} outperforms all the remaining solutions.

{\bf Vary \# skills/task - Figure~\ref{fig:5}:}
As skills per task increase, the fraction of successful tasks decreases for all algorithms, since finding the right worker becomes harder in a high-dimensional task/worker setting. Nevertheless, the index-based strategies still manage to keep a steadily high performance, outperforming all remaining ones.

%Figure~\ref{fig:5} presents the result with varying number of skills per task. It is easy to see, that the index-based strategies  {\tt Index} and {\tt \vindex} outperforms all the remaining ones, for tasks that require multiple skills.

{\bf Vary acceptance ratio - Figure~\ref{fig:6}:}
With high acceptance ratio, performance improves in general, as workers become more predictable. The index-based strategies consistently outperform the others. 

%{\bf Vary acceptance ratio:}
% Figure~\ref{fig:6} presents the result with varying mean acceptance ratio. With high acceptance ratio, the performance improves in general. It is easy to see, that the index-based strategies  {\tt Index} and {\tt \vindex} consistently outperforms all the remaining ones.

{\bf Vary mean skill - Figure~\ref{fig:7}:}
As expertise becomes scanty (i.e. low values of mean worker skill) {\tt Benchmark} and \\ {\tt Online-Greedy} perform very poorly as  they need to scan and seek more workers to reach the task skill threshold. This justifies that the optimization objective in \sys\ is meaningful for knowledge-intensive tasks.

\vspace{-0.1in}
\paragraph{Worker Addition, Deletion, Update (online)}\label{main}
We vary the \# of new workers, \# of deleted workers,and \# of workers with profile updates and measure the incremental maintenance time for {\tt \Index}, {\tt \vindex}, and {\tt Online-CDEX-Approx}. The results  for worker addition are presented in {\bf Figure~\ref{fig:15}}. The deletion and update cases give similar results and are omitted for brevity. Results show that our incremental index maintenance techniques are efficient. However, the approximate solutions warrant higher efficiency compared to the optimal one.

\vspace{-0.1in}
\subsubsection{Quality Experiments}
For the quality simulation experiments we measure the value of the normalized objective function.
% and naturally this is designed only for the task-assignment phase. The latter is modeled using the same experimental setting as in Section~\ref{taskp}.
\vspace{-0.1in}
\paragraph{Index Building (offline)}\label{precompQ}
The setting is akin to Section~\ref{precomp}, but here we measure the objective function value instead. The results ({\bf consider the secondary Y-axis of Figure~\ref{fig:1}}) demonstrate that both approximation algorithms {\tt \vindex}and {\tt Offline-CDEX-Approx} return high quality solutions that are comparable to its optimal counterpart {\tt \Index}.
\vspace{-0.1in}
\paragraph{Worker Replacement (online)}
{\bf Simulation period - Figures~\ref{fig:9} and \ref{fig:9a}} have similar settings that of Figure~\ref{fig:2} and \ref{fig:2a}. Our proposed index-based strategies significantly outperform the others throughout the period of the simulation. As expected, {\tt Benchmark} performs the worst. {\tt Online-CDEX-Approx} returns higher quality in Figure~\ref{fig:9a} as the algorithm guarantees a provable approximation factor under that settings.

%{\bf Simulation period:} We vary the simulation period in x-axis, and Figure~\ref{fig:9} represents the results. It is easy to observe that proposed index-based strategies outperform the remaining ones significantly. As expected, {\tt Benchmark} performs the worst.

%\smallskip \noindent Additionally, we run the experiment for the entire simulation period, and Figure~\ref{fig:10} represents the results for normalized objective function. It is easy to observe that proposed index-based strategies outperform the remaining ones significantly. Similarly, {\tt Online-Greedy} takes significant time to run, leading to the smallest normalized objective function. \\

%{\bf Vary worker to task arrival ratio:}
%SOS: There is no such figure
%We vary task arrival rate and observe that with high task arrival rate {\tt Index} and {\tt \vindex} declines the least in the normalized objective function compared to other competitors, as also observed in Figure~\ref{fig:4}. The results are omitted for brevity.

{\bf Vary \# skills-Figure~\ref{fig:10}:}
The index-based strategies  outperform all remaining ones, even for tasks that require multiple skills, similarly to Figure~\ref{fig:5}. 

%{\bf Vary \# skills:}
%Figure~\ref{fig:10} presents the result with varying number of skills present in a task. It is easy to see, that the index-based strategies  {\tt Index} and {\tt \vindex} outperforms all the remaining ones, for tasks that require multiple skills.

{\bf Vary acceptance ratio - Figure~\ref{fig:11}:}
The index-based strategies  {\tt \Index}, {\tt \vindex}, and {\tt Online-CDEX-Approx} outperform all the remaining ones, even with small mean worker acceptance ratio.

%{\bf Vary acceptance ratio:}
% Figure~\ref{fig:11} presents the result with varying mean acceptance ratio. It is easy to see, that the index-based strategies  {\tt Index} and {\tt \vindex} outperforms all the remaining ones, with small mean acceptance ratio.

{\bf Vary mean skill - Figure~\ref{fig:12}:}
The index-based strategies  consistently win over the rest, including the case where expertise is very scarce.

{%\bf Vary mean skill:}
%Figure~\ref{fig:12} presents the result with varying mean skill of the workers. It is easy to see, that the index-based strategies  {\tt Index} and {\tt \vindex} consistently win over the rest.

%{\bf Comparison among different strategies - Figure~\ref{fig:13}:}
%%Comparing strategy-1 and strategy-2 of the index based algorithms, with the case when no strategy is designed, we can observe that, 
%Strategy 2 performs the best, since its ILP objective function is the same as that of the precomputation ILP. The performance drop corresponds to what we saw in Figure~\ref{fig:2}: during the beginning of simulation, all methods try to increase the objective function for each task, while as time lapses they try to at least surpass the task quality threshold. 

%{\bf Comparison among different strategies for worker unavailability:}
%In this experiments, we investigate the strategy-1 and strategy-2 of the index based algorithms, and compare it with the case when no strategy is designed. We vary the simulation time in the x-axis. Figure~\ref{fig:13} presents the results for {\tt \Index}. Clearly strategy-1 is greedy but less optimal, whereas, strategy-2 is slower but more sensitive to quality. {\tt \vindex} exhibits similar behavior, and is omitted for brevity.

{\bf Vary $W_1, W_2$ - Figure~\ref{fig:14}:} 
As expected, when $W_1$ increases,  all algorithms seek to improve quality more than cost and task quality increases. The index based solutions outperform the rest of the competitors with high $W_1$ (task that require optimization over skills), compared to the rest.  

%{\bf Vary $W_1, W_2$ :} We vary the weights $W_1, W_2$ and observe that the index based solutions outperform the rest of the competitors with high $W_1$ (task that require optimization over skills), compared to the rest. Figure~\ref{fig:14} represents the results.
 \vspace{-0.1in}
\paragraph{Worker Addition, Deletion, Update (online)}\label{onlineq}
It considers a similar settings as Experiment~\ref{main}. We observe that our index based approximate solutions ({\tt Online-CDEX-Approx} and \vindex) are comparable to the optimal solution \Index\ in quality. The results are omitted for brevity.

\begin{figure*}
\centering
\begin{minipage}[t]{0.30\textwidth}
%\centering
    \includegraphics[height=30mm, width = 60mm]{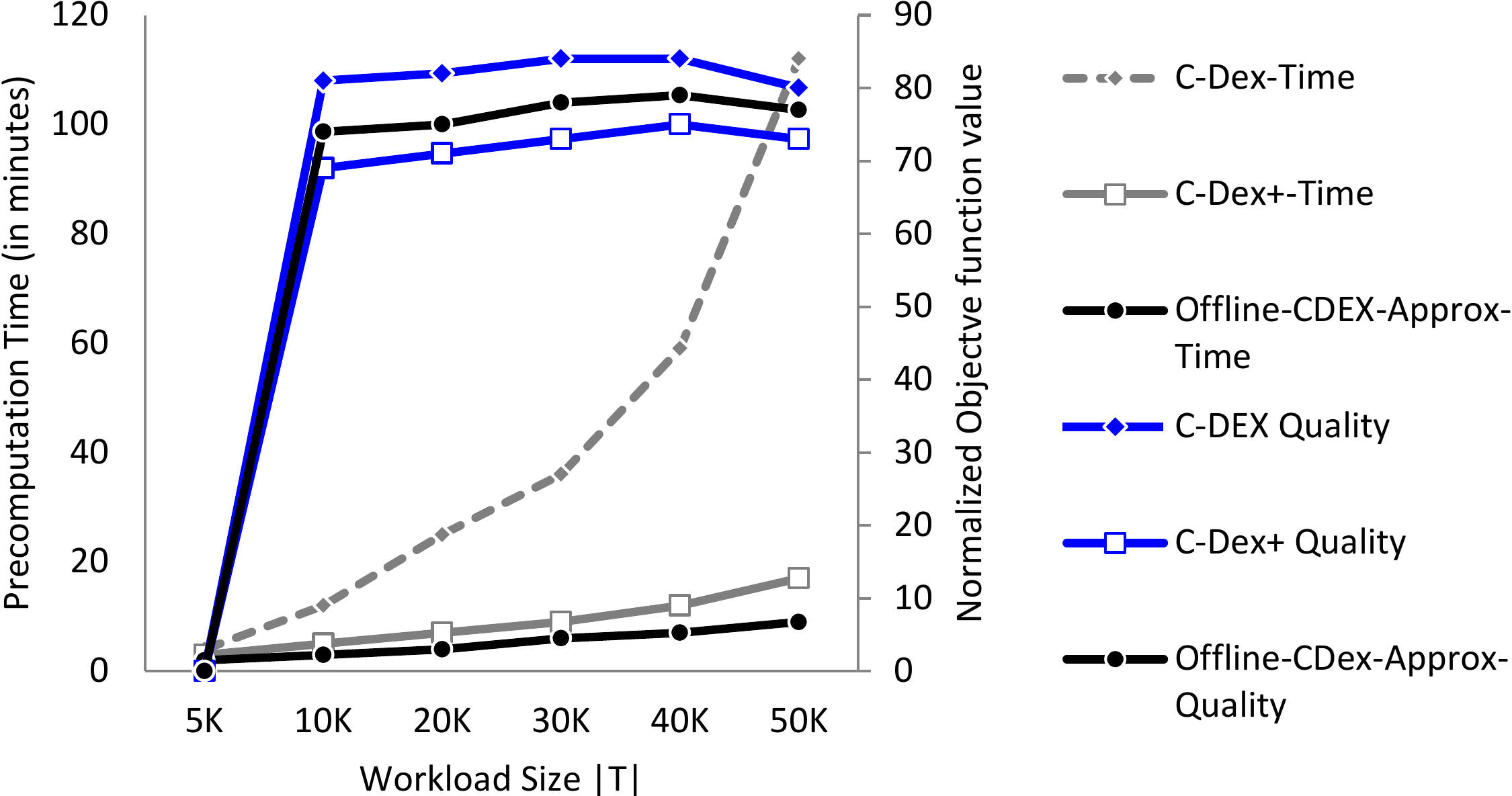}
    %synthetic/fig1WorkloadVsClockTime.pdf
\vspace{-0.05in}     
    \caption{\small{Index Building Time and Quality varying workload}}
    \label{fig:1}
\end{minipage}
\hspace{5mm}
\begin{minipage}[t]{0.30\textwidth}
%\centering
    \includegraphics[height=30mm, width = 60mm]{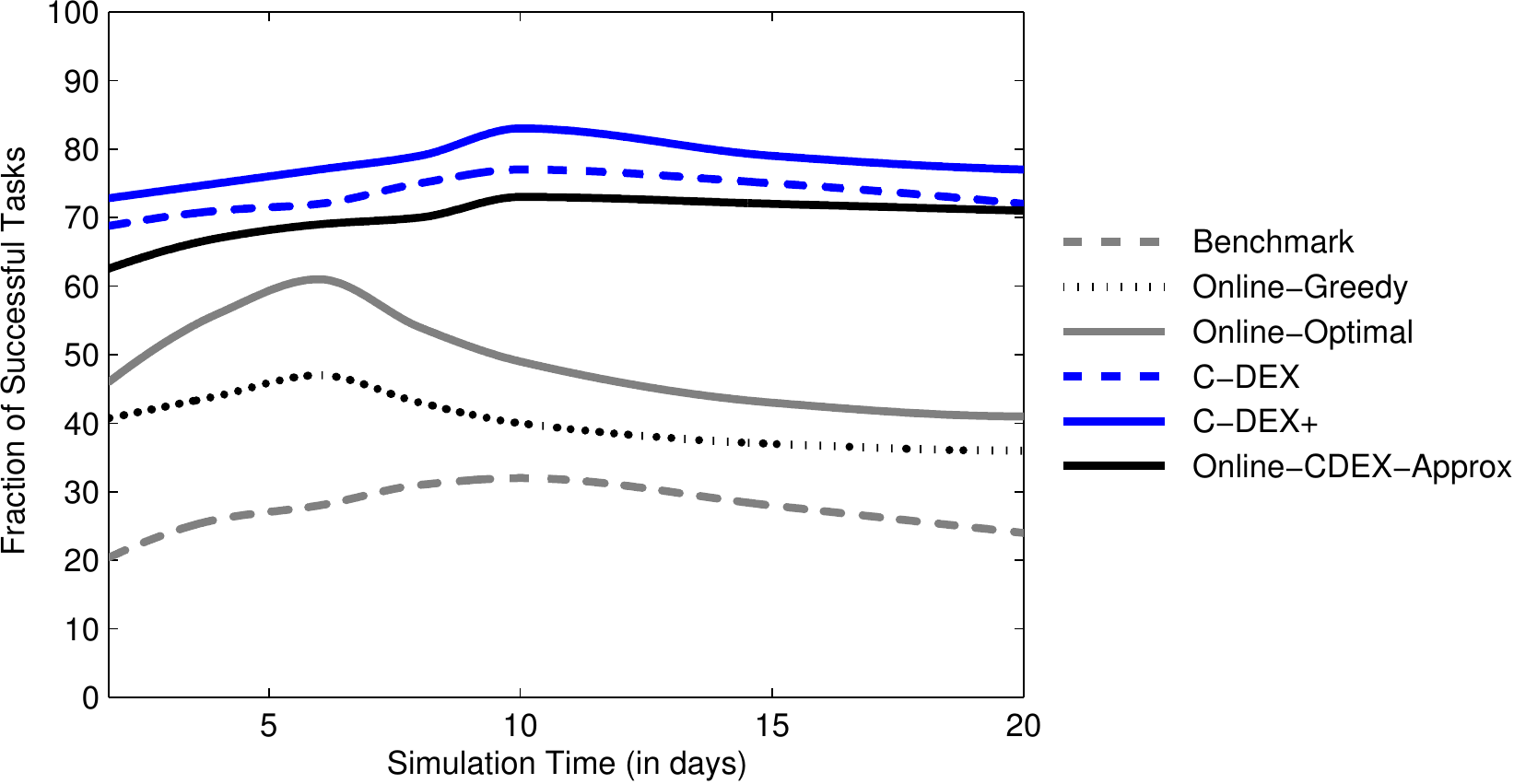}
    %{figures/synthetic/matlab/fig2SimTimeVsThroughput.pdf}
\vspace{-0.05in}  
   \caption{\small{Performance varying simulation time}}
    \label{fig:2}
\end{minipage}
\hspace{5mm}
\begin{minipage}[t]{0.30\textwidth}
\centering
   \includegraphics[height=30mm, width = 60mm]{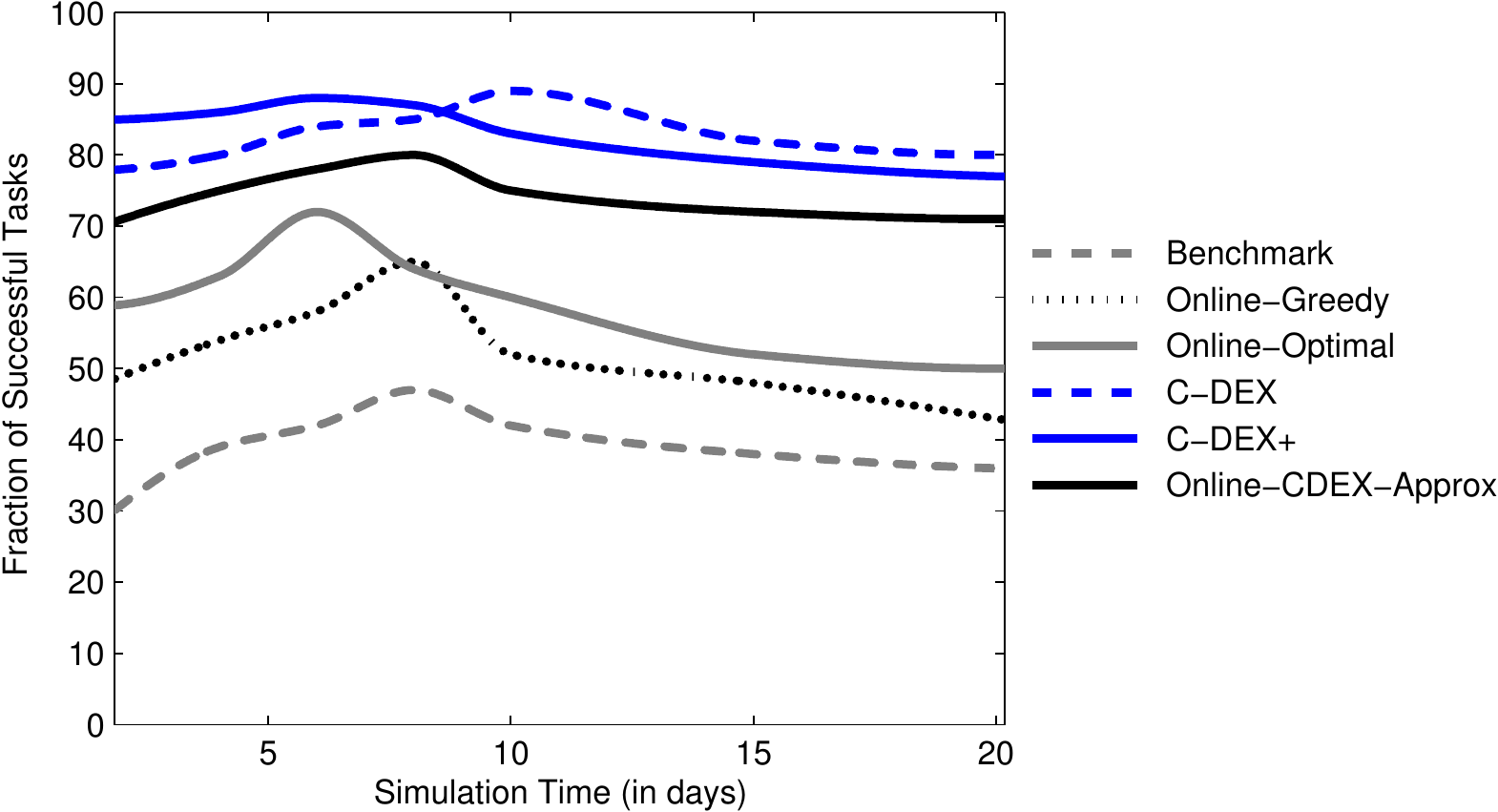}
  \vspace{-0.05in} 
    \caption{\small{Performance varying simulation time with no skill threshold and $W_2=0,X_l=0$}}
    \label{fig:2a}
\end{minipage}
%\hspace{5mm}
\end{figure*}

\begin{figure*}
\begin{minipage}[t]{0.30\textwidth}
%\centering
    \includegraphics[height=30mm, width = 60mm]{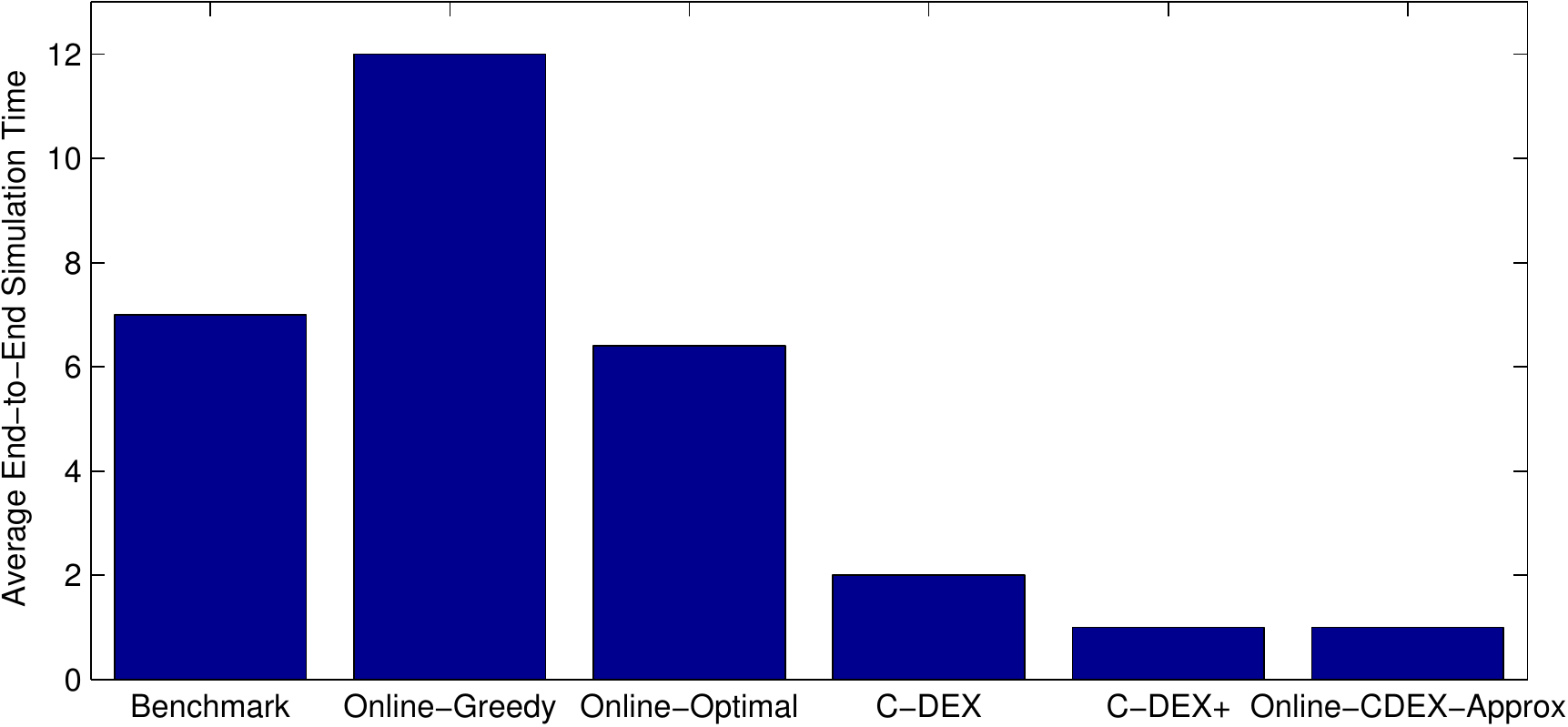}
    %{figures/synthetic/matlab/fig3AlgoVsAvgEndToEndTime.pdf}
\vspace{-0.05in}    
    \caption{\small{Performance after entire simulation period}}
    \label{fig:3}
\end{minipage}
\hspace{5mm}
\begin{minipage}[t]{0.30\textwidth}
%\centering
\includegraphics[height=30mm, width = 60mm]{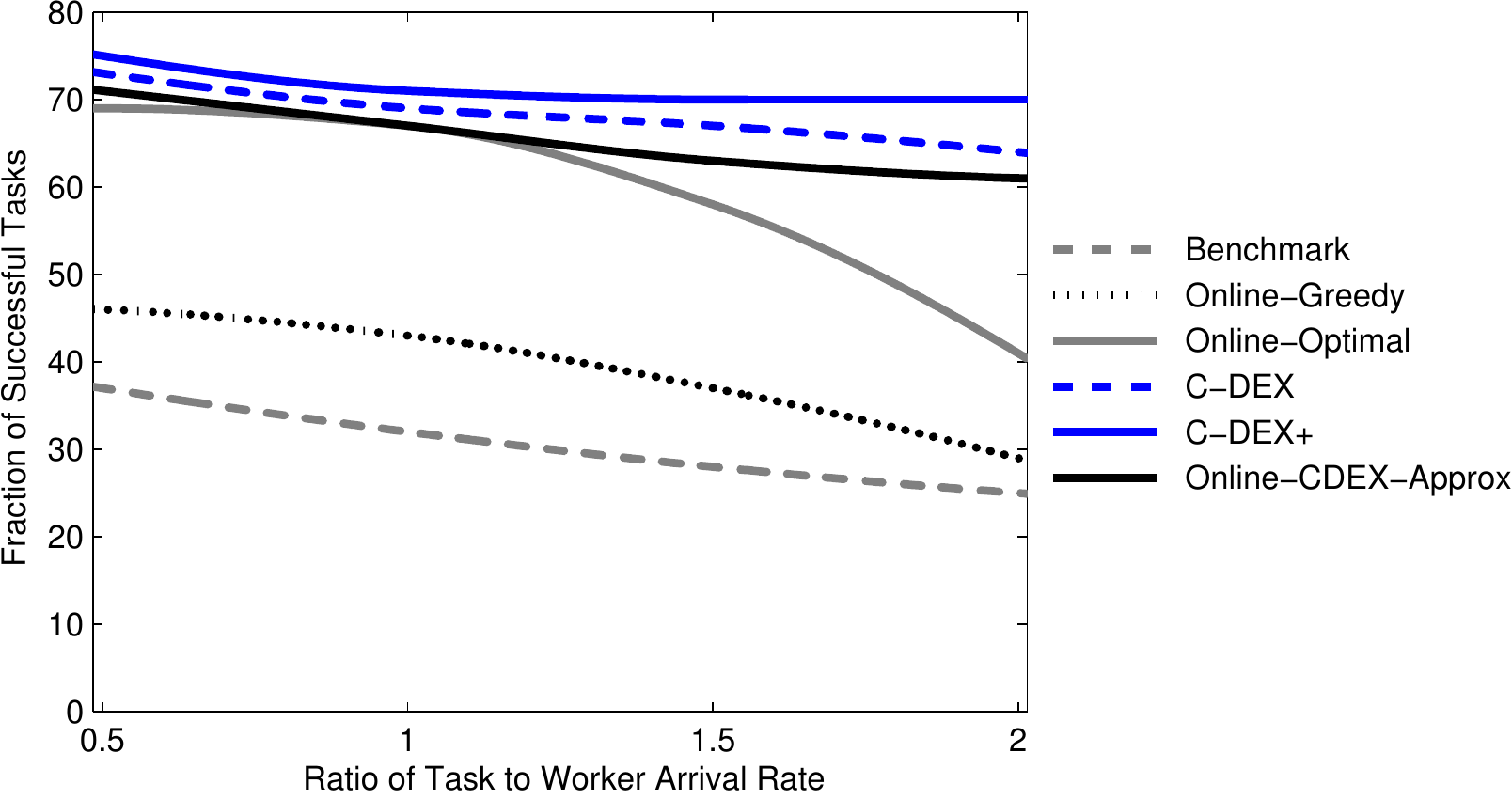}
\vspace{-0.05in}    
    \caption{\small{Performance varying the ratio of task to worker arrival rate}}
    \label{fig:4}
\end{minipage}
\hspace{5mm}
\begin{minipage}[t]{0.30\textwidth}
%\centering
    \includegraphics[height=30mm, width = 60mm]{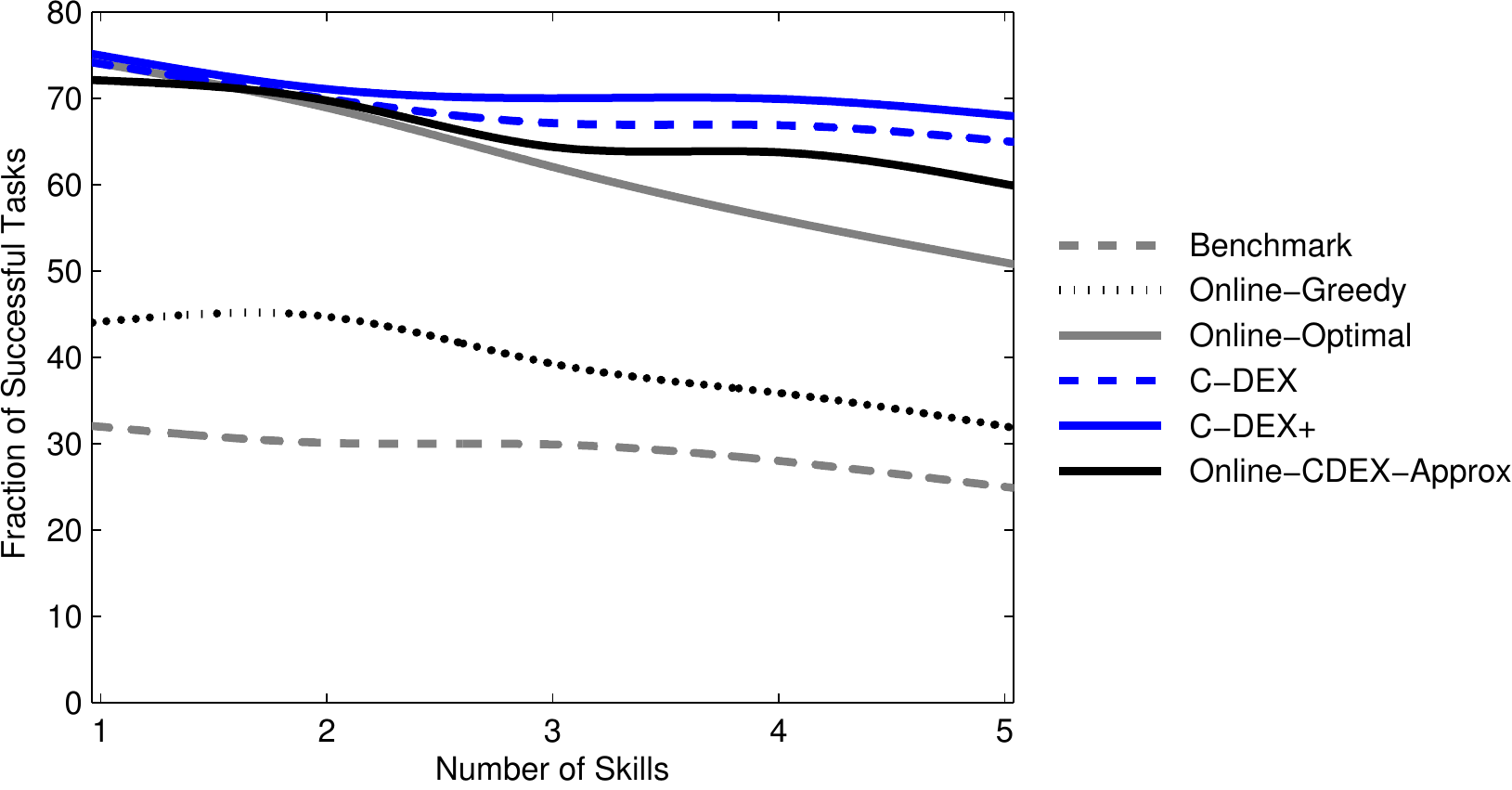}
\vspace{-0.05in}   
   \caption{\small{Performance varying \# of skills/task}}
    \label{fig:5}
\end{minipage}
%\hspace{5mm}
\end{figure*}

\begin{figure*}
\centering
%\hspace{5mm}
\begin{minipage}[t]{0.30\textwidth}
%\centering
    \includegraphics[height=30mm, width = 60mm]{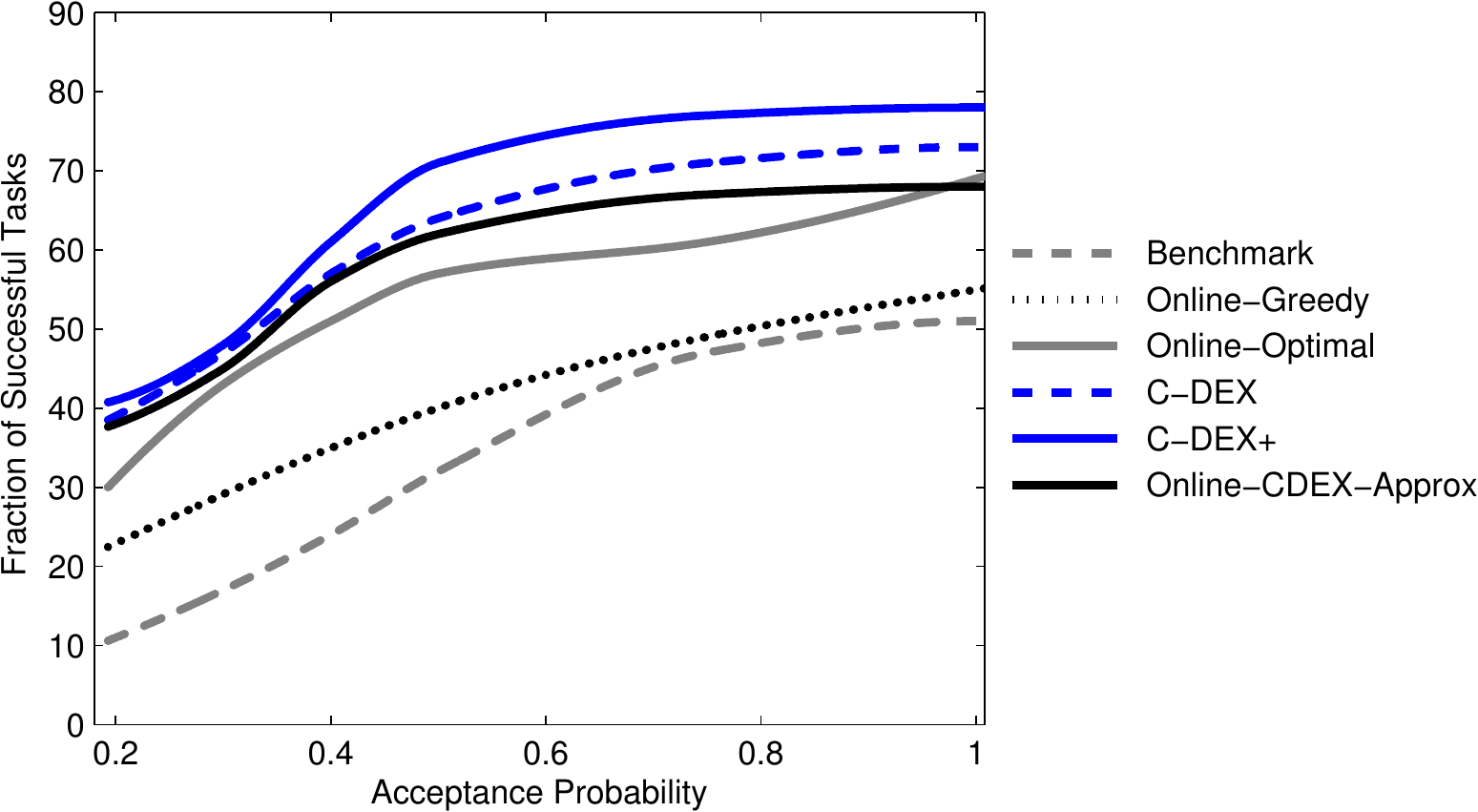}
\vspace{-0.05in}   
    \caption{\small{Performance varying acceptance ratio}}
    \label{fig:6}
\end{minipage}
\hspace{5mm}
\begin{minipage}[t]{0.30\textwidth}
%\centering
 \includegraphics[height=30mm, width = 60mm]{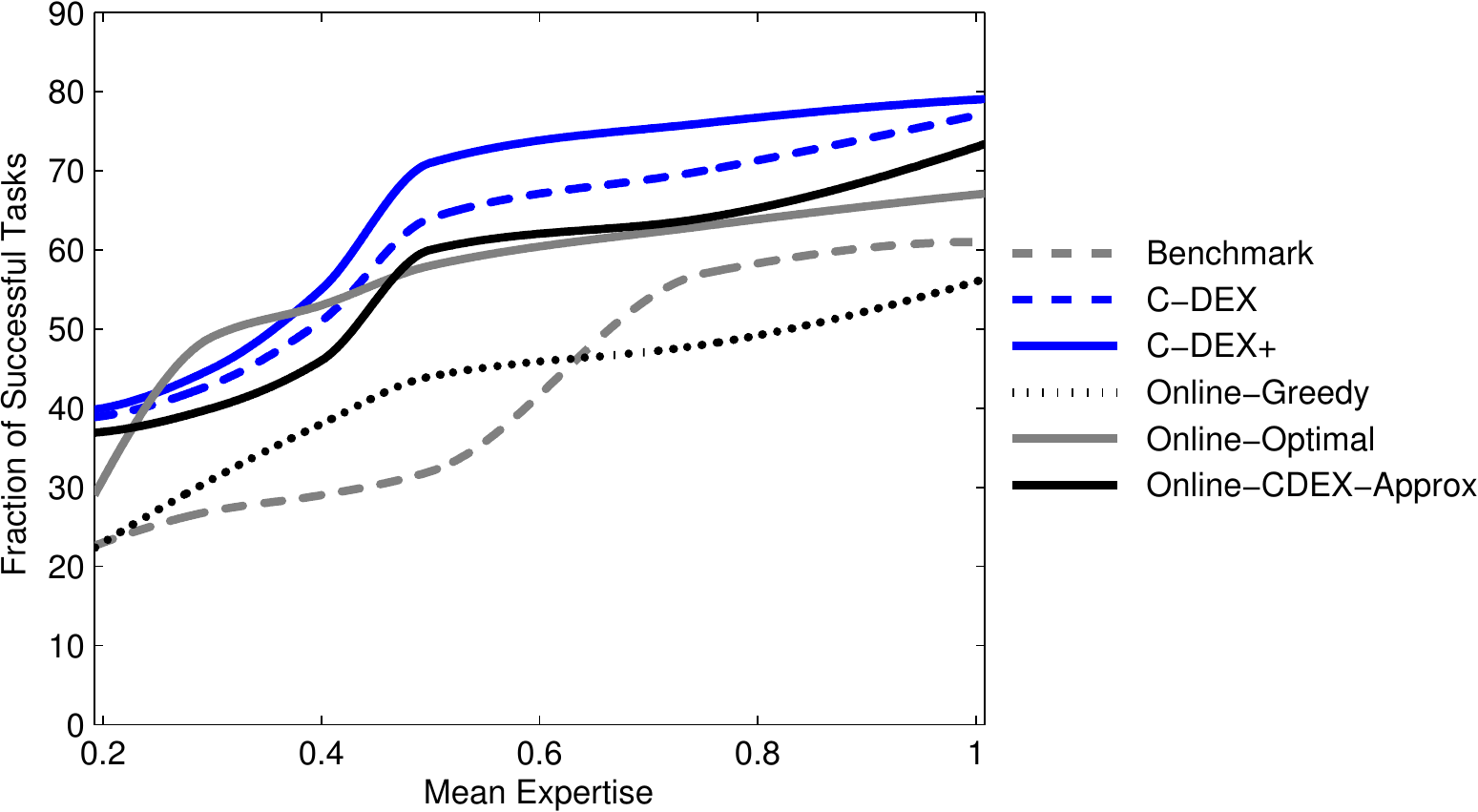}
\vspace{-0.05in}    
   \caption{\small{Performance varying mean skill}}
    \label{fig:7}
\end{minipage}
\hspace{5mm}
\begin{minipage}[t]{0.30\textwidth}
%\centering
    \includegraphics[height=30mm, width = 60mm]{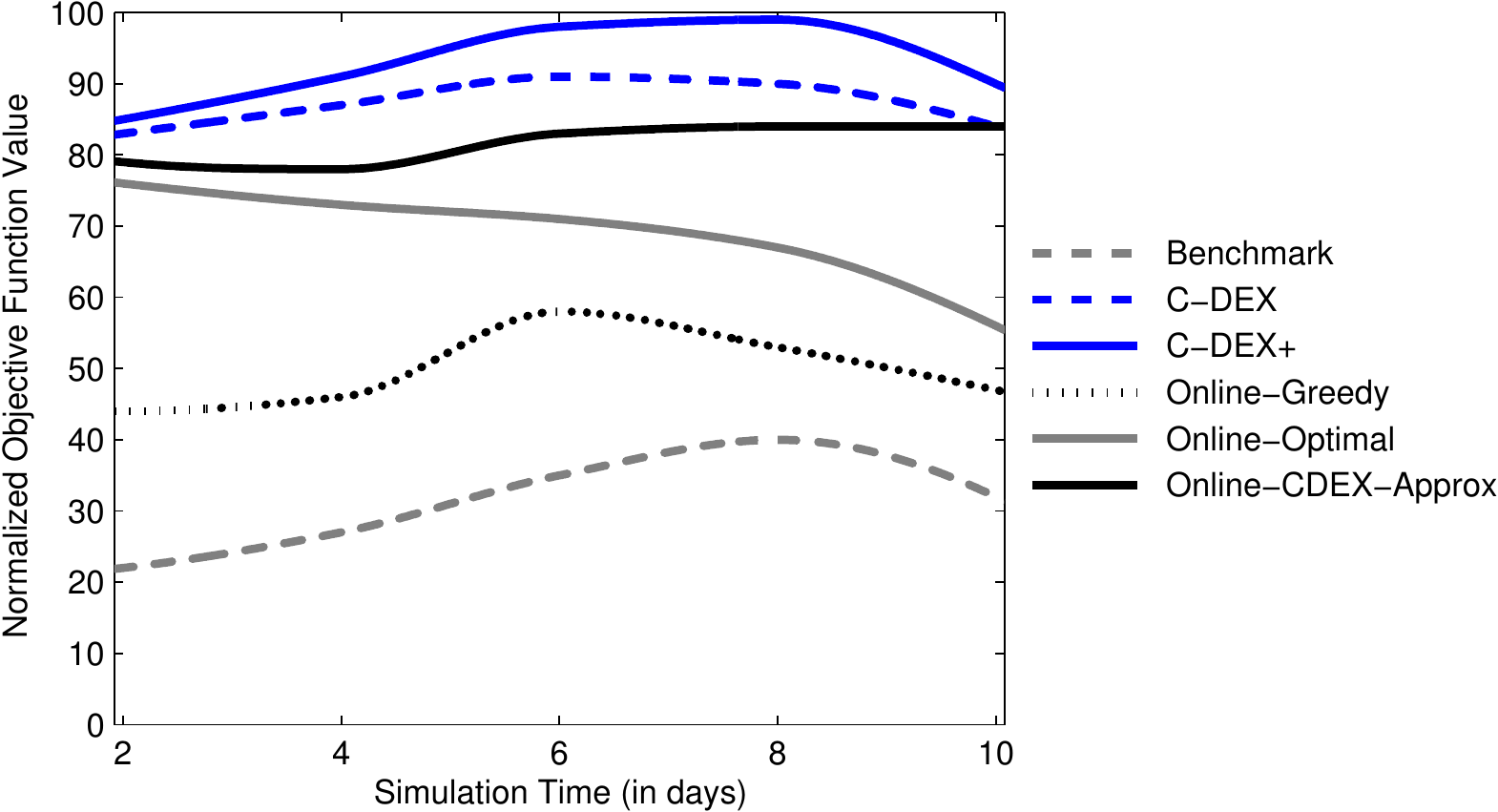}
\vspace{-0.05in}    
    \caption{\small{Objective function varying simulation time}}
    \label{fig:9}
\end{minipage}
\end{figure*}

\begin{figure*}
\centering
%\hspace
\begin{minipage}[t]{0.30\textwidth}
    \includegraphics[height=30mm, width = 60mm]{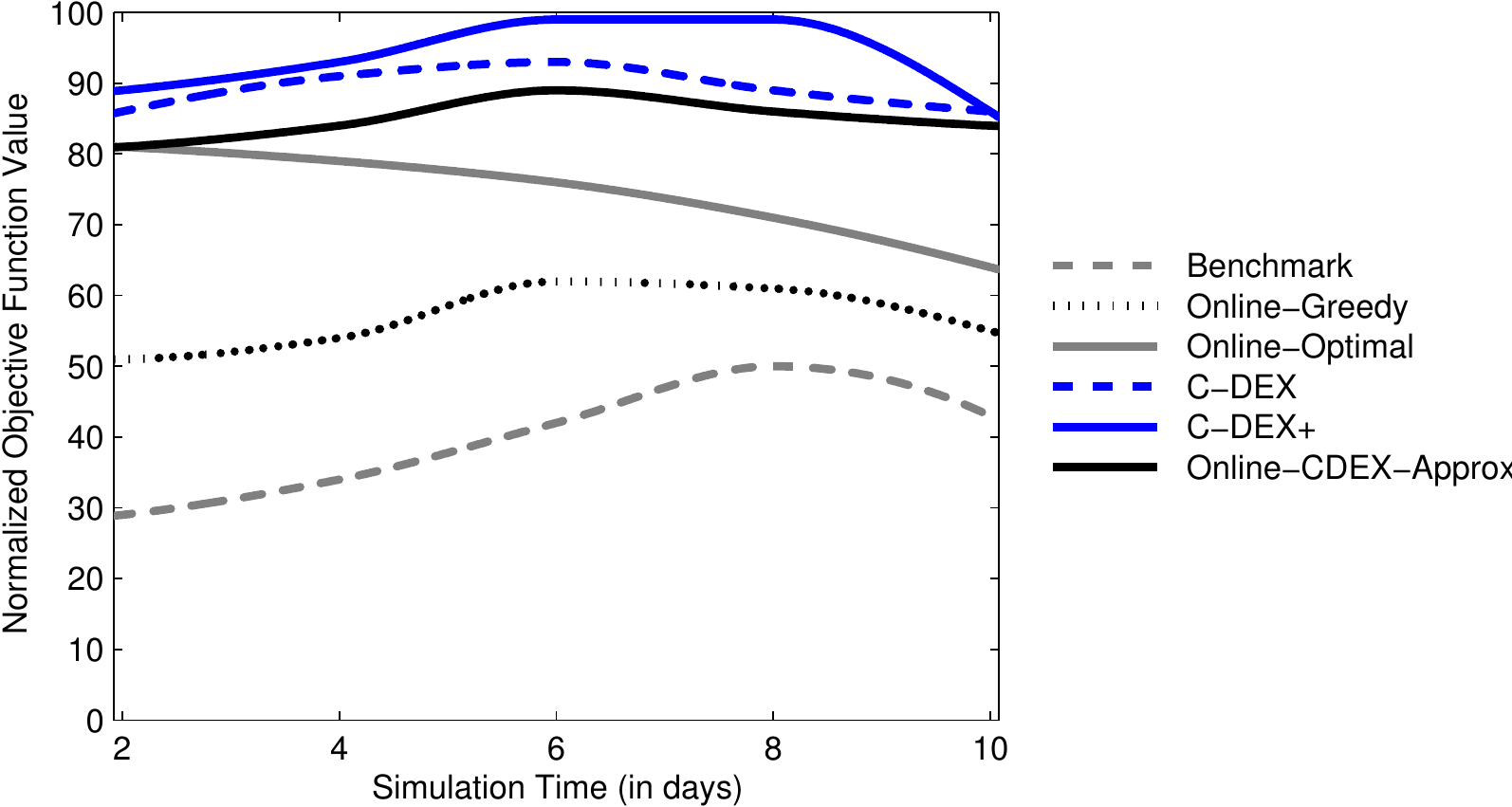}
   \caption{\small{Objective function varying simulation time with no skill threshold $W_2=0,X_l=0$}}
   \label{fig:9a}
\end{minipage}
\hspace{5mm}
\begin{minipage}[t]{0.30\textwidth}
%\centering
    \includegraphics[height=30mm, width = 60mm]{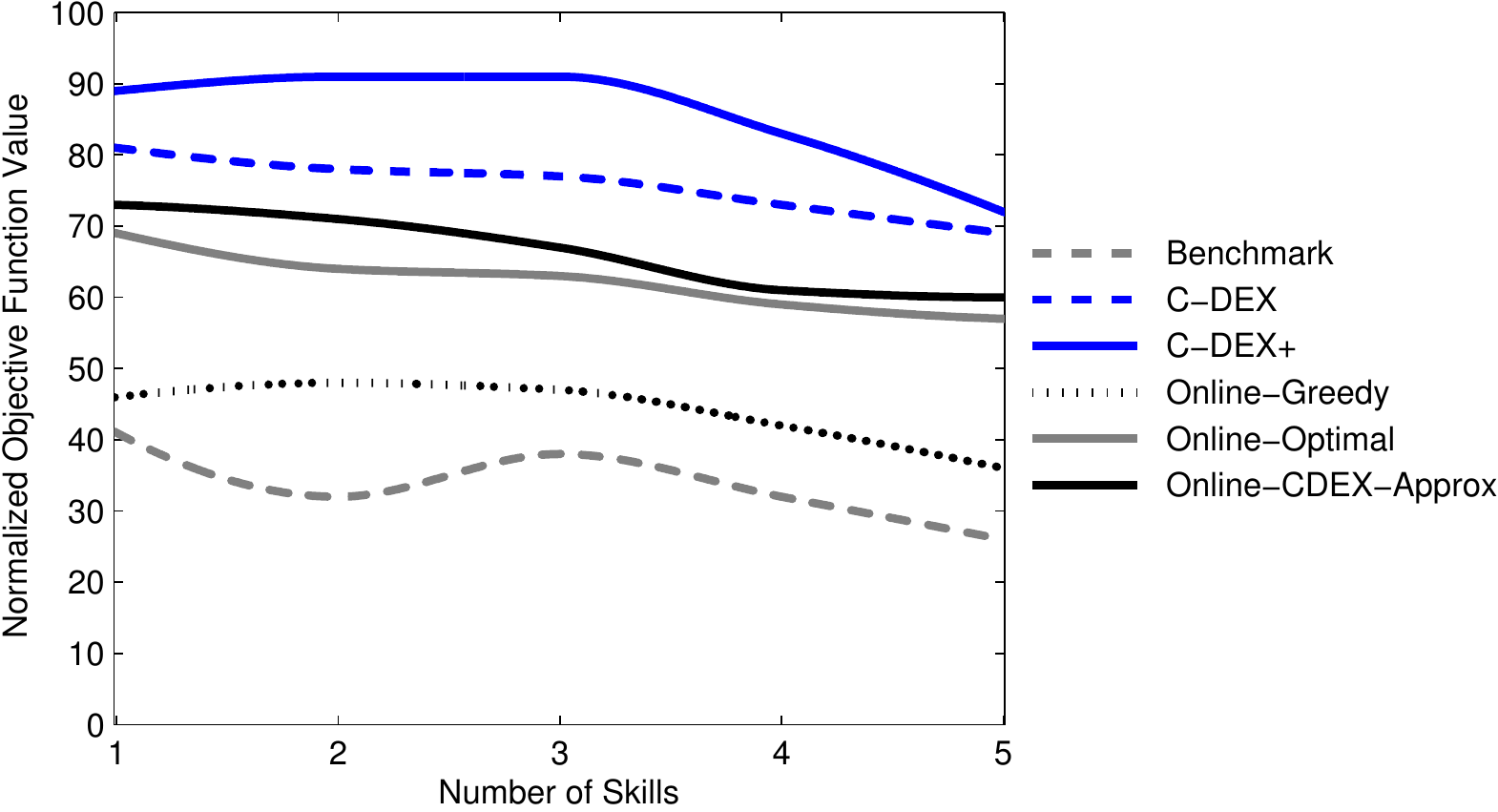}
    \caption{\small{Objective function varying \# of skills/task}}
    \label{fig:10}
\end{minipage}
\hspace{5mm}
\begin{minipage}[t]{0.30\textwidth}
%\centering
    \includegraphics[height=30mm, width = 60mm]
     {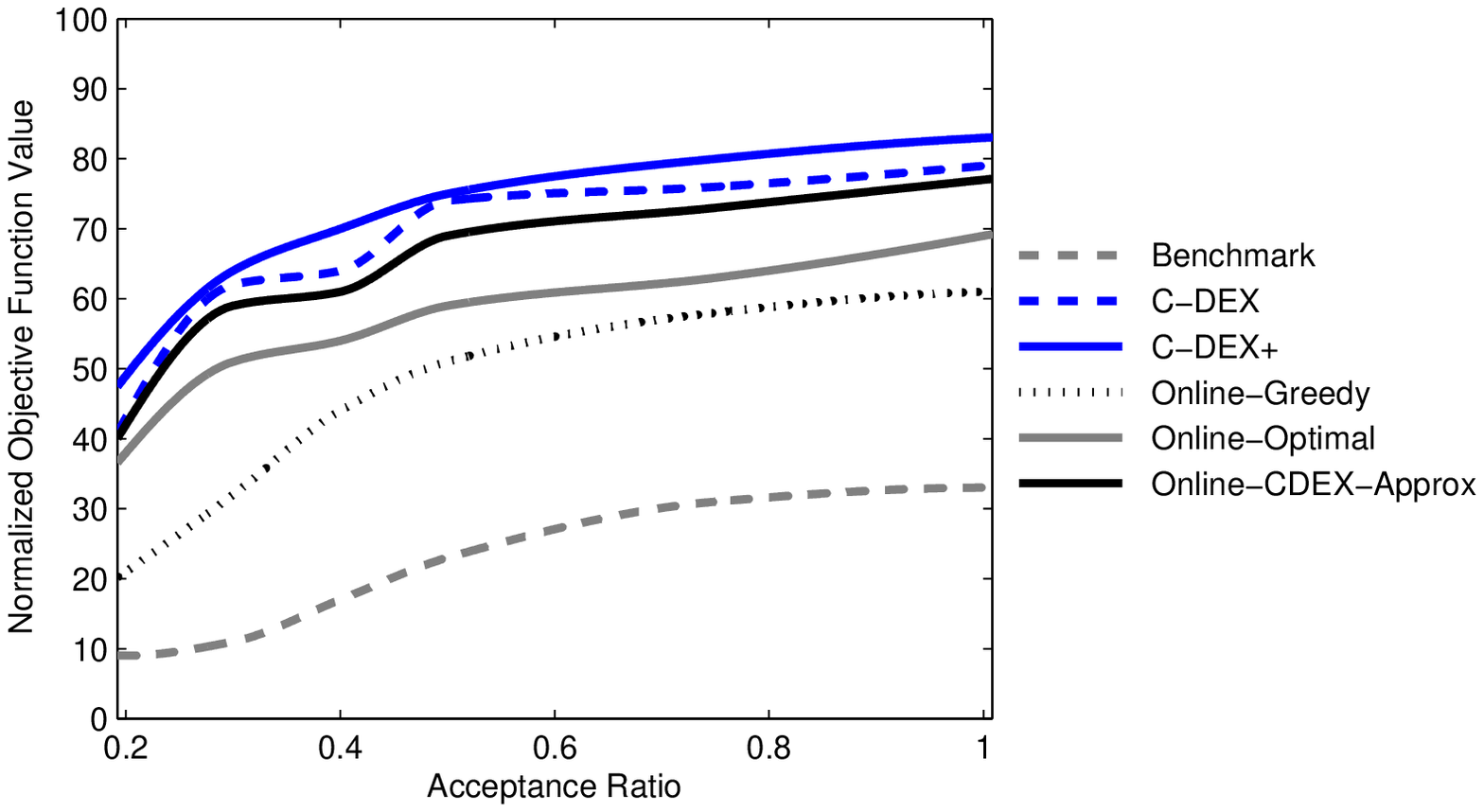}
    \caption{\small{Objective function varying acceptance ratio}}
    \label{fig:11}
\end{minipage}
\end{figure*}

%\hspace{5mm}
\begin{figure*}
\centering
\begin{minipage}[t]{0.30\textwidth}
 \includegraphics[height=30mm, width = 60mm]
    {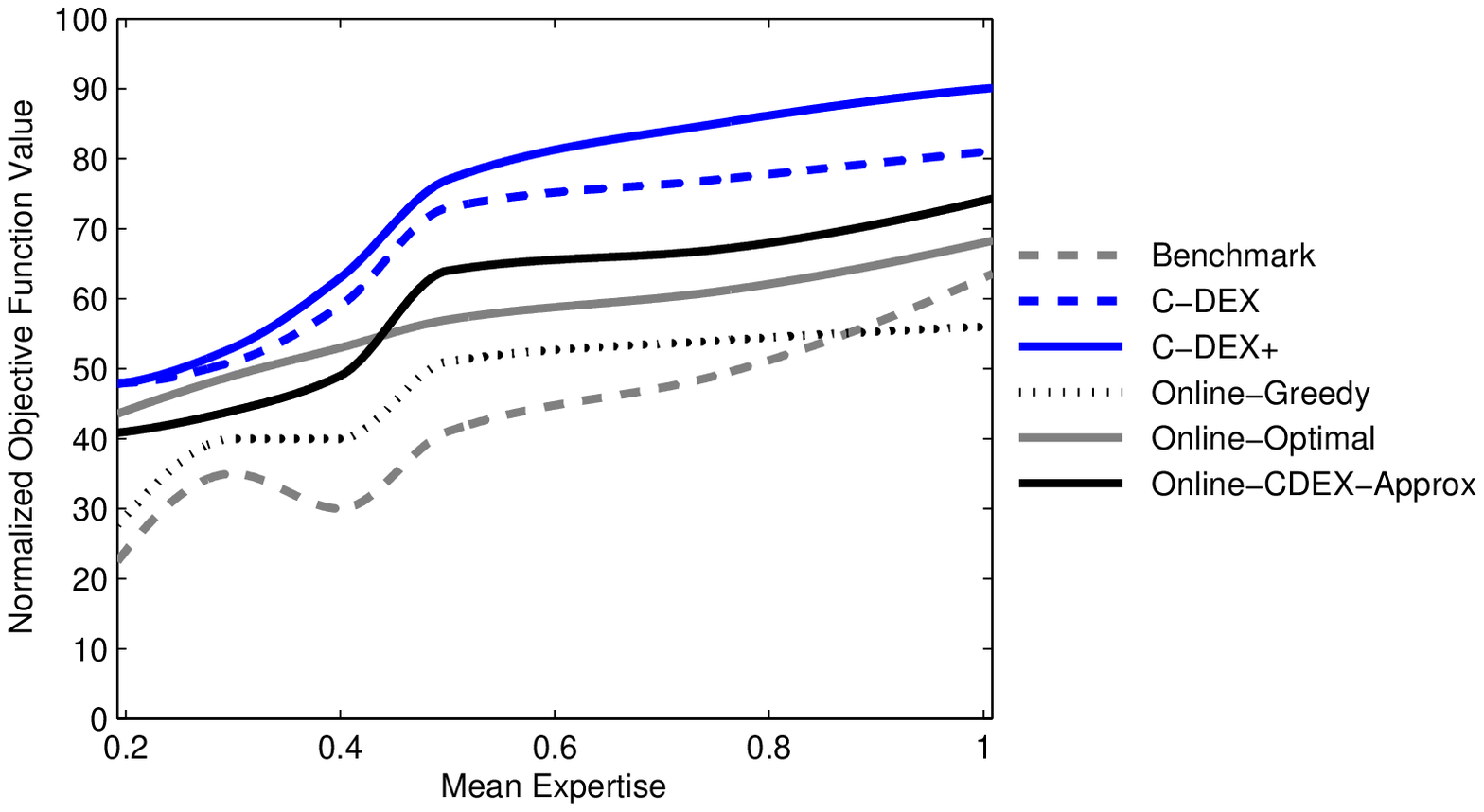}
    \caption{\small{Objective function varying mean skill}}
    \label{fig:12}
\end{minipage}
\hspace{5mm}
\begin{minipage}[t]{0.30\textwidth}
%\centering
    \includegraphics[height=30mm, width = 60mm]{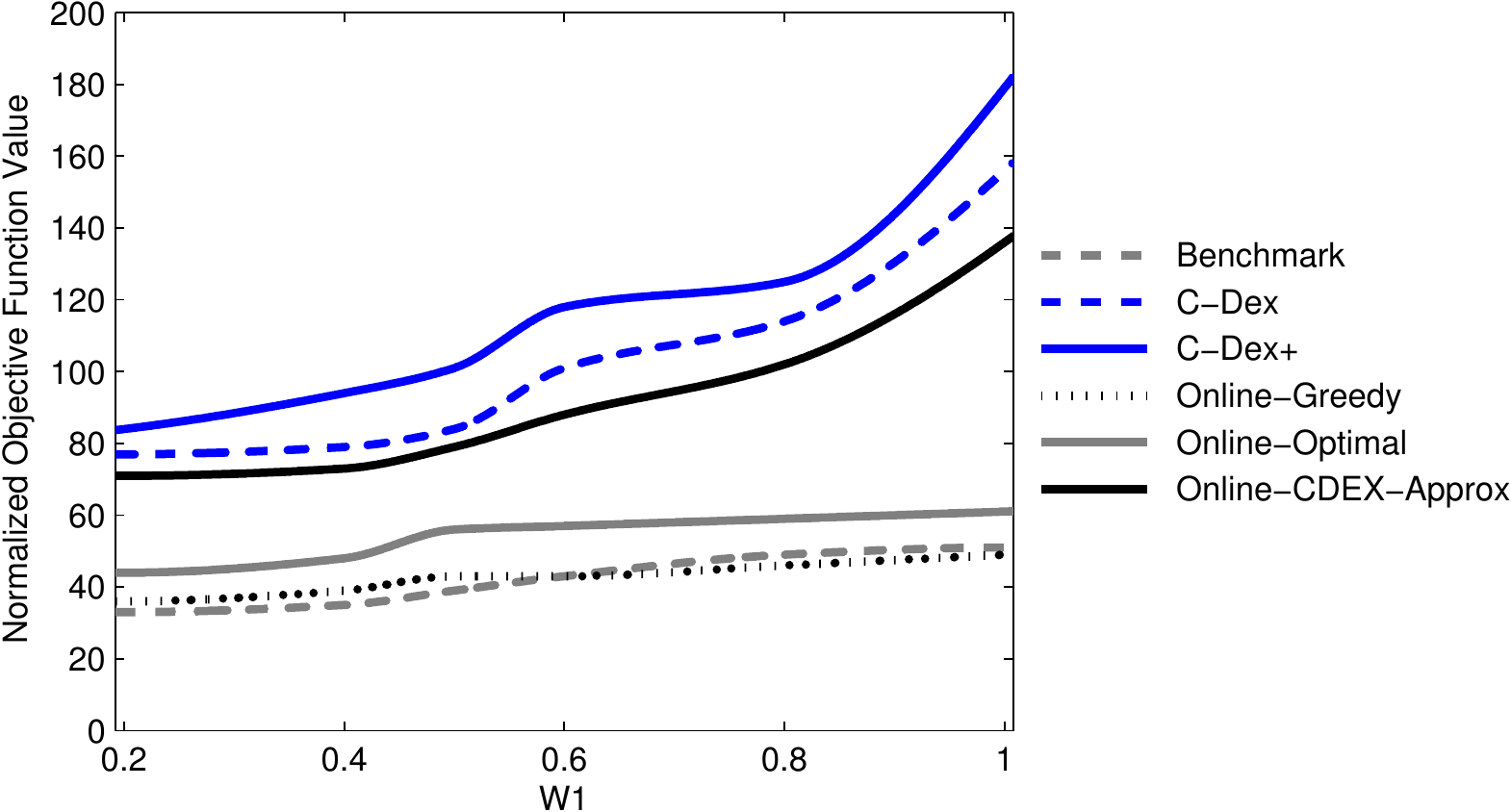}
    \caption{\small{Objective function varying $W_1,W_2$}}
    \label{fig:14}
\end{minipage}
\hspace{5mm}
\begin{minipage}[t]{0.30\textwidth}
%\centering
    \includegraphics[height=30mm, width = 60mm]{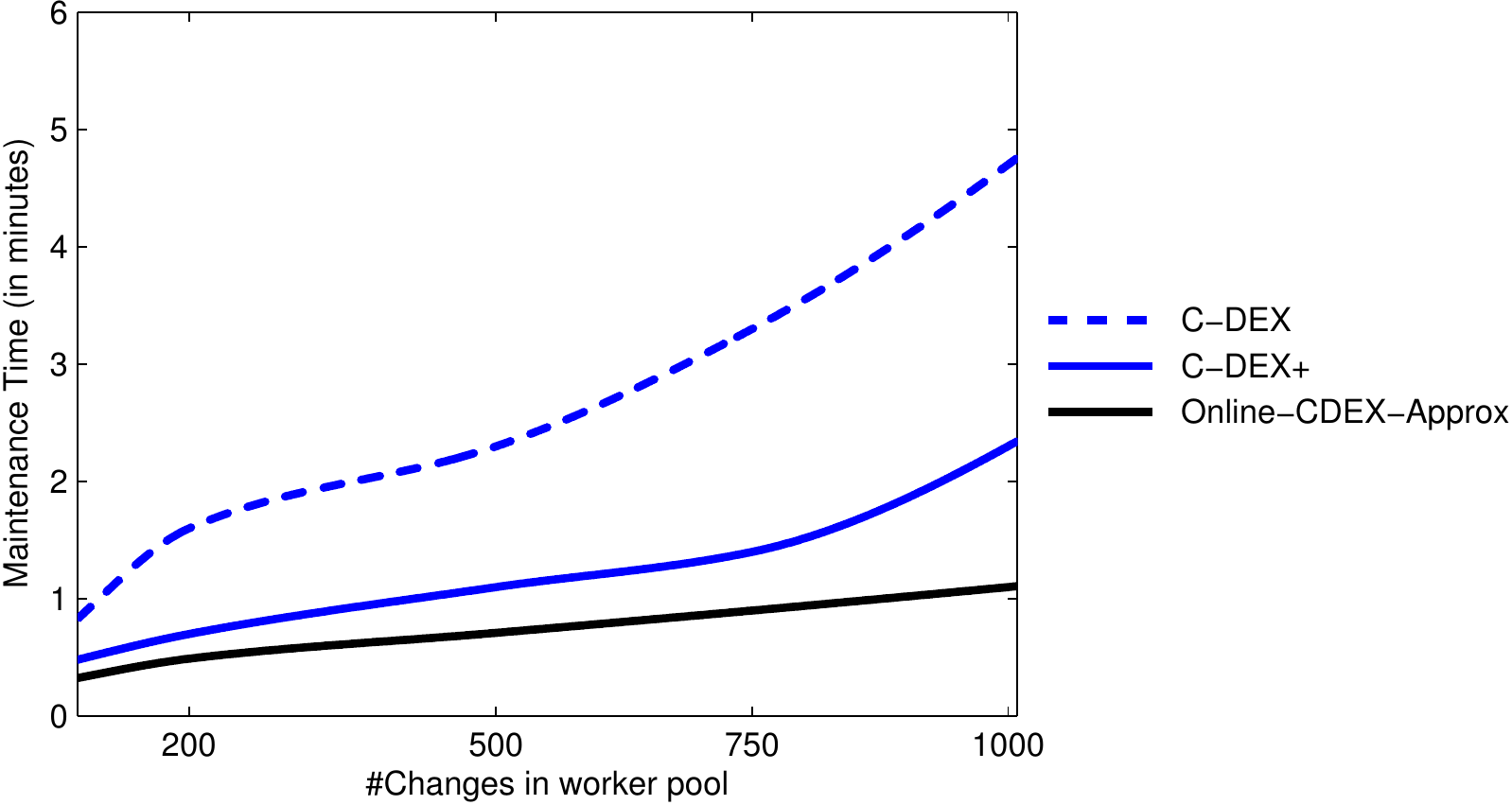}
    \caption{\small{Time for index maintenance varying \# worker addition}}
    \label{fig:15}
\end{minipage}
%\hspace{5mm}
%\begin{minipage}[t]{0.30\textwidth}
%%\centering
%    \includegraphics[height=30mm, width = 60mm]{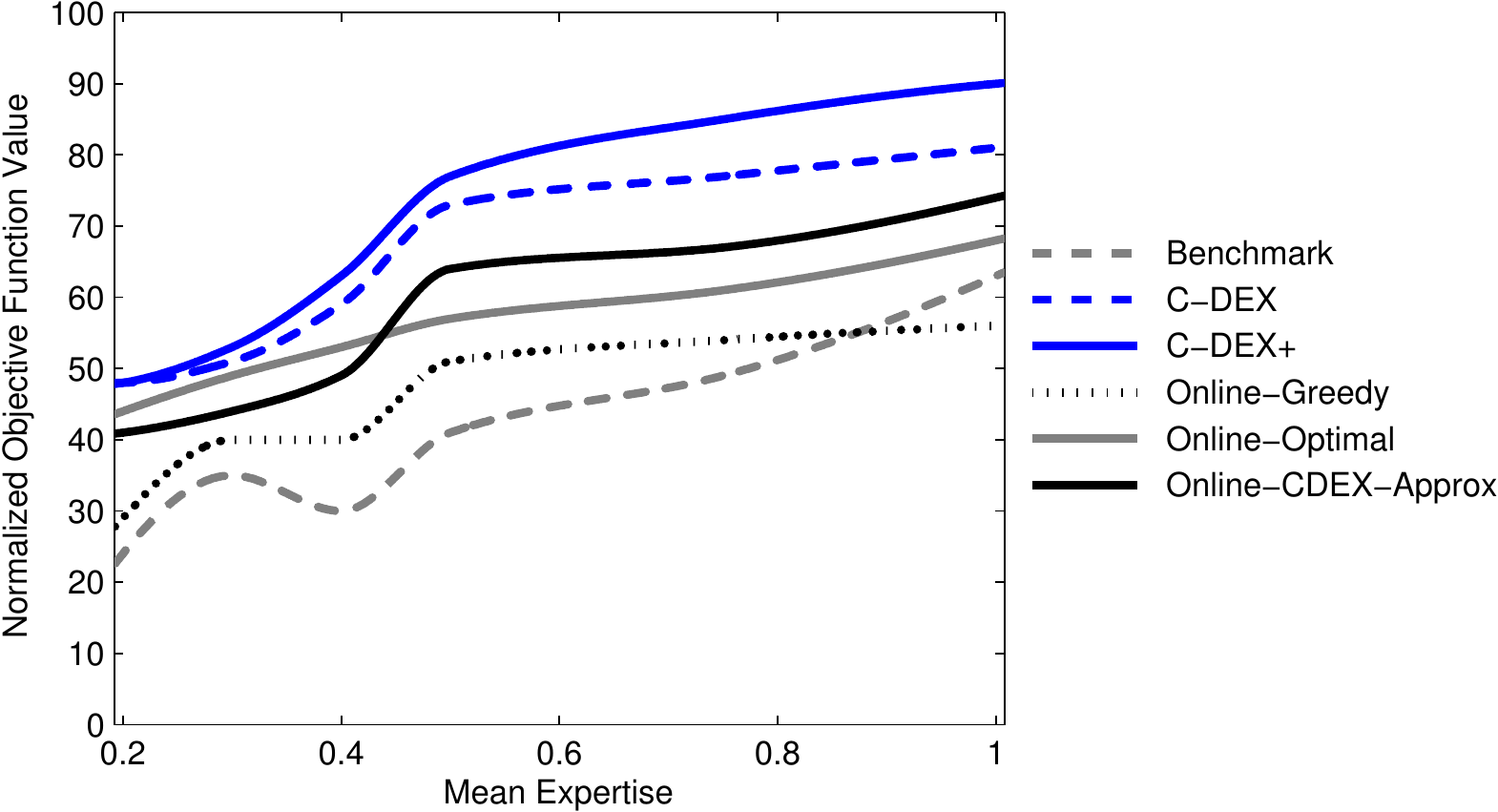}
%    \caption{\small{Objective function varying mean skill}}
%    \label{fig:1a}
%\end{minipage}
%\hspace{5mm}
%\begin{minipage}[t]{0.30\textwidth}
%%\centering
%    \includegraphics[height=30mm, width = 60mm]{figures/synthetic/matlab/fig13StrategyVsObjValue.pdf}
%   \caption{\small{Comparison among \Index strategies qualitatively}}
%    \label{fig:2a}
%\end{minipage}
\end{figure*}

\eat{
begin{figure*}[t]
\begin{minipage}[t]{0.33\textwidth}
\centering
\includegraphics[width=0.33\textwidth]{figures/synthetic/matlab/fig1WorkloadVsClockTime.pdf}
\caption{\small{Index Building Time varying workload}}
\label{fig:1}
\end{minipage}
\hspace{5mm}
\begin{minipage}[t]{0.33\textwidth}
\centering
\includegraphics[width=0.33\textwidth]{figures/synthetic/matlab/fig2SimTimeVsThroughput.pdf}
\caption{\small{Fraction of task assignment varying simulation time}}
\label{fig:2a}
\end{minipage}
\hspace{5mm}
\begin{minipage}[t]{0.33\textwidth}
\centering
\includegraphics[width=0.33\textwidth]{figures/synthetic/matlab/fig3AlgoVsAvgEndToEndTime.pdf}
\caption{\small{Fraction of task assignment after entire simulation period}}
\label{fig:2b}
\end{minipage}
\end{figure*}

begin{figure*}[t]
\begin{minipage}[t]{0.3\linewidth}
\centering
\includegraphics[height=30mm, width = 60mm]{figures/synthetic/matlab/fig4TaskToWorkerArrivalRateVsThroughput.pdf}
\caption{\small{Index Building Time varying workload}}
\label{fig:1}
\end{minipage}
\hspace{5mm}
\begin{minipage}[t]{0.3\linewidth}
\centering
\includegraphics[height=30mm, width = 60mm]{figures/synthetic/matlab/fig5NumSkillsVsThroughput.pdf}
\caption{\small{Fraction of task assignment varying simulation time}}
\label{fig:2a}
\end{minipage}
\hspace{5mm}
\begin{minipage}[t]{0.3\linewidth}
\centering
\includegraphics[height=30mm, width = 60mm]{figures/synthetic/matlab/fig6StrategyVsThroughput.pdf}
\caption{\small{Fraction of task assignment after entire simulation period}}
\label{fig:2b}
\end{minipage}
\end{figure*}

begin{figure*}[t]
\begin{minipage}[t]{0.3\linewidth}
\centering
\includegraphics[height=30mm, width = 60mm]{figures/synthetic/matlab/fig7AcceptanceProbVsThroughput.pdf}
\caption{\small{Index Building Time varying workload}}
\label{fig:1}
\end{minipage}
\hspace{5mm}
\begin{minipage}[t]{0.3\linewidth}
\centering
\includegraphics[height=30mm, width = 60mm]{figures/synthetic/matlab/fig8MeanExpertiseVsThroughput.pdf}
\caption{\small{Fraction of task assignment varying simulation time}}
\label{fig:2a}
\end{minipage}
\hspace{5mm}
\begin{minipage}[t]{0.3\linewidth}
\centering
\includegraphics[height=30mm, width = 60mm]{figures/synthetic/matlab/fig9SimTimeVsObjValue.pdf}
\caption{\small{Fraction of task assignment after entire simulation period}}
\label{fig:2b}
\end{minipage}
\end{figure*}

begin{figure*}[t]
\begin{minipage}[t]{0.3\linewidth}
\centering
\includegraphics[height=30mm, width = 60mm]{figures/synthetic/matlab/fig10NumSkillsVsObjValue.pdf}
\caption{\small{Index Building Time varying workload}}
\label{fig:1}
\end{minipage}
\hspace{5mm}
\begin{minipage}[t]{0.3\linewidth}
\centering
\includegraphics[height=30mm, width = 60mm]{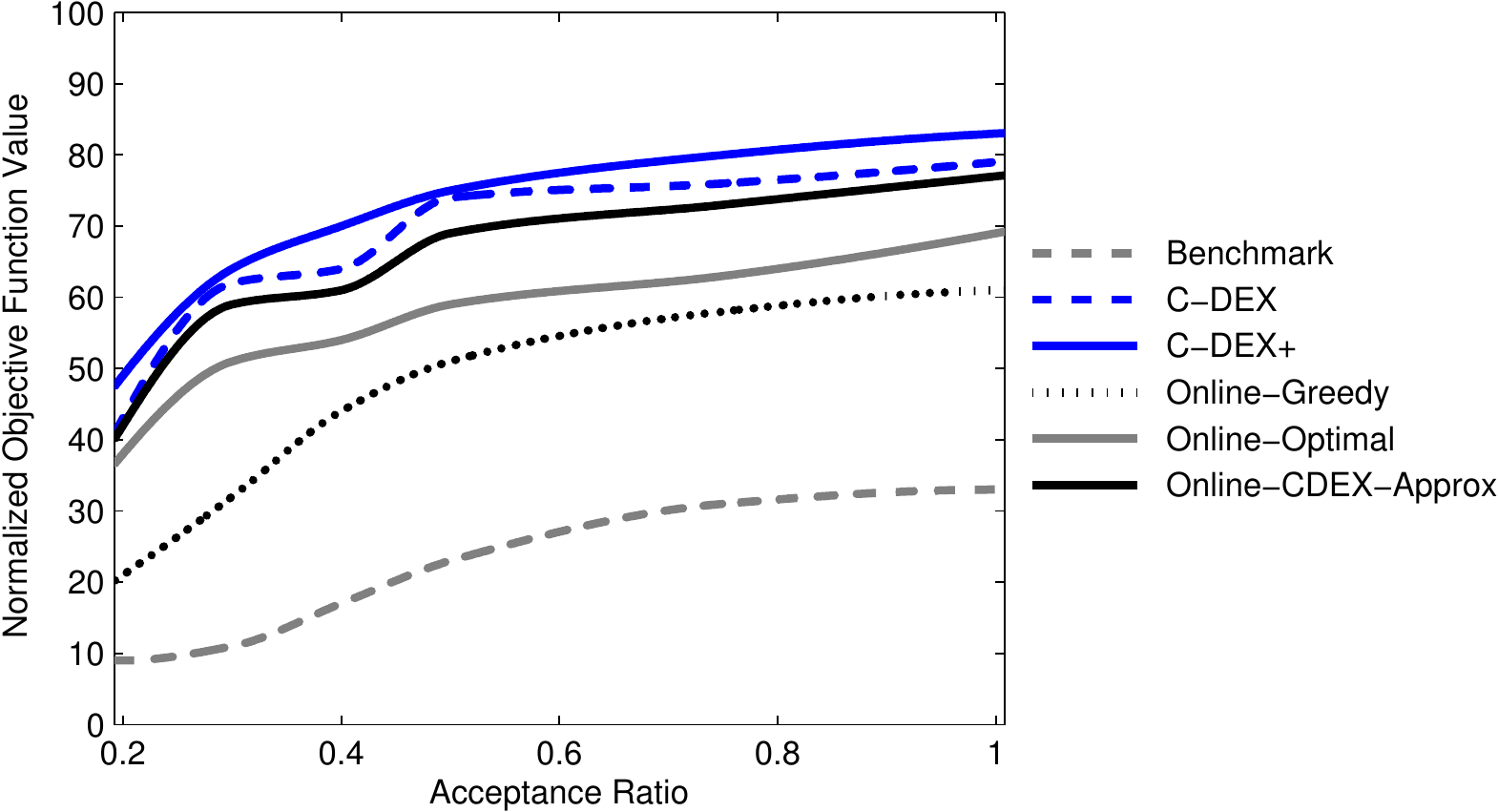}
\caption{\small{Fraction of task assignment varying simulation time}}
\label{fig:2a}
\end{minipage}
\hspace{5mm}
\begin{minipage}[t]{0.3\linewidth}
\centering
\includegraphics[height=30mm, width = 60mm]{figures/synthetic/matlab/fig12MeanExpertiseVsObjValue.pdf}
\caption{\small{Fraction of task assignment after entire simulation period}}
\label{fig:2b}
\end{minipage}
\end{figure*}

begin{figure*}[t]
\begin{minipage}[t]{0.3\linewidth}
\centering
\includegraphics[height=30mm, width = 60mm]{figures/synthetic/matlab/fig13StrategyVsObjValue.pdf}
\caption{\small{Index Building Time varying workload}}
\label{fig:1}
\end{minipage}
\hspace{5mm}
\begin{minipage}[t]{0.3\linewidth}
\centering
\includegraphics[height=30mm, width = 60mm]{figures/synthetic/matlab/fig15MaintenanceVsTime.pdf}
\caption{\small{Fraction of task assignment varying simulation time}}
\label{fig:2a}
\end{minipage}
\hspace{5mm}
\begin{minipage}[t]{0.3\linewidth}
\centering
\includegraphics[height=30mm, width = 60mm]{figures/synthetic/matlab/fig16W1VsObjValue.pdf}
\caption{\small{Fraction of task assignment after entire simulation period}}
\label{fig:2b}
\end{minipage}
\end{figure*}

\begin{figure*}
        \centering
        \begin{subfigure}[a]{0.23\textwidth}
                \centering
                \includegraphics[width=\textwidth]{figures/synthetic/matlab/fig1WorkloadVsClockTime.pdf}
                \caption{Index Design Time}
                \label{fig:fig1a}
        \end{subfigure}
        \begin{subfigure}[a]{0.23\textwidth}
                \centering
                \includegraphics[width=\textwidth]{figures/experiments/fig3AlgoVsAvgEndToEndTime.pdf}
        \caption{Average End-to-End Time}
                \label{fig:fig2b}
        \end{subfigure}
        \begin{subfigure}[c]{0.23\textwidth}
                \centering
                \includegraphics[width=\textwidth]{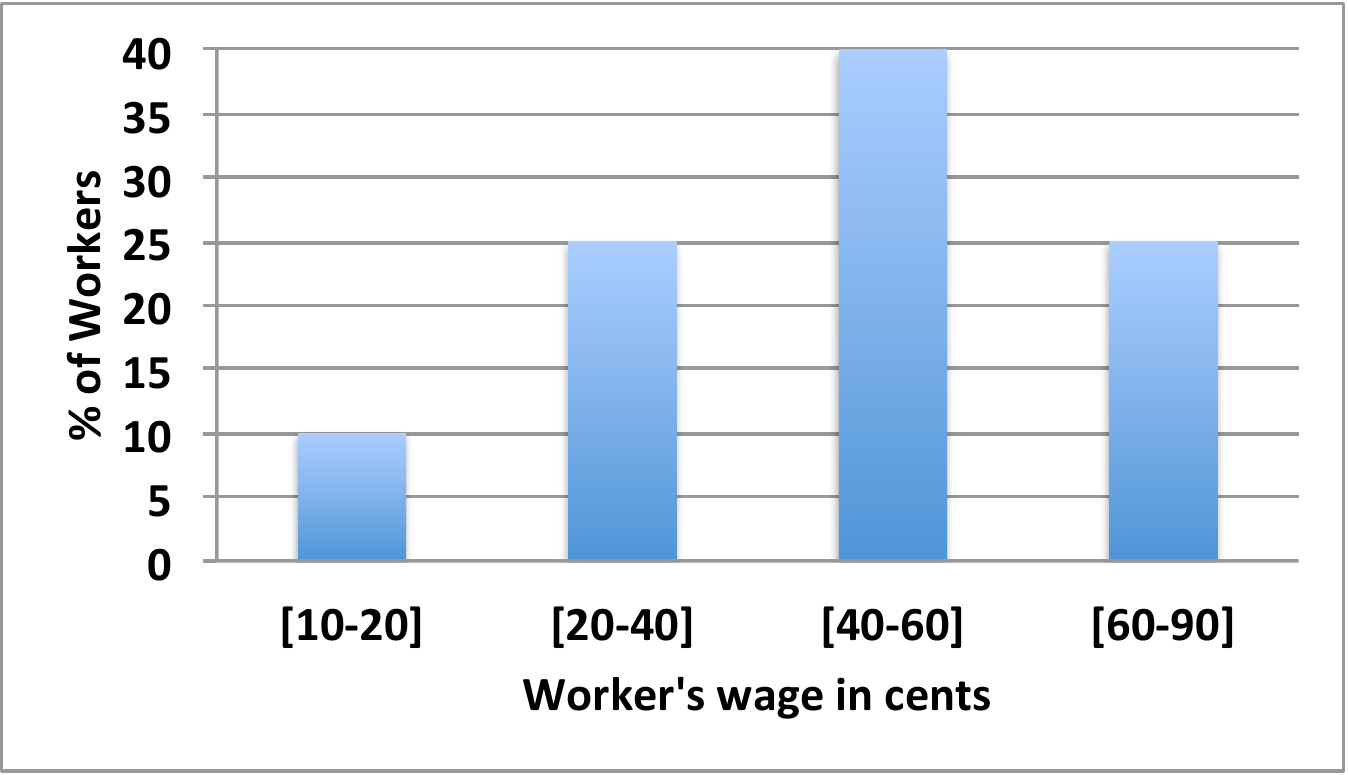}
                \caption{PLACEHOLDER FOR TASK ARRIVAL RATE}
               \label{fig:fig2c}
        \end{subfigure}
         \begin{subfigure}[c]{0.23\textwidth}
                \centering
                \includegraphics[width=\textwidth]{figures/experiments/fig5NumSkillsVsThroughput.pdf}
        \caption{Effect of skill number}
               \label{fig:fig2d}
        \end{subfigure}
         \begin{subfigure}[c]{0.23\textwidth}
                \centering
                \includegraphics[width=\textwidth]{figures/experiments/fig7AcceptanceProbVsThroughput.pdf}
        \caption{Effect of acceptance probability}        
               \label{fig:fig2e}        
        \end{subfigure}
\begin{subfigure}[c]{0.23\textwidth}
                \centering
                \includegraphics[width=\textwidth]{figures/experiments/fig8MeanExpertiseVsThroughput.pdf}
               \label{fig:fig2f}
        \caption{Effect of expertise availability}
        \end{subfigure}
\begin{subfigure}[c]{0.23\textwidth}
                \centering
                \includegraphics[width=\textwidth]{figures/experiments/fig6StrategyVsThroughput.pdf}
               \label{fig:fig2g}
        \caption{Strategy comparison}
        \end{subfigure}
       \caption{Performance experimental results}\label{fig:performance_results}
\end{figure*}

\begin{figure*}
        \centering
        \begin{subfigure}[a]{0.23\textwidth}
                \centering
                \includegraphics[width=\textwidth]{figures/experiments/fig9SimTimeVsObjValue.pdf}
        \caption{Objective Function value vs. Simulation Time}
        \end{subfigure}
        \begin{subfigure}[a]{0.23\textwidth}
                \centering
                \includegraphics[width=\textwidth]{figures/experiments/fig10NumSkillsVsObjValue.pdf}
        \caption{Effect of skill number}
        \end{subfigure}
        \begin{subfigure}[c]{0.23\textwidth}
                \centering
                \includegraphics[width=\textwidth]{figures/userstudy/waged.pdf}
                \caption{PLACEHOLDER FOR TASK ARRIVAL RATE}
                \label{fig:scn3}
        \end{subfigure}
         \begin{subfigure}[c]{0.23\textwidth}
                \centering
                \includegraphics[width=\textwidth]{figures/experiments/fig11AcceptanceProbVsObjValue.pdf}
        \caption{Effect of acceptance ratio}
                \label{fig:scn4}
        \end{subfigure}
         \begin{subfigure}[c]{0.23\textwidth}
                \centering
                \includegraphics[width=\textwidth]{figures/experiments/fig12MeanExpertiseVsObjValue.pdf}
        \caption{Effect of expertise availability}
        \end{subfigure}
\begin{subfigure}[c]{0.23\textwidth}
                \centering
                \includegraphics[width=\textwidth]{figures/experiments/fig8MeanExpertiseVsThroughput.pdf}
         \label{fig:fig8}
        \caption{Effect of expertise availability}
        \end{subfigure}
\begin{subfigure}[c]{0.23\textwidth}
                \centering
                \includegraphics[width=\textwidth]{figures/experiments/fig13StrategyVsObjValue.pdf}
        \caption{Strategy comparison}
        \end{subfigure}
       \caption{Quality experimental results}\label{fig:performance_results}
\end{figure*}
}

\eat{This first experiment aims at exploring the solution space of the MILP problem of the index pre-computation. We are interested in examining the number of workers and tasks that can be added to the index, before the MILP problem of its building reaches infeasibility. 
To this end, we experiment with many combinations of worker-task numbers, gradually increase their cardinality. Results, are shown in Figure 1, where the x axis corresponds to clock time and the y axis corresponds to the worker/task ratio.   

(*NOTE*: SOS: the x axis should have values above and below 1.) 

As we may observe, the area of feasible solutions spans up to XX workers and YY tasks. Figure 1, can also be used to guide our choice of the worker/task numbers that we will optimize for, and build our index with, for the subsequent experiments. In this selection, we should also take into account the number of tasks, in a given time period, is set by the task arrival rate $\lambda$. Therefore, for a task arrival rate of $ \lambda = XX $, which we will use in our simulations (as explained below) the worker/task ratio that gives the most efficient, in terms of clock time, feasible solution is XX workers and XX tasks. 

%Figure2: For a graph such as the above, we can also show the computation times for each x,y pair of workers-tasks.
\subsubsection{On-line phase}
Next we examine the 5 algorithms for the on-line phase, i.e. when workers actually interact with tasks. Unless otherwise stated, the strategy used for C-DEX and C-DEX+ is strategy no. 2. We provide a comparison of these strategies and justify the above selection, at the end of this subsection.

{\bf Experiment 2. Throughput.} 
Figure \ref{fig:fig2a} presents the results. 
We start with an examination of throughput, defined as the number of successful tasks versus total tasks accomplished until the moment of the measurement. Figure \ref{fig:fig2a}, presents the throughput achieved by each of the 5 algorithms, for the total simulation period (14400 units), measured every throughput every XX units. 
As we may observe, the C-DEX, C-DEX+ and the ad-hoc algorithms, all achieve a consistent high throughput. Benchmark achieves, as expected the worst results, i.e. a throughput of 20-30\%. This poor result illustrating the behaviour of typical crowdsourcing platforms, shows the degree of optimization that can be achieved and thus further supports the basic idea of this paper. The greedy online algorithm performs approximately in the middle between the benchmark and the optimization algorithms. Comparing the latter, we may observe thatthe ad-hoc algorithm performs better, until approximately the first half time of the simulation, where it loses to C-DEX+. This can be explained by the fact that .............+HERE WE NEED TO THINK OF AN EXPLANATION, UNLESS WE ACHIEVE BETTER RESULTS WITH SARA.
}

\eat{
{\bf Experiment 3. Average End-to-End time.} 
For each task, we define average end-to-end time as the total time between the time the task arrives and the time that all workers needed for the task have accepted to contribute to it. As we may observe (figure \ref{fig:fig2b}, .....OBSERVATIONS ON THE FINAL FIGURE.

%TODO: Sara how do you measure average end-to-end time for benchmark and greedy? who are the "workers needed for the task" then?

After examining the performance of the simulated systems for the basic scenario, in the following we modify various simulation parameters (one parameter at a time), to examine the algorithms' performance on differentiated conditions of the crowdsourcing system. 
%For all three experiments, we will use the above basic scenario as our basis and change one factor at a time.

{\bf Experiment 4. Effect of task arrival rate.} 

-Figure 2c. (5 line chart) X axis = lamda of Poisson of task arrival rate, Y axis = Throughput
(explanation: we build the index for a certain, fixed number of tasks e.g. 200. Then, changing the arrival rate value, we measure the throughput for 200 tasks each time. I.e. for higher λ the 200 tasks correspond to a smaller time period. We expect that for high λ the other algorithms deteriorate.)

{\bf Experiment 5. Effect of skill number.}

%- Figure 5. (line chart) X axis = skill no., Y axis = throughput

{\bf Experiment 6. Effect of worker acceptance probability.}

%- Figure 7. X axis = acceptance probability  (0.1 - 100%) :, Y axis = throughput

{\bf Experiment 7. Effect of expertise availability.}

%- Figure 8. X axis =  mean expertise value (0.1 - 100%) :, Y axis = throughput

{\bf Experiment 8. Strategy comparison.}

%- Figure 6. (Strategy comparison).  For  now put all strategies into  the above charts. In the end we will choose 
\vspace{-0.2in}
\subsection{Simulation results: Quality Experiments}
All the quality experiments aim at measuring the value of the objective function, in relation to various parameters of the systems' behavior.
We perform 2 basic sets of experiments, one static (where task workload remains fixed and the indices are built without virtual workers) and one dynamic (where the notion of virtual workers is examined).

{\bf Experiment 9. Objective function value}
First we examine the level that each of the 5 compared algorithms meets the problem objective, as defined in eq. 2. To this end, we measure the objective function value during time, for a simulated period of 14400 units (Figure 9). 

As one may observe,.... (normally here we will see that the C-DEX and C-DEX+ perform much better, i.e. their lines are higher on the y axis compared to the 3 baselines)  

E.g.: the objective function value achieved, during the same time span, by the C-DEX, C-DEX+ systems is higher than the respective OF value achieved by the 3 baseline systems, indicating that the C-DEX solution can better meet the objective set for the crowdsourcing system. In addition, one may also observe that the C-DEX approach meets this objective in a timelier manner than this is met by the baseline systems.

%+ Figure 9. Y axis=Objective function value, X axis = Time. Quality experiment. (among no strategy, 1,2 that is the best explain later why)

Given the above results, we now proceed with 3 auxiliary graphs, which examine the impact, on the objective function value, of modifying 3 important system parameters, namely the maximum skill number per task, the worker acceptance probability and the expertise availability. 

{\bf Experiment 10. Effect of task arrival rate.} 
As we may observe ....

+ Figure 10. X acis= lamda of Poisson distribution,  Y axis= OF value

{\bf Experiment 11. Effect of skill number.}

%+ Figure 11. X axis= skill no. Y axis= OF value

{\bf Experiment 12. Effect of worker acceptance ratio.} 

%+ Figure 12. X axis = acceptance probability  (0.1 - 100\%), Y axis = OF value

{\bf Experiment 13. Effect of expertise availability.}

%+ Figure 13. X axis =  mean expertise value (0.1 - 100%) :, Y axis = OF value

{\bf Experiment 14. Strategy comparison}

%+ Figure 14. (Strategy comparison). 

\subsection{Simulation results: Maintenance Experiments}

Last, we focus our experiments on index maintenance. ...

Figure 15. x-axis : vary number of added workers, vary number of deleted workers, Y axis= CPLEX index computation computational time. CPLEX only experiment. Comparison between virtual workers and non virtual workers.}

\vspace{-0.1in}
\section{Related Work}\label{rel}
%rarely attempt active plurality optimization, i.e. to suggest tasks to workers, in order to improve task accuracy. Instead, they Bachrach2009,

A growing number of crowdsourcing systems are available nowadays, both as commercial platforms (like AMT and CrowdFlower) or for academic use. Examples of applications include sentence translation, photo tagging and sentiment analysis, but also query answering (CrowdDB \cite{DBLP:journals/pvldb/FengFKKMRWX11}, Qurk \cite{qurk}, Deco \cite{deco}, sCOOP, FusionCOMP, MoDaS, CyLog/Crowd4U), or entity resolution(such as CrowdER \cite{DBLP:journals/pvldb/WangKFF12}), planning queries \cite{tovaVLDB2013}, perform matching \cite{DBLP:conf/sigmod/WangLKFF13}, or counting \cite{countcrowd}. A common element shared across the above crowdsourcing systems is that the tasks that they handle are micro-task/binary. As such, the handling of these tasks does not necessitate collaboration, but plurality optimization. According to this, many workers are appointed to each micro-task, in order to identify the task's ``true value", by means of majority voting \cite{Ipeirotis:2010:QMA:1837885.1837906} or more sophisticated techniques. The optimization problem in that case is to select the correct workers to identify the true values efficiently, with as low cost as possible. Conversely, commercial systems typically allow workers to self-appoint themselves to tasks, and then apply worker filtering (based on reputation mechanisms, screening mechanisms \cite{Downs:2010:YPG:1753326.1753688}, pre-qualification tests, or ``golden data'' \cite{Josang:2007:STR:1225318.1225716}) as a means of ensuring task quality. Another means of passive quality assurance deals with refining task quality evaluation after the tasks are
completed \cite{whitehill}, or being completed
\cite{adityaReliableWorkers, DBLP:conf/sigmod/GuoPG12, DBLP:conf/icde/BoimGMNPT12}. Very recent research studies try to actively improve plurality optimization through mechanisms that suggest tasks to workers \cite{journals/corr/abs-1110-3564, DBLP:conf/icde/BoimGMNPT12,DBLP:conf/sigmod/ParameswaranGPPRW12}. Apart from plurality optimization, other optimization problems examined by current literature aim at improving the application's response time \cite{journals/corr/abs-1204-2995, Liu:2012:CCD:2336664.2336676} for micro-task/binary crowdsourcing.

Knowledge-intensive crowdsourcing (KI-C) \cite{Kittur:2013:FCW:2441776.2441923} handles tasks related to knowledge production, such as article writing, decision-making, science journalism. These tasks require a ``collaboration" among workers rather than their voting. 
%As such, the notion of quality for a KI-C task is no longer binary, i.e. accuracy-related, but it is measured in a continuous scale as the aggregation of the knowledge contributions of the workers that collaborate on the task. 
%The optimization problem in this case is to select the correct set of workers for each task, who will collaboratively produce high-quality KI-C tasks at low cost. 
Our problem bears some resemblance with existing team formation problems in social networks (SN)~\cite{www12}, in the sense that here too users are grouped together with the purpose of collaboration on a set of tasks. There are however two critical differences: whereas SN-based team formation relies on user affinity within the social network, crowdsourcing entails a huge scale of diverse worker pool unknown to each other, who do not necessarily need the {\em synergy} of a ``team'' to work together (e.g., a Wikipedia-style of work can be used).
% In KI-C work can be organized in a manner similar to Wikipedia, where users interact with a centralized platform, but still collaborate and build on each other’s knowledge. 
Second, KI-C deals with unique challenges related to human factors in a dynamic environment, which is rarely seen for SN-based team formation. 

Although recent works \cite{Kulkarni:2012:CCW:2145204.2145354} acknowledge that more sophisticated methods of crowd coordination and optimization are needed to handle tasks that are knowledge-intensive, no work to the best of our knowledge does so. Our contribution is one of the first ever attempts to address this gap. %In this work introduce the formulation of the KI-C problem optimization, examine it theoretically and propose appropriate optimal and approximate strategies to solve it. 

\eat{
%%%%%%%%%%%%%%%%%%%%%%%%%%%%%%%%%%PREVIOUS VERSION OF RELATED WORK

To the best of our knowledge, no prior work optimizes knowledge-intensive crowdsourcing by incorporating human factors, nor do they propose  adaptive indexing. \\

%Below we discuss  related works, albeit tangentially related to our settings.
%The classical assignment problem solves optimal task allocation
%under complete knowledge of tasks and worker skills.  However,
%crowdsourcing adds a number of challenges such as partial knowledge
%of worker skills, task complexity, budget and adaptiveness to new knowledge, and most %importantly human factors.  Recent works
%have begun to explore the problem of crowdsourcing specific task
%allocation under specific constraints \cite{chienJuHo,kargerBudget}.
{\bf Binary/Micro-task based Applications:}
%Crowdsourcing systems have transitioned from being used as tools for research surveys into a research topic itself. 
%A survey of online crowdsourcing systems and applications can
%be found in [\cite{crowdSurvey}]. 
A growing number of academic crowdsourcing systems focus on applications, such as
answering queries (CrowdDB \cite{DBLP:journals/pvldb/FengFKKMRWX11}, Qurk \cite{qurk}, Deco \cite{deco}, sCOOP, FusionCOMP, MoDaS, CyLog/Crowd4U), or entity resolution(such as CrowdER \cite{DBLP:journals/pvldb/WangKFF12}, \cite{ilprints1047}), planning queries \cite{tovaVLDB2013}, perform matching \cite{DBLP:conf/sigmod/WangLKFF13}, or counting \cite{countcrowd}. 
%A recent paper studies the foundations of data mining enabling crowd \cite{tovaSigmod2013}.
%\subsection{Non collaborative crowdsourcing for decomposable tasks}
%A survey of online crowdsourcing systems and common applications can
%be found in \cite{crowdSurvey}.  Additionally, a growing number of
%academic crowdsourcing systems focus on specific niches such as
%answering database queries (CrowdDB, Qurk, Deco, sCOOP, FusionCOMP, MoDaS, CyLog/Crowd4U), or entity resolution (Crowder).
\eat{Almost all of these aforementioned works consider that the task to be binary (or n-ary), or non-decomposable micro-tasks, where each micro-task has a ``true'' value  (or can receive a value among a closed set of possible answers). Due to this characteristics, the contribution of each worker to a micro-task is, in the end, either ``correct'' or ``incorrect'' (binary) in approximating the true value of the micro-task. }
%of binary decomposable crowdsourcing tasks incude image recognition, assessing of word similarity, word sense disambiguation, document relevance to search queries \cite{Grady:2010:CDR:1866696.1866723}  etc.
\eat{ [*Note* Here add some references of applications]. Take as an example image recognition and assume a task of crowdsourcing the recognition of 10,000 images (e.g. answering the question "does this image contain people?"). This task is easily de-composable to 10,000 micro-tasks (each micro-task being one image) and each worker can either correctly or incorrectly contribute (i.e. answer the question). The same simple scenario is often extended to n-ary classification problems (e.g. "which among the following sentiments does this image describe?") etc. More complex, tasks, such as audio transcription, or text translation are also treated by current platforms as binary, first decomposing them (usually manually) and then positioning them as multiple-choice tasks (i.e. transforming them into n-ary classification problems ).}

The above crowdsourcing setting \emph{does not necessitate collaboration, but plurality optimization}.  Many workers are appointed to each micro-task, in order to identify the task's ``true value'', by means of majority voting \cite{Ipeirotis:2010:QMA:1837885.1837906}, or more sophisticated techniques. We introduce the formulation of a new type of crowdsourcing problem (KI-C), one which assumes ``quality'' of a task rather than ``accuracy'' and which requires ``collaboration'' among workers rather than their voting.

%To the best of our knowledge, no prior work acknowledges worker skill in the computation process. \eat{In other words, since worker skills are binary per micro-task, workers do not need to ``collaborate'' to enable the overall task to reach a satisfactory quality level (quality here perceived as the percentage of correctly approximated micro-tasks). } Conversely, we assume ``collaboration'' among workers, and our proposed framework relies on deliberate acknowledgement of multiple skills of the workers. Furthermore, we consider ``quality'' of a task rather than ``accuracy'', and the quality is correlated to workers' skills. \eat{Linked to the perception of tasks as decomposable and close-ended, is the treatment of crowdsourcing in a passive approach.}\\

{\bf Passive/Active crowdsourcing  (pull/push model):}
\eat{
In the passive crowdsoucing, the platform allows workers to enter and select tasks to accomplish. An apparent problem that often arises, is free-riding, where the workers intend to increase their revenue by lowering the quality of their contribution and maximizing the quantity of their response \cite{Ipeirotis:2010:QMA:1837885.1837906,Vukovic:2009:CE:1590963.1591514}.  Quality control is usually the primary concern for such cases. }
Current platforms, such as AMT, uses reputation mechanisms, \emph{screening mechanisms} \cite{Downs:2010:YPG:1753326.1753688}, pre-qualification tests, or ``golden data'' \cite{Bachrach2009,Josang:2007:STR:1225318.1225716} for quality control that are passive mechanism.  In the absence of such data, prior research models
``quality'' as an unknown parameter to be estimated either after the tasks are
completed \cite{whitehill}, or being completed
\cite{adityaReliableWorkers, DBLP:conf/sigmod/GuoPG12, DBLP:conf/icde/BoimGMNPT12}. A stark difference though, none of these works acknowledge multiple skills of the workers in the quality estimation, nor do they assume human factors in modeling.
%Kasneci:2010:BKC:1888305.1888307
In active crowdsourcing, workers are not allowed to self-select tasks, but instead they are given task recommendations.  Existing research greatly advocates this model~\cite{DBLP:conf/icde/BoimGMNPT12,DBLP:conf/sigmod/ParameswaranGPPRW12}, albeit for binary/micro-tasks only.\\
%With this approach, the problems of free-riding, and therefore the need for redundancy of workers, diminishes
%In active crowdsourcing, workers are suggested tasks to contribute to by the system. 
%Task reccomendations to workers
\eat{Yuen et al. \cite{Yuen:2012:TRC:2442657.2442661,Yuen:2012:TPM:2428413.2428477,Yuen:2011:TMC:2085036.2085206} propose a matrix factorization-based approach that utilizes the worker's completed task and performance history, as well as their search history to identify which tasks the workers would be mostly interested in contributing. They claim that mining task preferences and providing favorite recommendations would be very beneficial in increasing faster the quality output in crowdsourcing platforms.
+ Other works on task recommendations for crowdsourcing???}
%\subsection{Collaborative crowdsourcing}

{\bf Knowledge Intensive Crowdsourcing:}
%Open-ended, non-decomposable tasks, such as article writing, decision-making, science journalism, knowledge-building etc \cite{Kittur:2011:CCC:2047196.2047202}.
%These tasks are not easily decomposed into simpler tasks. %Kulkarni:2011:TAR:1979742.1979865
 Recent works \cite{Kulkarni:2012:CCW:2145204.2145354} acknowledge that more sophisticated methods of crowd coordination and optimization are needed to handle tasks that are knowledge intensive, such as article writing, decision-making, science journalism. According to these works, the necessity is for methods that perform decomposition of the complex tasks to smaller sub-tasks, and then a re-composition. 
%Related to this, the work Kulkarni et el. proposes a method of using the crowd to help decompose the workflow design of a complex task, so that then it can be insterted in one of the current crowdsourcing platforms . Although this work does suggest a way to treat complex tasks, it tries to fit them into the "old" model of close-end, decomposable tasks (into micro-tasks). Our approach does not necessitate task decomposition.
Conversely,  task decomposition and re-composition is not necessary for us. We propose workers {\em collaborations}, instead.\\
%because tasks are treated sequentially by workers and at each time their quality is measured. In other words, we do not try to treat this new problem, of non-decomposable tasks, with old methods.)
%Most research studies, as mentioned above, work in a non-collaborative manner, i.e. workers contribute to each task from scratch and do not build on one another's contributions. 
\eat{
Very recent visionary works however indicate a different model of crowdsourcing, one that can be called collaborative. 
In our view, non-decomposable tasks should be treated in a collaborative manner. 

In our view non-decomposable tasks can not be considered to have one "true value" (or a predefined set of true values) and most importantly, worker skills can not considered to be binary, in regards to this true value. Instead, task quality and subsequently worker skills should be measured in a continuous scale. As a result a tasks' quality can increase by sequentially combining the skills of many workers on the task, i.e. each worker building upon the result of work of the previous worker on the task, with the quality of the task changing after the contribution of each worker. A continuous rather than binary scale for worker skill also implies the presence of different levels of expertise among the workers, on the task (or on the "topic"/"category" that the task belongs). The most apparent application for this type of crowdsourcing applications is knowledge-building or knowledge-intensive crowdsourcing. Wikis and Wikipedia (the most prominent application of wikis) are a good example of this type of crowsdourcing. Users in Wikipedia, with different knowledge backgrounds on the topic of an article (task) contribute, gradually increasing the quality level of each task/article. The main purpose of collaborative, knowledge-intensive crowdsourcing is therefore not to identify the one "true value" of each crowdsourced task, but rather to increase its quality. And, in contrast to wikipedia users who make their contributions voluntarily, knowledge-building crowdsourcing would imply workers, of different expertise levels, who are paid to contribute to tasks, with an aim to increase the quality of the latter.
Crowdsourcing non-decomposable tasks
+ Papers? Do they really treat the tasks as non-decomposable, or try nonetheless to decompose them?}

{\bf Optimization:}
No work so far seems to perform global optimization for knowledge-intensive tasks, such as ours.
%considering multiple facets, such as quality considering different skills, costs, acknowledging human factors. 
Optimization is mostly designed for specific application type, considering a limited settings. Bernstein et al. \cite{journals/corr/abs-1204-2995} proposes realtime crowdsourcing for micro-tasks. %They use queuing theory to maximize response time (as a performance metric) while minimizing cost.
The objective is to improve response time by retaining an available workforce of workers, but with the minimum cost. %They examine 3 techniques for that, namely push notifications (messages) to workers, shared (between many jobs) pools of workers, and precruitment (recruitment before a task arrives). However, their work focuses on  realtime crowdwsourced tasks, of close-end nature and not open-ended tasks. As such, their performance metric is not increasing quality but to ensure the exact necessary number of workers to prevent task starvation and ensure the correct approximation of the task's true value.  
Karger et al. \cite{journals/corr/abs-1110-3564} proposes maximization of the reliability for micro-tasks. Liu et. al. \cite{Liu:2012:CCD:2336664.2336676} proposes a quality-sensitive framework for binary tasks (such as photo tagging, sentiment analysis), where the objective is to reduce waiting time, where the desired accuracy of the tasks is predetermined by the historical performance of the workers. Other existing works that design solutions with optimization objectives also devoid of genericity \cite{DBLP:conf/icde/BoimGMNPT12,DBLP:conf/sigmod/ParameswaranGPPRW12}.
%\eat{Our work focuses on collaborative knowledge intensive tasks, which relies on {\em quality}, rather than {\em accuracy}.
\eat{Karger et al. \cite{journals/corr/abs-1110-3564} work  on an optimization problem formulation for binary tasks. In their setting, the performance objective is the maximization of the reliability(percentage of tasks, among a set of crowdsourced tasks, for which the "true value"/correct answer was found) while minimizing the cost, which due to the nature of the tasks handled increases because of the need of redundancy in task assignments to multiple workers. Their relevance with our work is that they also try to optimize a global target and that they suggest tasks to workers. Their difference is that, because of their problem setting (binary tasks), their solution inherently is based on the idea of assigning the exact same task multiplle times to multiple workers. Our problem setting is different: because of our problem setting (non-decomposable, continuous quality tasks), each task is assigned to workers sequentially, and the contribution of each worker changes the task, i.e. it increases its quality. Our objective therefore here is to find the optimal sequence of workers that will enable the task to reach a certain quality threshold, while minimizing cost and other constraints.

[*Note* This is a very interesting work, from which we can take some ideas, if correctly transformable into our problem setting..]
Given the problem of crowdsourcing sets of binary tasks, and a fixed budget per set, Moy et al. \cite{Mo2013} work on determining the optimal number of workers (plurality) for each set of tasks, so that the overall answer quality is maximized. Again in this work, by quality they means percentage of correct answers. They employ dynamic programming to solve this plurality assignment problem.

Ho et al. \cite{ho:online}.... [*NOTE*: This is amongst the most similar works to ours. Need to carefully analyze its differences.] 

\paragraph{Applications of crowdsourcing beyond typical ones}
Query answering in databases through crowdsourcing--> Franklin et al . \cite{Franklin:2011:CAQ:1989323.1989331}
Document proofreading, Bernstein et al. \cite{Bernstein10soylent:a}

\paragraph{Other works, notes and observations that are interesting to add}

Discover optimal wage for workers --> Horton et al . \cite{Horton:2010:LEP:1807342.1807376}

About interactive, real-time crowsdwourcing \cite{Bernstein:2011:CTS:2047196.2047201}) seek to make the crowd able to handle interactive tasks. *Note that:* Knowledge-building tasks do not necessarily need to be interactive, i.e. respond in 2 seconds( e.g. Wikipedia is not) but they could have deadlines for knowledge production.

*Note*: Please note that active and passive crowdsourcing, i.e. task pull and push, are not necessarily a subsection of non-colaborative crowdsourcing. They are just another distinction. However, most non-collaborative crowdsourcing works with task pull (passive model) whereas the collaborative could work with both.

-------------

Recently, crowdsourcing systems have transitioned from being used as tools for research surveys into a research topic itself. 
%Numerous crowdsourcing systems exist ranging from commercial to academic or general purpose to niche. 

Data procurement is one of the most important and challenging aspects of crowdsourcing~\cite{crowdSurvey}. Some recent works
(e.g., ~\cite{DBLP:journals/pvldb/DoanFKK11, DBLP:journals/pvldb/FengFKKMRWX11, DBLP:conf/cidr/MarcusWMM11, DBLP:conf/cidr/ParameswaranP11}) suggest the construction of database
platforms where some components of the database (e.g., val-
ues, query parts) are crowdsourced. In such platforms, the
choice of crowd questions by the query plan has to take into
account various aspects such as the monetary cost, latency,accuracy, etc. For the same reasons, in the current proposal, we attempt to maximize the accuracy of the output while
asking a minimal number of crowd questions. The works of
~\cite{DBLP:conf/icde/BoimGMNPT12, DBLP:conf/www/DeutchGKM12} deal with deciding which user should be asked what
questions, with the particular goal of minimizing uncertainty.
However, the settings for are more complicated for us:
we do not know in advance what knowledge we are looking
for, and in particular, which questions may be asked, or what
the answers might be (for open questions). 

To support the collection of open-world data, several platforms (e.g., Wikipedia) allow users to decide what data
to contribute. Such platforms may collect an unbounded
amount of valuable data, solely depending on contributor efforts. In contrast, our algorithm poses targeted questions,
which help users recall information they may have not provided spontaneously. The knowledge gain maximization at each question leads, in particular, to user effort minimization.
Our motivation for turning to the crowd is similar to that
of surveys. In surveys, the set of questions is usually small,
and is chosen in advance, typically by experts (\cite{surveybook}).
However, in an open world, it may be hard to know what to
ask in advance. We deal with this difficulty by dynamically
constructing questions based on collected knowledge.

%it is possible to perform sophisticated analysis over the
%worker responses to estimate their skills.  P

The classical assignment problem solves optimal task allocation
under complete knowledge of tasks and worker skills.  However,
crowdsourcing adds a number of challenges such as partial knowledge
of worker skills, task complexity, budget and adaptiveness to new knowledge, and most importantly human factors.  Recent works
have begun to explore the problem of crowdsourcing specific task
allocation under specific constraints \cite{ chienJuHo,kargerBudget}.}

%% A recent work
%% \cite{Liu:2012:CCD:2336664.2336676} proposes probabilistic worker
%% skill estimation models, based on the workers’ past performance,
%% considering potential deviations in worker performance. Another recent
%% work studies the “egoistic” profit-oriented objectives of individual
%% workers to incentivize them (e.g. by properly adjusting wages) in
%% order to calibrate algorithms that approximate the ground truth
%% related to the crowdsourcing task
%% \cite{Cavallo:2012:ECC:2343776.2343793}. Benefit of explicit feedback
%% and information exchange between workers is studied
%% \cite{Dow:2012:SCY:2145204.2145355,Huang:2013:DHC:2470654.2470743} to
%% improve worker self-coordination, but no existing research
%% incorporates these aspects in a dynamic and interactive environment,
%% nor are there optimized solutions for ground truth discovery,
%% considering human factors.
}
\vspace{-0.15in}
\section{Conclusion}\label{conc}
We propose \sys, a unified framework for optimizing worker-to-task assignment in knowledge intensive crowdsourcing. \sys\ formalizes the optimization objective and designs principled optimal and approximate solutions considering multiple skills and cost, which is flexible enough to be adapted to different applications.  Unlike existing works, \sys\ makes a deliberate acknowledgement of human factors in designing the solutions. \sys\ relies on a set of pre-computed indexes, and uses them adaptively to enable effective worker-to-task assignment. The uniformity  is illustrated in handling different scenarios with appropriate adaptations. Finally, the effectiveness of \sys\ is validated through extensive real-data and synthetic experiments, considering both quality and performance.
\vspace{-0.05in}

\vspace{-0.15in}
\def\thebibliography#1{
  \section*{References}
 %\normalsize                  % smaller; put \normalsize after bib --dt
 \small
 %\scriptsize
  \list
    {[\arabic{enumi}]}
    {\settowidth\labelwidth{[#1]}
     \leftmargin\labelwidth
     \parsep 1pt                % tighter --dt
     \itemsep 0.6pt               % tighter --dt
     \advance\leftmargin\labelsep
     \usecounter{enumi}
    }
  \def\newblock{\hskip .11em plus .33em minus .07em}
  \sloppy\clubpenalty10000\widowpenalty10000
  \sfcode`\.=1000\relax
}
\bibliographystyle{abbrv}
\bibliography{paperbib,paperbib1}  

\end{document}